\theoremstyle{definition}
\colorlet{shadecolor}{gray!20}
\algnewcommand{\algorithmicand}{\textbf{ and }}
\algnewcommand{\algorithmicor}{\textbf{ or }}
\algnewcommand{\OR}{\algorithmicor}
\algnewcommand{\AND}{\algorithmicand}
\newtheorem{theorem}{Theorem}
\newtheorem{corollary}[theorem]{Corollary}
\newtheorem{lemma}[theorem]{Lemma}
\newtheorem{definition}[theorem]{Definition}
\newtheorem{remark}[theorem]{Remark}
\newcommand{\GE}[1]{\succ_{#1}}
\newcommand{\last}{\textit{last}}
\newcommand{\first}{\textit{first}}
\newcommand{\order}{\textit{order}}
\newcommand{\sorted}{\textit{sorted}}
\newcommand{\elec}{\mathcal{E}}
\newcommand{\determine}{\textsc{Determine}}
\newcommand{\splitline}{\textsc{SplitLine}}
\newcommand{\sortcandidates}{\textsc{SortCandidates}}
\newcommand{\sortvoters}{\textsc{SortVoters}}
\newcommand{\sortcandiatesandvoters}{\textsc{SortCandiatesAndVoters}}
\newcommand{\Det}{C^*}
\newcommand{\New}{C_{new}}
\newcommand{\prefix}{P}
\newcommand{\core}{core}
\newcommand{\Core}{Core}
\renewcommand{\epsilon}{\varepsilon}
\newcommand{\alg}{\texttt{ALG}}
\newcommand{\opt}{\texttt{OPT}}
\newcommand{\copt}{\ensuremath{c_{\text{opt}}}}
\newcommand{\cost}{\texttt{cost}}
\newcommand{\dist}{\texttt{1-distortion}}
\newcommand{\redcross}{{\color{red}\text{\sffamily X}}}
\newcommand{\greencheck}{{\color{Green}\checkmark}}
\title{Bi-Criteria Metric Distortion}
\author{ 
Kiarash Banihashem\thanks{Computer Science Department, University of Maryland, College Park. {\tt kiarash@umd.edu}.}
\and
Diptarka Chakraborty\thanks{School of Computing,
National University of Singapore. {\tt diptarka@comp.nus.edu.sg}.}
\and 
Shayan Chashm Jahan\thanks{Computer Science Department, University of Maryland, College Park. {\tt schjahan@umd.edu}.}
\and 
Iman Gholami\thanks{Computer Science Department, University of Maryland, College Park. {\tt igholami@umd.edu}.}
\and 
MohammadTaghi Hajiaghayi\thanks{Computer Science Department, University of Maryland, College Park. {\tt hajiagha@umd.edu}.}
\and 
Mohammad Mahdavi\thanks{Computer Science Department, University of Maryland, College Park. {\tt mahdavi@umd.edu}.}
\and 
Max Springer\thanks{Mathematics Department, University of Maryland, College Park. {\tt mss423@umd.edu}.}
}
\date{}
\begin{document}

\maketitle
\begin{abstract}
Selecting representatives based on voters' preferences is a fundamental problem in social choice theory. While cardinal utility functions offer a detailed representation of preferences, ordinal rankings are often the only available information due to their simplicity and practical constraints. The metric distortion framework addresses this issue by modeling voters and candidates as points in a metric space, with distortion quantifying the efficiency loss from relying solely on ordinal rankings. Existing works define the cost of a voter with respect to a candidate as their distance and set the overall cost as either the sum (utilitarian) or maximum (egalitarian) of these costs across all voters. They show that deterministic algorithms achieve a best-possible distortion of 3 for any metric when considering a single candidate.

This paper explores whether one can obtain a better approximation compared to an optimal candidate by relying on a committee of $k$ candidates ($k \ge 1$), where the cost of a voter is defined as its distance to the closest candidate in the committee.
We answer this affirmatively in the case of line metrics, demonstrating that with $O(1)$ candidates, it is possible to achieve optimal cost. Our results extend to both utilitarian and egalitarian objectives, providing new upper bounds for the problem. We complement our results with lower bounds for both the line and 2-D Euclidean metrics.
\end{abstract}

\section{Introduction}

One of the fundamental challenges in the social choice theory is to elect representatives based on voters' preferences, ideally represented by cardinal utility functions that assign numerical values to each outcome. However, in most real-world scenarios, voters only provide ordinal information, such as preference orders among outcomes/candidates. This raises a natural question of how worse, if at all, a voting mechanism performs given ordinal information than cardinal information. Procaccia and Rosenschein~\cite{procaccia2006distortion} introduced the notion of \emph{distortion} to measure such an efficiency loss --  how different voting rules respond to the lack of cardinal information. Many practical voting scenarios can be formulated by considering both voters and candidates lying on a metric space (see~\cite{enelow1984spatial}). The distance to candidate locations determines voters' cardinal preferences for candidates -- voters rank candidates based on ascending distance, with the closest candidate being the most preferable and the farthest candidate being the least preferable. The worst-case behavior of any ordinal preference order-based voting rule/mechanism is captured by the notion of \emph{metric distortion}, introduced by Anshelevich \emph{et al.}~\cite{anshelevich2018approximating}. A voting mechanism, without access to the actual distances among the set of voters and candidates, seeks to minimize a specific cost function, which depends on the distances. Distortion is defined in relation to this cost: For any voting mechanism $f$, its distortion is the worst-case ratio (across all instances) of the cost of the solution produced by $f$ compared to the optimal cost.

Given a single fixed candidate, the cost for a voter is defined as its distance from the given candidate. Then, the overall cost is set to be an objective that combines these values across all the voters. Depending on the specific context in the literature, the following two objectives have widely been considered: A utilitarian objective, which aims to minimize the total individual costs for all voters, and an egalitarian cost, which minimizes the maximum cost experienced by any voter. Different variants of the metric distortion problem under the above objective functions have received significant attention, e.g.~\cite{goel2017metric, kempe2020analysis, gkatzelis2020resolving, anagnostides2022dimensionality, kizilkaya2023generalized}.

For the classical metric distortion problem, it is known that a distortion of $3$ can be achieved for any metric. Further, it is widely known that the candidate chosen by any deterministic method cannot achieve a distortion factor of less than 3, even in a line metric.
In other words, without knowing the exact distances, it is not possible to obtain a better than $3$-approximation to the optimal cost. This leads to an intriguing question: Can we obtain a better approximation (distortion) by selecting more than one candidate? Specifically, assume that the algorithm is allowed to choose $k > 1$ candidates, and we set the cost for each voter to be its distance to the closest chosen candidate. Can we design an algorithm for which the overall cost (either utilitarian or egalitarian) is at most $\alpha$ times the cost of an optimal candidate for some $\alpha < 3$? 

In this paper, we answer the above question affirmatively for the line metric. We not only attain better distortion by allowing the selection of more than one candidate but, in fact, attain optimal cost with $O(1)$ candidates.
We complement our upper-bound results with various lower-bound constructions showing impossibility results in line and 2-D Euclidean metric.

Our work provides a \emph{bicriteria} perspective to metric distortion, which, to our knowledge, has not been considered before. 
The pursuit of improved \emph{bi-criteria} approximation results for various classical optimization problems has already been well-investigated in the literature. Indeed,
when the metric is known, and the goal is to pick $k$ candidates that minimize the overall cost across voters, the election problem becomes an instance of either the $k$-median or the $k$-center clustering (for the utilitarian and egalitarian objective, respectively).
For these problems, numerous (constant-factor) approximation algorithms are known which select up to $O(k)$ centers (instead of $k$ centers), e.g.~\cite{feldman2007bi, wei2016constant, alamdari2018bicriteria}. Therefore, it is quite natural to explore a similar question in the metric distortion problem, where the underlying metric is not directly given. 

Our work additionally extends the existing line of research on $k$-committee elections~\cite{faliszewski2017multiwinner, elkind2017properties, caragiannis2022metric}, which also select a committee of $k$ candidates and aim to minimize some loss function across all voters. The key distinction is that we consider a single candidate as our baseline for calculating distortion, while these works consider a baseline of $k$ candidates (see Section~\ref{sec:related_work}). Specifically, for the utilitarian cost we consider (i.e., the sum of distances of voters to their closest candidate in the committee), Caragiannis, Shah, and Voudouris~\cite{caragiannis2022metric} show that the ratio of the cost for any deterministic method compared to the cost of an optimal committee can be unbounded in the worst case. This result fails to suggest a choice of candidates, as all possible choices are the same in the worst case. In contrast, we show that by choosing the right baseline, one can make a meaningful distinction between different choices even though the underlying metric is unknown. 

\subsection{Our contribution}

Our first result shows that when the underlying metric is a line, a committee of two candidates can achieve a 1-distortion of $1$ under the \emph{sum-cost} objective. Here, \emph{1-distortion} refers to the worst-case ratio between the cost of the selected committee and the cost of an optimal single-winner candidate.  
In fact, we will prove a stronger result by showing that it is possible to output a list of $2$ candidates that always contains an optimal one.
Formally, we prove the following theorem.
\begin{theorem}
    There exists an algorithm for the 2-committee utilitarian election on the line metric that is guaranteed to choose an optimum candidate in the elected committee. 
    Consequently, the 1-distortion of the algorithm is $1$. 
\end{theorem}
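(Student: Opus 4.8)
The plan is to reduce the problem to a short combinatorial decision after first recovering the geometry of the candidates from the votes. First I would run a single-peaked-consistency routine (\findorder, built on sorting the candidates and then the voters): since on a line every voter ranks candidates by distance, the whole profile is single-peaked with respect to the left-to-right order of the candidates, so this routine returns an order $c_1 \prec c_2 \prec \cdots \prec c_m$ matching their positions (up to an irrelevant reversal). Given this order and the rankings, for each voter I can read off which cell of the arrangement of pairwise midpoints it lies in; in particular I know every voter's top choice, and I can sort the voters left-to-right by their cells. Let $c_j$ be the top choice of the median voter (more precisely, of the $\lceil n/2\rceil$-th voter in this order); because votes are by distance, $c_j$ is exactly the candidate nearest the median voter's position.

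The second step localizes the optimum. Write $g(x)=\sum_v d(v,x)$ for the utilitarian cost of a point $x$ on the line; this function is convex and is minimized at the median voter's position (an interval when $n$ is even). Hence, restricted to the candidates, $g$ is minimized either at the rightmost candidate at or left of the median voter or at the leftmost candidate strictly to its right; these two candidates are consecutive in the order $\prec$, and one of them — the nearer one — is $c_j$. Therefore the true optimal candidate satisfies $\copt \in \{c_{j-1}, c_j, c_{j+1}\}$. (When $n$ is even and the two middle voters have distinct favorites, a short separate argument pins $\copt$ to two explicitly named candidates right away; the substantive case is a common favorite $c_j$, treated next.)

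The third step decides which neighbor of $c_j$ to include. The key inequality I would prove is
\[
  g(c_j) - g(c_{j-1}) \;\le\; (L - R)\,(c_j - c_{j-1}),
\]
where $L$ is the number of voters whose top choice lies in $\{c_1,\dots,c_{j-1}\}$ and $R$ the number whose top choice lies in $\{c_{j+1},\dots,c_m\}$ — both are read directly off the profile. Indeed, a voter topping some candidate left of $c_{j-1}$ sits left of $c_{j-1}$ and contributes exactly $+(c_j-c_{j-1})$; a voter topping some candidate right of $c_j$ sits right of $c_j$ and contributes exactly $-(c_j-c_{j-1})$; the voters topping $c_{j-1}$ contribute at most $+(c_j-c_{j-1})$ each, and those topping $c_j$ at most $0$ each. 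A mirror-image bound gives $g(c_j)-g(c_{j+1}) \le (R-L)\,(c_{j+1}-c_j)$. Consequently: if $L>R$ then $c_{j+1}$ is strictly dominated by $c_j$ in every realization consistent with the votes, so $\copt\in\{c_{j-1},c_j\}$; if $L<R$ then symmetrically $\copt\in\{c_j,c_{j+1}\}$; and if $L=R$ then $c_j$ itself is optimal. In every case the algorithm outputs two candidates, one of which is $\copt$, and the claimed $1$-distortion of $1$ follows.

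The main obstacle — and the reason a single winner cannot match the optimum — is that the ordinal profile does \emph{not} pin down the individual costs $g(c_i)$: voters are only localized to midpoint cells and the inter-candidate gaps are free, so each $g(c_i)$ ranges over an interval of values. The content of Step 3 is precisely that, in spite of this indeterminacy, the $\arg\min$ over candidates can wander over at most two candidates, and that the sign of the purely combinatorial quantity $L-R$ says which two. The remaining work is bookkeeping I would want to be careful about: making \findorder behave correctly (and the above reasoning robust) when the single-peaked axis is not unique or the profile is degenerate, and handling $n$ even in Step 2, where the minimizer of $g$ is an interval and the two middle voters must be considered together.
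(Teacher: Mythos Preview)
Your proposal is correct and follows essentially the same approach as the paper: first localize the optimum to the median voter's favorite $c_j$ and its two neighbors (the paper's Lemma on three candidates), then discard one neighbor by comparing the purely combinatorial counts $L$ and $R$, which coincide with the paper's $|V_1|$ and $|V_3|$ and yield the same inequality $g(c_j)-g(c_{j+1})\le (R-L)(c_{j+1}-c_j)$. The only substantive difference is in the ordering step: the paper builds its own routine that sorts only a \emph{core} subset of candidates (those preferred by every voter to all non-core candidates), which is exactly how it handles the axis non-uniqueness you flag as remaining work.
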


We further show that when the metric is not a line, one can obtain a distortion of $1+\frac{2}{m-1}$. 
\begin{theorem}
  There exists an algorithm for the $(m-1)$-committee utilitarian election on the general metric 
  that obtains a 1-distortion of, at most $1 + 2/(m-1)$. Additionally, no algorithm can obtain a 1-distortion better than $1 + 2/(m-1)$ when choosing $m-1$ candidates, even if the metric space is $2$-D Euclidean.
\end{theorem}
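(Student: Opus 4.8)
The two halves are essentially independent: an ordinal rule for the upper bound and an explicit $2$-D construction for the lower bound. I would state the algorithm as: \emph{elect every candidate except the plurality loser} $c_{\mathrm{out}}$ (the candidate ranked first by the fewest voters; ties broken arbitrarily). Writing $c^*$ for the optimal single winner and $\mathrm{OPT}=\mathrm{cost}(c^*)$, there are two cases. If $c_{\mathrm{out}}\neq c^*$ then $c^*$ lies in the elected committee, so its cost is at most $\mathrm{OPT}$ and the $1$-distortion is $1$. If $c_{\mathrm{out}}=c^*$, let $V_{\mathrm{out}}$ be the set of voters whose nearest candidate is $c_{\mathrm{out}}$; since $c_{\mathrm{out}}$ minimizes plurality score, $|V_{\mathrm{out}}|\le n/m$. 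Only these voters change when $c_{\mathrm{out}}$ is removed, and (dropping a nonpositive term for the other voters) $\mathrm{cost}(\text{committee})-\mathrm{OPT}\le \sum_{v\in V_{\mathrm{out}}}\bigl(d_2(v)-d_1(v)\bigr)$, where $d_1(v),d_2(v)$ are the distances from $v$ to its nearest and second-nearest candidate.

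The key computation is then two applications of the triangle inequality. First, $d_2(v)\le \tfrac{1}{m-1}\sum_{c\neq c_{\mathrm{out}}} d(v,c)\le d(v,c_{\mathrm{out}})+\tfrac{1}{m-1}\sum_{c\neq c_{\mathrm{out}}} d(c_{\mathrm{out}},c)$, which collapses the sum above to at most $\tfrac{|V_{\mathrm{out}}|}{m-1}\sum_{c\neq c_{\mathrm{out}}} d(c_{\mathrm{out}},c)$. Second, for a voter $v$ whose nearest candidate is $c$ we have $d(v,c_{\mathrm{out}})\ge d(c_{\mathrm{out}},c)-d(v,c)\ge d(c_{\mathrm{out}},c)-d(v,c_{\mathrm{out}})$, i.e.\ $d(v,c_{\mathrm{out}})\ge \tfrac12 d(c_{\mathrm{out}},c)$; summing over voters grouped by their favorite candidate and using that \emph{every} candidate is the favorite of at least $|V_{\mathrm{out}}|$ voters (definition of plurality loser) gives $\mathrm{OPT}=\mathrm{cost}(c_{\mathrm{out}})\ge \tfrac12\sum_{c\neq c_{\mathrm{out}}}(\#\{\text{favorite }c\})\,d(c_{\mathrm{out}},c)\ge \tfrac{|V_{\mathrm{out}}|}{2}\sum_{c\neq c_{\mathrm{out}}} d(c_{\mathrm{out}},c)$. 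Combining the two bounds yields $\mathrm{cost}(\text{committee})-\mathrm{OPT}\le \tfrac{2}{m-1}\mathrm{OPT}$, as required. (Throughout I would assume strict preferences / generic positions so that ``favorite'' and ``plurality loser'' are well defined, and note the case split is valid for any tie-break.)

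For the lower bound I would give an explicit planar instance that is tight for the analysis above: put $c_0$ at the origin, put $c_1,\dots,c_{m-1}$ at the vertices of a regular $(m-1)$-gon of radius $D$ centered at the origin, place $n/m$ voters at the origin, and for each $l\ge 1$ place $n/m$ voters at the midpoint of the segment $[c_0,c_l]$. A direct calculation shows $c_0$ is the unique optimal single winner with $\mathrm{cost}(c_0)=\tfrac{m-1}{2}\cdot\tfrac{n}{m}D$ (each non-central candidate is strictly worse by $\tfrac{n}{m}D$), while the committee $\{c_1,\dots,c_{m-1}\}$ has cost $\mathrm{cost}(c_0)+\tfrac{n}{m}D$, giving ratio exactly $1+\tfrac{2}{m-1}$. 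The remaining work is to turn this single instance into an impossibility statement: I would describe the \emph{ordinal profile} of this instance (in which each candidate is the top choice of exactly $n/m$ voters) and argue it is symmetric enough under relabeling of the candidates that, for every $c_j$, a relabeled copy places $c_j$ at the center and realizes the same profile with $c_j$ optimal and the committee $[m]\setminus\{c_j\}$ again paying $1+\tfrac{2}{m-1}$; hence no choice of $m-1$ candidates is safe, and an $\varepsilon$/limiting argument removes the lower-order terms. I expect the main obstacle to be precisely this last step — certifying that one ordinal profile simultaneously admits a ``bad'' Euclidean realization for \emph{every} excluded candidate, which requires arranging the tie-breaks among the non-top candidates consistently across all $m$ realizations while keeping everything realizable in the plane.
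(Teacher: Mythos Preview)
Your upper-bound argument is correct and uses the same algorithm as the paper (exclude the plurality loser), but a genuinely different accounting. The paper fixes a single witness voter $v'$---the closest voter to $c_{\mathrm{out}}$ whose favorite is not $c_{\mathrm{out}}$---and shows both that $\mathrm{OPT}\ge \tfrac{m-1}{m}\,n\,d(c_{\mathrm{out}},v')$ and that every voter in $V_{\mathrm{out}}$ can be rerouted through $v'$ at extra cost $\le 2\,d(c_{\mathrm{out}},v')$. You instead pivot on the aggregate quantity $\sum_{c\neq c_{\mathrm{out}}} d(c_{\mathrm{out}},c)$, bounding $d_2(v)$ by the \emph{average} of the $d(v,c)$ and lower-bounding $\mathrm{OPT}$ via $d(v,c_{\mathrm{out}})\ge\tfrac12 d(c_{\mathrm{out}},\mathrm{fav}(v))$ together with the fact that each $c\neq c_{\mathrm{out}}$ is the favorite of at least $|V_{\mathrm{out}}|$ voters. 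Both approaches give exactly $1+\tfrac{2}{m-1}$; yours is a bit more symmetric in the candidates, the paper's is a bit more direct.

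The lower bound, however, has a genuine gap at exactly the point you flag. Your regular-$(m-1)$-gon instance produces an ordinal profile in which $c_0$ is ranked first or second by \emph{every} voter, while each peripheral $c_l$ is ranked last by many voters. This profile is not invariant under any permutation sending $c_0$ to some $c_j$, so the ``relabeled copy places $c_j$ at the center and realizes the same profile'' step fails outright: there is no relabeling to invoke. Nor can one easily construct, for $j\ge 1$, an alternative $2$-D realization of the \emph{same} profile in which $c_j$ is optimal and excluding it costs $1+\tfrac{2}{m-1}$ times optimum, since $c_0$'s universal first/second position forces it to remain central in any consistent metric. The paper avoids this by abandoning geometric symmetry and instead engineering a single profile---voter $v_i$ ranks $c_i\succ c_{i-1}\succ\cdots\succ c_1\succ c_{i+1}\succ\cdots\succ c_m$---that admits $m$ distinct planar realizations: place all $c_i$ at $(-\ell,2^{i-1}/2^m)$ and all $v_i$ at $(0,2^{i-1}/2^m)$, and for the $j$-th realization move both $c_j$ and $v_j$ to $x=\ell$ (same $y$-coordinate). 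The powers-of-two $y$-coordinates make every ranking invariant under this move, and in realization $I_j$ one has $\mathrm{cost}(c_j)\le(m-1)(\ell+1)$ while any committee omitting $c_j$ pays at least $(m+1)\ell$, giving the ratio $1+\tfrac{2}{m-1}$ as $\ell\to\infty$.
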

Note that the above theorem immediately implies that no algorithm can find a set of size $m-1$ guaranteed to contain an optimum, even if the underlying metric is $2$-D. 
We further study the \emph{max-cost} objective, showing that one can obtain 1-distortions of $1$, $1.5$, and $2$ using sets of size four, three, and two, respectively. 
\begin{theorem}
    For any $k \in \{2, 3, 4\}$, there exists an algorithm for the $k$-committee egalitarian election on the line metric, which obtains a 1-distortion of at most $3 - k/2$.
    Furthermore, there is no algorithm that can obtain a 1-distortion better than $3 - k/2$.
\end{theorem}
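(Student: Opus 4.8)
The plan is to handle the two directions separately. Fix an instance and write $r^{*}$ for the optimal single-winner egalitarian cost, $c^{*}$ for an optimal candidate, $v_L\le v_R$ for the leftmost and rightmost voters on the line, $r_0=(v_R-v_L)/2$, and $m^{*}=(v_L+v_R)/2$. The algorithmic starting point is that, on a line, the ordinal profiles already determine the left-to-right candidate order $c_1\prec\cdots\prec c_m$ (up to reflection), the favorite $c(v)$ of every voter $v$, and in fact the location of each voter relative to every midpoint $(c_i+c_j)/2$; in particular $v_L$, $v_R$ and their favorites are computable (voters with identical profiles are interchangeable, and ``leftmost'' versus ``rightmost'' is only a reflection). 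Two elementary facts then do most of the work: \textbf{(i)} $d\bigl(c(v),v\bigr)\le d(c^{*},v)\le r^{*}$ for every voter $v$; and \textbf{(ii)} all voters lie in an interval of length $2r_0\le 2r^{*}$ centered at $m^{*}$, $c^{*}$ is the candidate nearest $m^{*}$, $|c^{*}-m^{*}|=r^{*}-r_0$, and no candidate lies strictly inside $\bigl(m^{*}-(r^{*}-r_0),\,m^{*}+(r^{*}-r_0)\bigr)$.

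For $k=2$ I would output $\{c(v_L),c(v_R)\}$: a voter $u$ with $u\le m^{*}$ has $|u-v_L|\le m^{*}-v_L=r_0$, hence $d\bigl(u,c(v_L)\bigr)\le |u-v_L|+d\bigl(v_L,c(v_L)\bigr)\le r_0+r^{*}\le 2r^{*}$, and $u\ge m^{*}$ is symmetric; so the committee has max-cost at most $2r^{*}=(3-k/2)\,r^{*}$. For $k\in\{3,4\}$ the same two members still cover every voter $u$ within $|u-v_L|+r^{*}$ from the left and within $|u-v_R|+r^{*}$ from the right, so only voters in a window around $m^{*}$ remain uncovered at the target radius; a short computation shows that covering that window within the target only requires adding one candidate within $2r^{*}-r_0$ of $m^{*}$ when $k=3$ (so $c^{*}$ works with room to spare), and within exactly $r^{*}-r_0$ of $m^{*}$ when $k=4$ --- i.e.\ we need $c^{*}$ itself (or a candidate at its mirror point about $m^{*}$), by fact (ii). So the remaining job is to exhibit, from the ordinal data, a list of $k-2$ candidates that is guaranteed to contain a suitable one.

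The structural handle for that is that the favorite-index of a voter is monotone in the voter's position, so $\operatorname{index}(c^{*})$ lies between $\operatorname{index}\bigl(c(v_L)\bigr)$ and $\operatorname{index}\bigl(c(v_R)\bigr)$. If this index-window contains at most $k-2$ candidates we simply add all of them; the remaining case, a long window, I would treat by exploiting the finer midpoint information of the voters lying strictly between $v_L$ and $v_R$ together with fact (ii) to localize $c^{*}$ inside the window. I expect the $k=4$ case --- where an approximately optimal candidate is useless and one must pin $c^{*}$ down exactly --- to be the main obstacle; for $k=3$ the slack ($2r^{*}-r_0$ versus $r^{*}-r_0$) should make the localization much more forgiving.

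For the matching lower bounds I would, for each $k\in\{2,3,4\}$, build a family of line instances that all induce the same ordinal profiles --- forcing any deterministic algorithm to commit to one fixed committee $T$ of size $k$ --- while varying the coordinates so that the ``central region'' around $m^{*}$ can be moved onto different candidate positions undetectably (the point being that a voter whose favorite is an extreme candidate has an unbounded Voronoi cell, so extreme voters can be placed so as to slide $m^{*}$ around, while the candidates near the center are accompanied by voters that make all of them look equally plausible as the optimum). Given $T$, pick the realization in which $c^{*}$ falls strictly between two consecutive members of $T$ with a voter essentially at $m^{*}$, and tune the gaps so that this voter is at distance $(3-k/2)\,r^{*}$ from every member of $T$ while $c^{*}$ alone achieves cost $r^{*}$, giving ratio exactly $3-k/2$. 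For $k=4$ this degenerates to the trivial instance with (at most) four coincident candidates, where $T$ is forced to be the whole candidate set and the ratio is $1$. The delicate points are keeping all the profiles simultaneously realizable as single-peaked orders with respect to the common candidate ordering across the whole family, and balancing the gaps so that the ratio is exactly $3-k/2$ rather than merely a constant bounded away from $1$.
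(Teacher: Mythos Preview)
Your $k=2$ upper bound is correct and essentially identical to the paper's: pick the favorites of the extreme voters and use $r_0\le r^*$ together with the triangle inequality.

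For $k\in\{3,4\}$, however, there is a genuine gap. Your plan is to cover the ``middle window'' by locating a candidate close to $m^*$ --- ultimately $c^*$ itself when $k=4$. But $c^*$ is \emph{not} determined by the ordinal data, even up to a short list. Take two voters $v_L=0$, $v_R=10$ and eight candidates in left-to-right order; the profile is simply $1\succ 2\succ\cdots\succ 8$ for $v_L$ and its reverse for $v_R$. This profile is realized by $x_i=i$ (so $c^*$ is candidate $5$), by $x_1,\ldots,x_7\in(0,1)$ and $x_8=9.9$ (so $c^*$ is candidate $7$), and by $x_1=0.1$, $x_2,\ldots,x_8\in(9,10)$ (so $c^*$ is candidate $2$). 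There are no interior voters whose ``finer midpoint information'' you could exploit, and with only two extra slots you cannot guarantee hitting $c^*$. So the step you flagged as ``the main obstacle'' is in fact an impossibility for your chosen decomposition.

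The paper avoids this entirely by \emph{not} aiming at $c^*$. Instead it adds the immediate neighbor(s) of $c(v_L)$ and $c(v_R)$ in the candidate order: for $k=3$ take $\{c(v_L),c(v_R),c_r'\}$ with $c_r'$ the left neighbor of $c(v_R)$; for $k=4$ take $\{c(v_L),c_l',c(v_R),c_r'\}$ with $c_l'$ the right neighbor of $c(v_L)$. The point is that among $\{c(v_L),c_l'\}$ one candidate, say $c_l^\star$, is the leftmost candidate lying in $[v_L,v_R]$, and symmetrically $c_r^\star\in\{c(v_R),c_r'\}$ is the rightmost such candidate. Voters outside $[c_l',c_r']$ then have their favorite in the committee (hence distance $\le r^*$), while voters inside $[c_l^\star,c_r^\star]$ are within $\tfrac12\,d(c_l^\star,c_r^\star)$ of the committee. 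For $k=4$ one bounds $d(c_l^\star,c_r^\star)\le d(v_L,v_R)\le 2r^*$; for $k=3$ one bounds $d(c(v_L),c_r^\star)\le d(c(v_L),v_L)+d(v_L,c^*)+d(c^*,c_r^\star)\le 3r^*$. This is the missing idea: target the \emph{extreme candidates inside $[v_L,v_R]$}, which are always among the neighbors of $c(v_L),c(v_R)$, rather than the optimum $c^*$, which need not be.

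Your lower-bound sketch is in the right spirit (a family of line instances with a common profile, so any deterministic rule commits to one committee $T$, and then a realization that punishes the omitted candidate), but it is too schematic to pin down the exact constants $2$ and $3/2$; in particular the phrase ``this voter is at distance $(3-k/2)r^*$ from every member of $T$'' does not yet correspond to a construction. The paper gives explicit small instances --- three candidates and three voters for $k=2$, four of each for $k=3$ --- in which, for each candidate $c_i$, there is a realization where $c_i$ alone is optimal and every other candidate is at distance roughly $(3-k/2)r^*$ from the corresponding extreme voter $v_i$. The $k=4$ lower bound is just the trivial $1$, as you note.
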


For the \emph{max-cost} objective, even though there exists an algorithm that selects one candidate with a distortion of 3, we show that no algorithm can achieve a 1-distortion better than 3, even when choosing a committee of $m-1$ candidates.

\begin{theorem}
  There is no algorithm for the $(m-1)$-committee egalitarian election on $2$-D Euclidean metric such that can obtain a 1-distortion better than $3$.
\end{theorem}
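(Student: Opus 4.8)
\emph{Approach.} I would argue by an adversarial (indistinguishability) construction against an arbitrary deterministic rule $\mathcal A$. Fix $\varepsilon>0$ and an integer $m\ge 3$. I will exhibit a single preference profile $\mathcal P$ over candidates $c_1,\dots,c_m$ that is invariant under the cyclic relabeling $\sigma\colon c_i\mapsto c_{i+1}$ (together with a matching cyclic relabeling of the voters), together with, for each $j$, a $2$-D Euclidean realization $d_j$ of $\mathcal P$ in which $c_j$ is the unique optimal single winner with egalitarian cost $R$, whereas every committee of size $m-1$ omitting $c_j$ has egalitarian cost at least $(3-\varepsilon)R$. Running $\mathcal A$ on $\mathcal P$ yields a fixed committee of size $m-1$, i.e.\ it omits some $c_{j^\star}$; since $\mathcal P$ is $\sigma$-invariant, all $j$ are symmetric, so the adversary may reveal $d_{j^\star}$ and obtain $1$-distortion $\ge 3-\varepsilon$. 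As $\varepsilon\to 0$ this gives the lower bound $3$, matching the upper bound of $3$ that a committee inherits by containing the $3$-approximate single winner. (The remaining case $m=2$ is just the single-winner egalitarian bound, realized by a two-candidate line instance.)

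\emph{The bad realizations.} In $d_j$: place $c_j$ at the origin; place the other $m-1$ candidates inside a tiny disk whose center is at distance slightly \emph{below} $2R$ from the origin; place one ``witness'' voter $w$ at distance exactly $R$ from the origin, on the ray opposite that disk. Then $w,c_j,c_i$ are nearly collinear, so $\lVert w-c_i\rVert$ is nearly $\lVert w-c_j\rVert+\lVert c_j-c_i\rVert\approx R+2R$; this tight triangle inequality is the source of the constant $3$, giving $d_j(w,c_i)\ge(3-\varepsilon)R$ for all $i\ne j$. Finally place the remaining voters strictly inside the radius-$R$ ball around $c_j$, roughly halfway toward the candidate disk, tuned so that every $c_i$ ($i\ne j$) is someone's first choice --- this is exactly why the disk must sit a hair \emph{under} $2R$ rather than at $2R$, since a midpoint voter can then lie strictly inside $c_i$'s radius-$R$ ball and rank $c_i$ ahead of the central $c_j$. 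With this layout, every voter is within $R$ of $c_j$ so the optimal single-winner egalitarian cost is $R$ (attained only by $c_j$, every other candidate being $\approx 3R$ from $w$), while the committee $\{c_i:i\ne j\}$ costs at least $\min_{i\ne j}d_j(w,c_i)\ge(3-\varepsilon)R$.

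\emph{The $m=3$ instance, explicitly.} Here $\mathcal P$ is the Condorcet cycle $c_1\succ c_2\succ c_3$, $c_2\succ c_3\succ c_1$, $c_3\succ c_1\succ c_2$, and for $d_1$ one may take $c_1=(0,0)$, $c_2=(2R-a,a)$, $c_3=(2R-a,-a)$, the witness on the radius-$R$ circle just above $(-R,0)$, and the two midpoint voters near $(R-\tfrac{a}{4},a)$ and $(R-\tfrac{a}{2},-a)$; a direct computation (leading order in $a/R$) checks that these induce $\mathcal P$, that $c_1$ is the unique egalitarian optimum with cost $R$, and that the committee $\{c_2,c_3\}$ must pay $3R-a$, so the ratio is $3-a/R\to 3$. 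Applying $\sigma$ produces $d_2,d_3$ realizing the same $\mathcal P$. For general $m$ one keeps the same geometric picture and uses the cyclically symmetric ``bitonic'' profile in which voter $v_k$ ranks $c_k\succ c_{k-1}\succ c_{k+1}\succ c_{k-2}\succ c_{k+2}\succ\cdots$, placing the $m-1$ clustered candidates on a tiny arc and the midpoint voters on a matching tiny arc so that $v_k$ faces $c_k$.

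\emph{Main obstacle.} The only nonroutine step is checking that a \emph{single} fixed profile is realized by all $m$ of these near-degenerate embeddings at once. Every governing distance is pinned down to within $O(\varepsilon)$ (it must equal $R$, $2R$, or $3R$), everything is nearly collinear, and one must reconcile several perturbation scales simultaneously: a midpoint voter's order among the clustered candidates is controlled by the angle at which it faces the cluster arc, the witness's order among them by a much smaller radial perturbation of the cluster, and these must agree with the \emph{same} cyclically symmetric order for every $j$, all while keeping each voter inside the radius-$R$ ball around the central candidate and keeping every non-omitted candidate someone's top choice. This perturbation bookkeeping is immediate for $m=3$ and is a short induction on scales for larger $m$; granting it, the two egalitarian-cost computations, the appeal to determinism, and the limit $\varepsilon\to 0$ are all routine.
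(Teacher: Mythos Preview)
Your high-level framework is correct and matches the paper's: both construct a single preference profile together with $m$ metrically distinct $2$-D realizations, and exploit the fact that a deterministic rule must omit some $c_{j^\star}$, after which the adversary reveals $d_{j^\star}$. Your geometric source of the constant $3$ (the near-collinear triple witness/center/cluster with radii $R$ and $\approx 2R$) is also the same as the paper's, and your $m=3$ instance is correct.

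The route to general $m$ is where you diverge from the paper, and also where there is a gap. You impose cyclic symmetry on the profile so that only one embedding needs to be built, then sketch that embedding with the $m-1$ non-central candidates on a ``tiny arc'' and the midpoint voters on a ``matching tiny arc so that $v_k$ faces $c_k$''. But the cyclic bitonic profile restricted to the cluster is \emph{not} single-peaked with respect to any linear order of the cluster once $m\ge 5$: for $m=5$, voter $v_2$'s cluster order is $c_2\succ c_3\succ c_5\succ c_4$ while $v_3$'s is $c_3\succ c_2\succ c_4\succ c_5$, and no common linear arrangement of $\{c_2,\dots,c_5\}$ makes both single-peaked. A voter viewing a tiny arc from distance $\approx R$ sees it essentially as a line segment and therefore can only produce linearly single-peaked cluster orders, so ``cluster on a tiny arc'' cannot realize your profile as stated. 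Something genuinely cyclic (a tiny full circle for the cluster, with viewing angles tuned per voter) or a different profile would be needed; that is more than ``a short induction on scales''.

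The paper sidesteps all of this by dropping symmetry. It uses the profile in which $v_i$ ranks $c_i\succ c_{i-1}\succ\cdots\succ c_1\succ c_{i+1}\succ\cdots\succ c_m$, realized by placing every $c_i$ at $(-\ell,\,2^{i-1}/2^m)$ and every $v_i$ at $(0,\,2^{i-1}/2^m)$; the exponential $y$-spacing makes the rankings immediate. The realization $d_j$ simply slides $c_j$ to $(\ell,\,2^{j-1}/2^m)$ and $v_j$ to $(2\ell,\,2^{j-1}/2^m)$; since each move is purely horizontal and reflects the relevant $x$-distance, no preference changes. One then reads off $\opt\le \ell+1$ (attained by $c_j$) and committee cost $\ge 3\ell$ (from $v_j$ to any $c_i$ with $i\ne j$), with no perturbation bookkeeping at all. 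So the paper trades away your symmetry shortcut for a construction that is explicit and uniform in $m$.
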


It is important to note that the obtained lower bounds apply to deterministic algorithms.

\begin{table}[h!]
    \centering
    \begin{tabular}{@{}lllcc@{}}
        \toprule
        \textbf{Objective} & \textbf{Metric Space} & \textbf{Committee Size (Out of $m$)} & \textbf{Lower-Bound} & \textbf{Upper-Bound}  \\ 
        \midrule
        \multirow{4}{*}{Sum} & \multirow{2}{*}{1D} & $\ge 2$ & 1 & \textcolor{blue}{1} \\ \cmidrule{3-5}
        && $1$ & 3\cite{anshelevich2018approximating} & 3\cite{gkatzelis2020resolving} \\ \cmidrule{2-5}
        & 2D & $m-1$ & \textcolor{blue}{$1+\frac{2}{m-1}$} & \textcolor{blue}{$1+\frac{2}{m-1}$} \\ 
        \midrule
        \multirow{6}{*}{Max} & \multirow{4}{*}{1D} & $\ge 4$ & 1 & \textcolor{blue}{1} \\ \cmidrule{3-5}
        && $3$ & \textcolor{blue}{1.5} & \textcolor{blue}{1.5} \\ \cmidrule{3-5}
        && $2$ & \textcolor{blue}{2} & \textcolor{blue}{2} \\ \cmidrule{3-5}
        && $1$ & 3 & 3 \\ \cmidrule{2-5}
        & 2D  & $m-1$ & \textcolor{blue}{3} & 3\\ 
        \bottomrule
    \end{tabular}
    \caption{State-of-the-Art; Blue colored bounds are results of this paper
    }
    \label{tab}
\end{table}

Our approach to the problem involves many novel techniques that we believe are of independent interest.
Most notably, for the line metric, we propose an algorithm that essentially finds the order of candidates and voters.
Specifically, we identify the order for
all the candidates in a set of \emph{core candidates}, which has the following property:
If $c_i$ is a core candidate and $c_j$ is not, then all voters prefer $c_i$ to $c_j$.
We additionally find the order of all voters with the minor caveat that we may potentially have ties for voters who have the same exact preference for all core candidates. 
We refer to Section~\ref{order} for more details. 

\subsection{Related works}
\label{sec:related_work}

Since its introduction, the metric distortion framework has remained central in investigating the performance of single-winner voting. Gkatzelis, Halpern, and Shah~\cite{gkatzelis2020resolving} showed the existence of a deterministic mechanism with a 3-distortion factor, settling a long-standing open question, a simpler proof of which was later given by Kizilkaya and Kempe~\cite{kizilkaya2023pluralityvetosimplevoting}. Although a distortion factor of 3 is unavoidable for any deterministic mechanism, an intriguing question arises about whether randomization could surpass this 3-factor barrier. In fact, it was conjectured that a randomized mechanism could achieve a distortion of 2. However, Charikar and Ramakrishnan~\cite{charikar2022metric}, and Pulyassary and Swamy~\cite{pulyassary2021randomized} independently established the non-existence of such a randomized mechanism, refuting the conjecture. The quest of breaking below 3-factor using randomization remained. Charikar \emph{et al.}~\cite{charikar2024breaking} recently answered this positively by developing a randomized mechanism with a 2.753-distortion factor. On a different line of work, Anshelevich \emph{et al.}~\cite{anshelevich2024improved} demonstrate that when the threshold approval set of each voter is known -- containing all candidates whose cost is within an appropriately chosen factor of the voter's cost for their most preferred candidate -- a distortion of \(1 + \sqrt{2}\) can be achieved.

In the $k$-committee election problem (the single-winner election being a special case with $k=1$), the aim is to select $k$ candidates from a pool of $m$ candidates based on ordinal preferences provided by $n$ voters. For any mechanism $f$, its distortion is defined as the worst-case ratio (across all instances) of the cost of the solution produced by $f$ compared to the optimal cost.
When the cost for a voter is considered as the sum of distances to all committee members, Goel, Hulett, and Krishnaswamy~\cite{goel2018relating} showed that the problem reduces to the single-winner election. 

On the other hand, Caragiannis, Shah, and Voudouris~\cite{caragiannis2022metric} considered a general cost function -- each voter's cost is the distance to the $q$-th (for some integer $q \ge 1$) nearest committee member. They identified a trichotomy: For $q \le k/3$, the distortion is unbounded; for $q \in (k/3,k/2]$, it is $\Theta(n)$; and for $q > k/2$, the problem reduces to the single-winner election. 
As an immediate corollary, for the 2-committee election, with each voter's cost being its distance to the nearest committee member (i.e., $q=1$), we get a distortion of $\Theta(n)$.
Further, with the same cost function, for the $k$-committee election when $k \ge 3$, the distortion is unbounded (even for line metric). 
However, when the positions of candidates are known, for $k=m-1$, Chen, Li, and Wang~\cite{chen2020favorite} demonstrated that single-vote rules achieve a distortion of 3 and provided a matching lower bound.

One of the most basic versions -- where both voters and candidates are positioned on a real line -- has already garnered significant attention in computational social choice theory. If we know the locations of the voters and candidates on this real line, a straightforward dynamic programming approach can solve the $k$-committee election problem optimally in $O(nk \log n)$ time. Strict preference profiles, with voters and candidates on a real line (also known as 1-D Euclidean), exhibit many intriguing properties, including being single-peaked and single-crossing~\cite{black1948rationale, mirrlees1971exploration, escoffier2008single}. 
Given a preference profile and the order of voters, deciding whether it is 1-D Euclidean can be done in polynomial time~\cite{elkind2014recognizing}. Furthermore, if the input preference order is consistent with 1-D Euclidean,~\cite{elkind2014recognizing} provides an efficient construction of a mapping realizing that.

Another closely related question to the problem of optimal candidate selection is the facility location problem~\cite{mahdian2006approximation}, where the goal is to place facilities at locations in a metric space to minimize the cost of serving agents. Unlike the candidate selection problem, where candidates are restricted to a fixed set, facilities in the facility location problem can be placed anywhere in the space~\cite{feldman2016voting}.

A related solution concept to the matroid we propose is the Condorcet winning set: a set of candidates such that no other candidate is preferred by at least half the voters over every member of the set~\cite{elkind2015condorcet}. The Condorcet dimension, defined as the minimum cardinality of a Condorcet winning set, is known to be at most logarithmic in the number of candidates. Caragiannis \emph{et al.}~\cite{caragianniscan} partially reaffirm this logarithmic bound when considering committee-selected alternatives. In the metric ranking framework, Lassota \emph{et al.}~\cite{lassota2024condorcet} recently demonstrated that the Condorcet dimension is at most three under the Manhattan or \(\ell_\infty\) norms. Independently and concurrently, Charikar \emph{et al.}~\cite{charikar2024six} showed that Condorcet sets of size six always exist.

\section{Preliminaries}
\label{sec:prelim}

\paragraph{Metric space.} Let us consider a domain $\mathcal{X}$ and a distance function $d :  \mathcal{X} \times \mathcal{X} \to  \mathbb{R}$. We call $\left(\mathcal{X},d \right)$ a \emph{metric space} if the distance function $d$ satisfies the following properties:  
\begin{itemize}  
    \item \textbf{Positive definite}: For all $x,y \in \mathcal{X}$, $d(x,y) \ge 0$, and $d(x,y) = 0$ iff $x=y$.
    \item \textbf{Symmetry}: For all $x,y \in \mathcal{X}$, $d(x,y) = d(y,x)$.  
    \item \textbf{Triangle inequality}: For all $x,y,z \in \mathcal{X}$, $d(x,y) \le d(x,z) + d(z,y)$.
\end{itemize}  
In this paper, we consider 
\begin{itemize}
    \item \textbf{Line metric (1-D Euclidean metric)}: The domain is $\mathcal{X} = \mathbb{R}$, and for any two points $p,q \in \mathbb{R}$, their distance is $d(p,q) = |p - q|$.
    \item \textbf{2-D Euclidean metric}: The domain is $\mathcal{X} = \mathbb{R}^2$, and for any two points $p=(p_x,p_y),q=(q_x,q_y) \in \mathbb{R}^2$, their distance is $d(p,q) = \left|\left|p-q\right|\right|_2:= \left( \left(p_x-q_x \right)^2 + \left(p_y-q_y \right)^2\right)^{1/2}$.
\end{itemize}

\paragraph{Election.} An \emph{election instance} $\elec = (V, C, \succ)$ consists of a set $V = \{v_1, \dots, v_n\}$ of $n$ voters and a set $C = \{c_1, \dots, c_m\}$ of $m$ candidates. Each voter $v_i \in V$ has a linear order $\GE{i}$ over the candidates, where $c_j \GE{i} c_k$ indicates that voter $v_i$ prefers $c_j$ over $c_k$. We refer to $\GE{i}$ as the \emph{ordinal preference} of voter $v_i$. Furthermore, $\succ \hspace{0.2em} = \{\GE{1}, \dots, \GE{n}\}$ is called the \emph{preference profile} of the voters. 
Additionally, the $j$-th candidate in the ordinal preference of voter $v_i$ is denoted by $\GE{i,j}$.

We consider the voters and candidates to lie in the same metric space $(\mathcal{X},d)$. For ease of exposition, we extend the notion of distance $d$ to be defined directly on the set of voters and candidates instead of the points they occupy in the underlying space $\mathcal{X}$. We say a (distance) metric $d$ is \emph{consistent} with an election instance $\elec= (V, C, \succ)$, denoted as $d \vartriangleright \elec$, when for any voter $v_i$, $c_j \GE{i} c_k$ if $d(v_i, c_j) \le d(v_i, c_k)$. 

\paragraph{Social cost.} Let us consider an election instance $\elec = (V, C, \succ)$, and a distance metric $d \vartriangleright \elec$. Let $I = (\elec, d)$ denote the instance $\elec$ with $d$ being its underlying distance metric. For any subset of candidates $S \subseteq C$, and a voter $v \in V$, we use $d(v,S)$ to denote the distance between the voter $v$ to its nearest neighbor in $S$, i.e., $d(v,S):=\min_{c \in S} d(v,c)$. In this paper, we focus on the following two \emph{social costs}:
\begin{itemize}
    \item \textbf{Sum-cost (Utilitarian objective)}: For any subset of candidates $S \subseteq C$, its \emph{sum-cost}, denoted by $\cost_s(S,I)$ is defined as $\cost_s(S,I):= \sum_{v \in V}d(v,S)$.
    \item \textbf{Max-cost (Egalitarian objective)}: For any subset of candidates $S \subseteq C$, its \emph{max-cost}, denoted by $\cost_m(S,I)$ is defined as $\cost_m(S,I):= \max_{v \in V}d(v,S)$.
\end{itemize}
When it is clear from the context, we drop $I$ and simply use $\cost_s(S)$ and $\cost_m(S)$.

\paragraph{Voting rule and distortion.} A (deterministic) \emph{voting rule} (also referred to as \emph{mechanism}) $f$ is a function that maps an election instance $\elec$ to a subset of candidates $S$. We use algorithms and mechanisms interchangeably throughout this paper. 
In this paper, we are interested in comparing the cost of voting rules that selects $k$-committees with the cost of an optimal single candidate.
We call a single candidate $\copt$ \emph{optimal} if
$$\cost(\copt) = \min_{c\in C}\cost(c).$$
Throughout the paper we use $\opt = \cost(\copt)$ to refer to the cost of an optimal single candidate.

To capture how good a voting rule is in the worst case, the notion of distortion is used. For any voting rule $f$, its \emph{distortion}, or more specifically, \emph{1-distortion} is defined as
\[
\dist(f):=\sup_{\elec}\sup_{d \vartriangleright \elec}\frac{\cost(f(\elec))}{\opt}
\]
where the cost function $\cost$ in the above definition could be either $\cost_s$ or $\cost_m$ depending on the context. In other words, the 1-distortion compares the cost of the mechanism to the cost of an optimal candidate in the worst case.

\section{Line metric election: the order of candidates and voters}
\label{order}

In this section, we present an algorithm designed to determine the order of candidates and voters for any line metric election instance $\elec = (V, C, \succ)$ based on the voters' preference profile. This step is essential for establishing the upper bounds discussed in Section \ref{sc:upper-bounds}. Moreover, this approach may prove useful for future research, as it offers a general method for obtaining the total order of candidates and voters as explained in the following.

This algorithm focuses on a specific subset of candidates with properties useful for the purposes of this paper and potential future work on line metric distortion.  

\begin{definition}[\Core{}]  
In an election $\elec = (V, C, \succ)$, a subset $A \subseteq C$ is called \emph{a \core{}} if for any voter $v_i$, any candidate $c_j \in A$, and any candidate $c_k \in C \setminus A$, we have $c_j \GE{i} c_k$.  
\end{definition}  

The algorithm introduced in this section determines the order only for a subset $\Det$, referred to as the \emph{determined candidates}, rather than for all candidates. It expands $\Det$ iteratively and establishes the order of candidates within $\Det$. By the end of the algorithm, $\Det$ forms a \core{} subset of candidates.  

Additionally, for some voters with similar preferences, it may be impossible to distinguish whether they are on the right or left; thus, they are considered to be at the same point. These properties of the voter and candidate order retrieved by the algorithm are accounted for in the analysis and other sections of this paper. Overall, obtaining the order of voters and candidates with this level of accuracy remains significant for the intended purposes.

The algorithm consists of three parts. The first part, called \splitline, involves dividing the line into two halves and determining whether each candidate occurs on the left or right side (some of them remain undetermined). The second part, denoted as \sortcandidates, finds the order of candidates in each half and merges the sorted lists to obtain a total ordering. The final part, referred to as \sortvoters, determines the voters' ordering based on the retrieved order of candidates. Algorithm \ref{alg:sort} demonstrates how these three components contribute to sorting the candidates and voters.

\begin{algorithm}[ht]
  \caption{Sort candidates and voters}
  \label{alg:sort}
  \hspace*{\algorithmicindent} \textbf{Input:} Election instance $\elec$.\\
  \hspace*{\algorithmicindent} \textbf{Output:} Sequence $S_C$ where is the order of determined candidates and sequence $S_V$ the order of voters.
  \begin{algorithmic}[1]
    \Function{\sortcandiatesandvoters}{$\elec = (V, C, \succ)$}
        \State $(L, R) \gets \splitline(\elec)$
        \label{line:split-line}
        \State $S_C \gets \sortcandidates(\GE{1}, L, R)$
        \label{line:sort-cands}
        \State $S_V \gets \sortvoters(\elec, S_C)$
        \label{line:sort-vots}
        \State \Return $(S_C, S_V)$
    \EndFunction
  \end{algorithmic}
\end{algorithm}

\paragraph{\splitline.} In this part, we first find a pivot to split the line at that point. To achieve this, we introduce the following definitions. We arbitrarily choose the first voter as the pivot voter, associated with two pivot candidates as follows. 

\begin{definition}[Pivot Voter and Candidates]  
\label{def:pivot-candidates}
The voter $v_1$ is called the \emph{pivot voter}. Additionally, the two nearest candidates to the pivot voter are called the \emph{pivot candidates}. Without loss of generality, assume that $c_1$ and $c_2$ are the two nearest candidates to the pivot voter $v_1$, with $c_1$ positioned to the left of $c_2$. Note that $c_1$ and $c_2$ may both be on the same side of $v_1$.
\end{definition}

Finally, the point where the line is split is defined as follows:  

\begin{definition}[Pivot Point]  
\label{def:pivot}
The midpoint of the line segment between the two pivot candidates is called the \emph{pivot point}, denoted by $p$. Let $L$ and $R$ be the subsets of candidates on the left and right sides of $p$, respectively.  
\end{definition}

Finally, We aim to determine whether a candidate belongs to $L$ or $R$. Therefore, we formally define determined and undetermined candidates as follows:

\begin{definition}  
    \label{def:determine}  
    A candidate is \emph{determined} if it is known whether the candidate belongs to $L$ or $R$ based on Definition \ref{def:pivot}. Otherwise, the candidate is \emph{undetermined}. Let $\Det = L \cup R$ be the set of determined candidates
\end{definition}

Initially, based on Definition \ref{def:pivot-candidates}, we know that $c_1 \in L$ and $c_2 \in R$. Thus, $\Det = \{c_1, c_2\}$. Through several iterations, we expand $L$ and $R$ by adding as many candidates as possible. Therefore, at the end of each iteration we update $\Det$ such that $\Det = L \cup R$ again. We also ensure that $\Det$ always forms a consecutive subset of candidates on the line.
The process for determining new candidates in each iteration is as follows:

If there exists a voter $v_i$ such that two candidates $c_k$ and $c_j$ satisfy $c_k \GE{i} c_j$, where $c_j \in \Det$ but $c_k \notin \Det$, then we can determine $c_k$'s membership as follows:
\begin{itemize}
    \item
If both $c_1$ and $c_2$ are closer to $v_i$ than $c_j$, then $c_k$ belongs to the opposite side of $c_j$.
    \item 
Otherwise, $c_k$ belongs to the same side as the candidate in $\{c_1, c_2\}$ that is closer to $v_i$.
\end{itemize}

Algorithm \ref{alg:determine-one} is pseudocode for the function \determine, which determines a candidate if the above conditions hold and adds it to the corresponding set. Algorithm \ref{alg:split-line} determines as many candidates as possible in each iteration while maintaining the succession of the determined candidates (see \ref{sec:order-analysis} for proofs). We call these candidates $\New$. At the end of the iteration, it merges $\New$ into $\Det$.
It is important to note that we do not add each point immediately to $\Det$; instead, we merge them all at the end. This approach ensures that $\Det$ remains a consecutive list of candidates, which is crucial for our analysis.

The output of this function is $L$ and $R$, which represent the determined candidates on the left and right sides of the pivot point $p$, respectively.

\begin{algorithm}[ht]
  \caption{Determine a candidate with a determined candidate in a voter's ordinal preference}
  \label{alg:determine-one}
  \hspace*{\algorithmicindent} {\textbf{Input:} Ordinal preference of voter $v_i$, denoted $\GE{i}$; candidates $c_j$ and $c_k$ where $c_k \GE{i} c_j$, $c_j$ is determined but $c_k$ is not; two sets $L$ and $R$ containing currently determined candidates on the left and right sides of the $p$, respectively; and pivot candidates, $c_1$ and $c_2$.}
  
  \hspace*{\algorithmicindent} {\textbf{Output:} Updated sets $L$ and $R$ including candidate $c_k$.}
  \begin{algorithmic}[1]
    \Function{\determine}{$\GE{i}, c_j, c_k, L, R, c_1, c_2$}
        \If{$c_1 \GE{i} c_k \land c_2 \GE{i} c_k$}
            \State Add $c_k$ to $L$ if $c_j \in R$, otherwise add $c_k$ to $R$
        \Else
            \State Add $c_k$ to $L$ if $c_1 \GE{i} c_2$, otherwise add $c_k$ to $R$
        \EndIf
        \State \Return $(L, R)$
    \EndFunction
  \end{algorithmic}
\end{algorithm}

\begin{algorithm}[ht]
  \caption{Determine Candidates}
  \label{alg:split-line}
  \hspace*{\algorithmicindent} \textbf{Input:} Election instance $\elec$.\\
  \hspace*{\algorithmicindent} \textbf{Output:} Two subsets $L$ and $R$ of candidates, where candidates are on the left or right of the pivot point $p$.
  \begin{algorithmic}[1]
    \Function{\splitline}{$\elec = (V, C, \succ)$}
        \State $L \gets \{c_1\}$
        \State $R \gets \{c_2\}$
        \State $\Det \gets \{c_1, c_2\}$
        \Repeat
        \label{line:iteration}
            \State $\New \gets \emptyset$
            \While{$\exists (v_i, c_j, c_k): c_k \GE{i} c_j \land c_k \notin \Det \land c_j \in Det$}
                \State $(L, R) \gets \determine(\GE{i}, c_k, c_j, L, R, c_1, c_2)$
                \State $\New \gets \New \cup \{c_k\}$
            \EndWhile
            \State $\Det \gets \Det \cup \New$
            \label{line:update-det}
        \Until{$\New = \emptyset$}
        \State \Return $(L, R)$
    \EndFunction
  \end{algorithmic}
\end{algorithm}

\paragraph{\sortcandidates.} The goal of this part is to sort the candidates in $L$ and $R$. We know that $L$ lies to the left of $p$, and $R$ lies to the right of $p$, where $p$ is the midpoint of the segment connecting the two nearest candidates of $v_1$ (see Definitions \ref{def:pivot-candidates} and \ref{def:pivot}). 
Consequently, in the ordinal preference of $v_1$, candidates in $L$ with higher preferences are positioned to the right of those with lower preferences. Similarly, candidates in $R$ with higher preferences are positioned to the left of those with lower preferences. By combining these two observations, we can sort the candidates based on their positions along the line. Algorithm \ref{alg:sort-candidates} presents the pseudocode for this approach.

\begin{algorithm}[ht]
  \caption{Sort determined candidates}
  \label{alg:sort-candidates}
  \hspace*{\algorithmicindent} \textbf{Input:} Preference order of $v_1$, denoted as $\GE{i}$, $L$ and $R$, candidates on the left and the right side of pivot $p$.  
  \hspace*{\algorithmicindent} \textbf{Output:} Sequence $S_C$, sorted candidates in $L$ and $R$ by their position left to right.
  \begin{algorithmic}[1]
    \Function{\sortcandidates}{$\GE{1}, L, R$}
        \State $S_C$ is an empty sequence of candidates.
        \For {$c_i$ in $\GE{1}$}
            \If{$c_i \in L$}
                \State Add $c_i$ to the left of $S_C$.
            \ElsIf{$c_i \in R$}
                \State Add $c_i$ to the right of $S_C$.
            \EndIf
        \EndFor
        \State \Return $S_C$
    \EndFunction
  \end{algorithmic}
\end{algorithm}

\paragraph{\sortvoters.} This part focuses on sorting voters given the sorted determined candidates. The key observation is that a voter who prefers one candidate over another tends to be closer to the candidate they prefer more. Consequently, we have a method to compare two voters. For a pair of voters $v_i$ and $v_j$, assume $k$ is the smallest index where the ordinal preferences of $v_i$ and $v_j$ differ. $v_i$ is on the left side of $v_j$ if $\GE{i,k}$ is on the left side of $\GE{j,k}$. If $\GE{i,k}$ and $\GE{j,k}$ are not determined, assuming $v_i$ and $v_j$ are at the same position does not affect the further analysis (see Sections \ref{sec:order-analysis}).  

It is important to note that the voter sorting process, denoted as $\sortvoters(\elec = (V, C, \succ), S_C)$, takes the election and the sorted determined candidates as input and outputs $S_V$, a sequence of all voters in the order determined by the above approach.

In the next subsection, we provide a detailed proof demonstrating why the mentioned algorithm functions correctly.

\subsection{Analysis}
\label{sec:order-analysis}

In this part, we focus on showing that we correctly calculated the order of candidates and voters. As mentioned earlier, this approach does not determine the exact order of all candidates or accurately rank all voters. First, recall that the algorithm returns the sequence $S_C$ as the order of candidates (Line \ref{line:sort-cands}). This sequence contains the set of determined candidates, denoted as $\Det = L \cup R$. Additionally, in Line \ref{line:sort-vots} of Algorithm \ref{alg:sort}, we calculate the order of voters based on $S_C$. As previously noted, this sequence is not entirely accurate due to limited available information. The following definition formalizes this order.

\begin{definition}
    For a subset of candidates $A \subseteq C$, the order of voters \emph{with respect to} $A$ is a sequence $S$ of voters from left to right, such that there is a tie between voters $v_i$ and $v_j$ if and only if their ordinal preferences over the candidates in $A$ are identical.
\end{definition}

Now, we can formally introduce the main theorem implied by the mentioned algorithm:

\begin{theorem} 
\label{thm:order}
Let $(\mathbb{R}, d)$ be the line metric. For an election $\elec = (V, C, \succ)$ such that $d \vartriangleright \elec$, \sortcandiatesandvoters{} correctly identifies:
\begin{enumerate}
    \item the subset $\Det = L \cup R$ which is a \core{} subset of candidates,
    \item the order of $\Det$ candidates, denoted as $S_C$,
    \item and the order of voters with respect to $\Det$, denoted as $S_V$.
\end{enumerate}
\end{theorem}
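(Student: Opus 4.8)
The plan is to prove the three claims of Theorem~\ref{thm:order} essentially in the order the algorithm produces them, establishing a collection of invariants about $\Det$, $L$, $R$, and the preference structure that get maintained through the main loop of \splitline{}. First I would set up the geometric picture: fix any realization $d \vartriangleright \elec$ placing voters and candidates on $\mathbb{R}$, let $c_1,c_2$ be the pivot candidates with pivot point $p$ their midpoint, and observe the basic fact that for the pivot voter $v_1$, no candidate lies strictly between $c_1$ and $c_2$ (they are the two nearest candidates to $v_1$); hence $p$ genuinely separates $C$ into a left part and a right part, and $c_1$ is the rightmost candidate weakly left of $p$ while $c_2$ is the leftmost candidate weakly right of $p$. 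The key structural claim to carry through the \textbf{Repeat} loop is the invariant: \emph{at the start of every iteration, $\Det$ is a set of consecutive candidates on the line containing $c_1$ and $c_2$, with $L$ exactly the members of $\Det$ on the left of $p$ and $R$ exactly those on the right; moreover $\Det$ is a \core{} for the sub-instance in the sense that every voter ranks all of $\Det$ above all of $C\setminus\Det$.} The consecutiveness plus the core property is what makes the \determine{} subroutine sound.

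The heart of the argument is verifying one iteration of the loop preserves this invariant, i.e. that each call to \determine{} places the newly-decided candidate $c_k$ on the correct side, and that the candidates added in one pass are exactly the immediate neighbors of the current $\Det$-interval so that consecutiveness is maintained. For soundness of \determine{}: we are given a voter $v_i$ with $c_k \GE{i} c_j$, $c_j\in\Det$, $c_k\notin\Det$. Since $\Det$ is currently a consecutive core set, $c_k$ lies outside the interval spanned by $\Det$, so $c_k$ is either entirely left of everything in $\Det$ or entirely right of it; in particular $c_k$ is on one definite side of $p$, and on the \emph{far} side relative to $v_i$ from $c_j$ only when $v_i$ is ``inside'' in a suitable sense. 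I would split on whether both pivots are closer to $v_i$ than $c_j$ is. If $c_1\GE{i}c_k$ and $c_2\GE{i}c_k$, then $v_i$ is closer to both $c_1$ and $c_2$ than to $c_k$; combined with $c_k\GE{i}c_j$, one shows by a short interval/triangle-inequality argument on the line that $c_k$ must be on the side of $p$ opposite to $c_j$ (the contradiction being that if $c_k$ were on $c_j$'s side, it would be farther from $v_i$ than $c_j$, since $c_j$ separates $v_i$'s "reachable region" appropriately — here is where consecutiveness of $\Det$ and $c_k$ lying beyond it is used). Otherwise, WLOG $c_1$ is the pivot closer to $v_i$ (i.e. $c_1 \GE{i} c_2$), and since $c_k$ is not nearer to $v_i$ than $c_1$ while being outside the $\Det$-interval, $c_k$ must lie on $c_1$'s side of $p$. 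A symmetric statement handles $c_2 \GE{i} c_1$. I would also argue the dual direction — that \emph{every} candidate immediately adjacent to the current $\Det$-interval does get caught by some $(v_i,c_j,c_k)$ triple in that pass (using voter $v_1$ or any voter whose ranking exposes the adjacency), so the loop cannot terminate prematurely with $\Det$ not yet a maximal consecutive core set; and that at termination $\Det$ spans an interval that no candidate lies beyond on either side would contradict some voter's ranking unless $\Det = $ the full core. That last point — pinning down exactly why the fixpoint of the loop is a \core{} and not merely \emph{some} consecutive set — is the step I expect to be the main obstacle, since it requires showing that if $c_k\notin\Det$ at termination then \emph{no} voter ever ranks $c_k$ above any $\Det$-candidate, which is the definition of core and needs the consecutiveness invariant together with the fact that $v_1$'s ranking alone already orders $\Det$ correctly.

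Given claim~(1), claim~(2) is comparatively short: by the separation property, in $v_1$'s preference order the candidates of $L$ appear in order of \emph{decreasing} distance to $v_1$, which on the line (with all of $L$ on one side of $p$ hence — after checking — on a consistent side relative to $v_1$, or handled by the two-nearest-candidate property) corresponds to left-to-right order, and symmetrically for $R$; \sortcandidates{} reads off exactly this, inserting $L$-candidates on the left and $R$-candidates on the right as they are scanned in $\GE{1}$, so $S_C$ is the true left-to-right order of $\Det$. For claim~(3), I would prove the comparison rule in \sortvoters{} is correct: for voters $v_i, v_j$ with first disagreement at rank $k$, candidate $\GE{i,k}$ is ranked above $\GE{j,k}$ by $v_i$ but below by $v_j$ (and they agree on the first $k-1$), and a single-crossing / single-peakedness argument on the line forces $v_i$ and $v_j$ to lie on the sides of the midpoint of $\{\GE{i,k},\GE{j,k}\}$ consistent with the stated rule — so whenever $\GE{i,k},\GE{j,k}\in\Det$ the order output is correct, and when they are undetermined the two voters have identical restricted preferences on $\Det$ exactly, matching the ``tie'' clause in the definition of order with respect to $\Det$. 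Wrapping up, I would note the only genuinely delicate points are (a) the case analysis inside \determine{} on the line, where one must be careful that "same side as the closer pivot" is unambiguous even when $c_1,c_2$ are on the same side of $v_1$, and (b) the termination/maximality argument for the core, flagged above as the crux.
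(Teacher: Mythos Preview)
Your overall decomposition mirrors the paper's --- prove an invariant about $\Det$ through \splitline{}, then handle \sortcandidates{} and \sortvoters{} --- but there are two genuine problems in the plan.

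First, the loop invariant you state is wrong. You claim that at the start of every iteration $\Det$ is already a \core{}, i.e., every voter ranks all of $\Det$ above all of $C\setminus\Det$. This cannot hold at intermediate stages: the inner \textbf{while} loop executes precisely because some voter $v_i$ ranks some $c_k\notin\Det$ above some $c_j\in\Det$; if $\Det$ were a core no such triple would exist and the algorithm would terminate immediately. The correct invariant (and what the paper proves) is only that $\Det$ is a \emph{consecutive} block of candidates containing $c_1,c_2$, with $L,R$ correctly partitioning it around $p$. Relatedly, you flag ``termination implies core'' as the main obstacle, but it is a one-line observation: if at termination some voter ranked some $c_k\notin\Det$ above some $c_j\in\Det$, the loop guard would be satisfied and the algorithm would not have terminated. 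No ``dual direction'' about adjacent candidates being caught is needed, and your description of what gets added in one pass (``exactly the immediate neighbors of the current $\Det$-interval'') is also incorrect --- the algorithm may add candidates several positions beyond the current interval.

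Second, the lemma that actually does the work, and is missing from your outline, is: for every voter $v_i$ and every $k$, the top $k$ candidates in $\GE{i}$ form a contiguous interval on the line. This single fact drives both the case analysis inside \determine{} (your ``short interval/triangle-inequality argument'' has to be exactly this, applied to the prefix of $\GE{i}$ ending at $c_k$ or at $c_j$) and the preservation of consecutiveness across iterations: letting $P_j$ be the prefix of $\GE{j}$ through $v_j$'s least-preferred member of the old $\Det$, each $P_j$ is an interval containing $\{c_1,c_2\}$, and the new $\Det$ equals $\bigcup_j P_j$, a union of pairwise-overlapping intervals, hence again an interval. Without this prefix-interval lemma stated and used explicitly, your consecutiveness-preservation step has no engine --- and that step, together with the \determine{} case analysis, is where the real work lies, not the core property at termination.
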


For the remainder of this section, all lemmas consider a specific line metric election instance. 

The following lemma demonstrates that a prefix of a voter's ordinal preference forms a consecutive subsequence of candidates.

\begin{lemma}
\label{lm:consecutive-prefix}
For any voter $v_i \in V$ and $1 \le k \le m$, the $k$ most preferred candidates of voter $v_i$ form a consecutive subsequence of candidates.
\end{lemma}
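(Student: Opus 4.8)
The statement is that for any voter $v_i$ on the line, the set of her $k$ most-preferred candidates is a contiguous block in the left-to-right order of the candidates. The natural approach is to argue by the geometry of the line: a voter ranks candidates by distance from her own position, so the preference order ``grows outward'' from $v_i$. I would set up coordinates, letting $x_i$ denote the position of $v_i$, and for any candidate $c$ let $x(c)$ be its position, so that $d(v_i,c) = |x(c)-x_i|$ and $c \GE{i} c'$ iff $|x(c)-x_i| \le |x(c')-x_i|$ (using $d \vartriangleright \elec$). The $k$ most preferred candidates of $v_i$ are then exactly the $k$ candidates whose positions are closest to $x_i$.

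The key step is to show this ``$k$ closest points to $x_i$'' set is an interval of the sorted-by-position sequence. I would prove the contrapositive-style claim directly: suppose $c_a, c_b$ are both among the top $k$ of $v_i$ and some candidate $c_m$ lies strictly between them on the line, i.e.\ $x(c_a) < x(c_m) < x(c_b)$ (or with positions equal, a degenerate case handled separately). Then $|x(c_m) - x_i| \le \max\{|x(c_a)-x_i|, |x(c_b)-x_i|\}$, because the point $x_i$ lies on one side of $c_m$ (or coincides with it), and on whichever side it lies, the farther of $c_a, c_b$ on that side is at least as far from $x_i$ as $c_m$ is. Concretely, if $x_i \le x(c_m)$ then $|x(c_m)-x_i| = x(c_m)-x_i < x(c_b)-x_i = |x(c_b)-x_i|$; if $x_i \ge x(c_m)$ then $|x(c_m)-x_i| = x_i - x(c_m) < x_i - x(c_a) = |x(c_a)-x_i|$. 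Either way $c_m$ is weakly preferred (in fact strictly, modulo ties) to one of $c_a, c_b$, hence $c_m$ is also among the top $k$. So no candidate strictly between two top-$k$ candidates can be outside the top $k$, which is exactly the statement that the top $k$ form a consecutive block.

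There is a minor technical wrinkle around ties --- candidates at the same position, or candidates equidistant from $v_i$ on opposite sides --- since then ``the $k$ most preferred'' depends on how $\GE{i}$ breaks ties and ``consecutive'' depends on how the left-to-right order is fixed. I would dispatch this by noting that one may perturb positions infinitesimally, or more cleanly by observing that the argument above only ever uses $\le$ inequalities, so any consistent tie-breaking keeps the top-$k$ set a union of position-classes that is still an interval in any sorted order consistent with the positions; I would state this carefully but not belabor it. I expect the tie-handling to be the only real friction; the geometric core (a nearest-neighbor set on the line is an interval) is straightforward once coordinates are in place.
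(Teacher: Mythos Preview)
Your proposal is correct and is essentially the same argument as the paper's proof: both show that any candidate lying between two top-$k$ candidates on the line must itself be at least as close to $v_i$ as the farther of the two, hence also top-$k$. The paper phrases this as a proof by contradiction (assuming a ``gap'' candidate $c_m$ and deriving $c_r \GE{i} c_m$ yet $c_m \GE{i} c_r$), while you give the direct case split on whether $x_i \le x(c_m)$ or $x_i \ge x(c_m)$; the underlying geometry is identical.
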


\begin{proof}
    Assume the condition does not hold for a voter $v_i$ and some $k$. 
    Since any single candidate forms a consecutive subsequence, we must have $k \ge 2$. 
    Let $A$ denote the $k$ most preferred candidates of $v_i$. By assumption, there exist two candidates $c_l, c_r \in A$ such that a candidate $c_m \in C \setminus A$ lies between them.  

    Assume without loss of generality that $c_l$ is to the left of $c_m$ and $c_r$ is to the right of $c_m$.
    Further, suppose $v_i$ is located to the left of $c_m$. Since $c_m \notin A$, we must have $c_r \GE{v_i} c_m$. However, since the candidates are arranged in a line, it follows that $c_m \GE{v_i} c_r$, which is a contradiction (illustrated in Figure \ref{fig:consecutive}).
    \begin{figure}[ht]
        \centering
        \begin{tikzpicture}
            \draw[thick] (-5, 0) -- (5, 0);
            
            \draw[blue, fill=blue] (-2, 0) circle (0.1) node[above] {$c_l$};
            \draw[blue, fill=blue] (3, 0) circle (0.1) node[above] {$c_r$};

            \draw[black, fill=black] (0, 0) circle (0.1) node[above] {$c_m$};
            
            \draw[red, thick] (-1.5, 0) node[below] {$v$};
            \draw[red, thick]  (-1.5, 0) node {$\times$};
        \end{tikzpicture}
        \caption{This figure is the illustration of the succession of nearest of any voter.}
        \label{fig:consecutive}
    \end{figure}
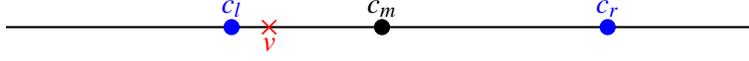
\end{proof}

According to Lemma \ref{lm:consecutive-prefix}, we have:
\begin{corollary}
\label{cl:c1-c2}
    $c_1$ and $c_2$ are consecutive.
\end{corollary}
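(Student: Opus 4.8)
The plan is to derive the corollary directly from Lemma~\ref{lm:consecutive-prefix} applied to the pivot voter $v_1$. By Definition~\ref{def:pivot-candidates}, $c_1$ and $c_2$ are the two nearest candidates to $v_1$, which means they are precisely the $2$ most preferred candidates of $v_1$ (modulo the tie-breaking convention that $c_1$ lies to the left of $c_2$, but this does not matter for the statement). Setting $k = 2$ and $v_i = v_1$ in Lemma~\ref{lm:consecutive-prefix}, the set $\{c_1, c_2\}$ forms a consecutive subsequence of candidates along the line. A consecutive subsequence of size $2$ is exactly a pair of consecutive candidates, so $c_1$ and $c_2$ are consecutive.

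Concretely, I would write: ``Apply Lemma~\ref{lm:consecutive-prefix} with the voter being the pivot voter $v_1$ and $k = 2$. Since $c_1$ and $c_2$ are by definition the two nearest candidates to $v_1$, they are the $2$ most preferred candidates of $v_1$. By the lemma, they form a consecutive subsequence of the candidates, i.e., there is no candidate located strictly between $c_1$ and $c_2$ on the line. Hence $c_1$ and $c_2$ are consecutive.''

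There is essentially no obstacle here — the corollary is an immediate specialization of the lemma, and the only thing to be slightly careful about is making explicit that ``the two nearest candidates to $v_1$'' coincides with ``the $2$ most preferred candidates of $v_1$'' under the consistency assumption $d \vartriangleright \elec$ (ties, if any, can be broken arbitrarily and do not affect which pair is chosen). This identification is already built into Definition~\ref{def:pivot-candidates}, so nothing further is needed. The proof is one or two sentences long.
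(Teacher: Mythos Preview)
Your proposal is correct and matches the paper's approach exactly: the paper simply states the corollary as an immediate consequence of Lemma~\ref{lm:consecutive-prefix}, which is precisely what you do by specializing to $v_1$ and $k=2$.
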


Next, recalling the definitions of determined and undetermined candidates (\ref{def:determine}), the following lemma formally proves that the function \determine{} in Algorithm \ref{alg:determine-one} correctly determines an undetermined candidate.

\begin{lemma}
\label{lm:extend}
    Assume that $\Det$ is a consecutive subset of determined candidates, including $c_1$ and $c_2$. If there exists a voter $v_i$ and candidates $c_j \in \Det$ and $c_k \notin \Det$ such that $c_k \GE{i} c_j$, then the function \determine{} correctly determines $c_k$.
\end{lemma}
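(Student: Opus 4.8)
The plan is to verify that each of the two branches of \determine{} places $c_k$ on the correct side of the pivot point $p$, using the assumption that $\Det$ is a consecutive block containing both pivot candidates $c_1 \in L$ and $c_2 \in R$. The key geometric fact I would invoke repeatedly is the one behind Lemma~\ref{lm:consecutive-prefix}: on a line, if a voter $v$ prefers $c$ to $c'$ then $v$ lies (weakly) on the same side of the midpoint of $\overline{cc'}$ as $c$; equivalently, comparing a voter's preference between two candidates tells us which side of their bisecting midpoint the voter is on. Since $p$ is exactly the midpoint of $\overline{c_1 c_2}$, the relation between $c_1 \GE{i} c_2$ (or its reverse) and the position of $v_i$ relative to $p$ is the crux.

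First I would handle the ``else'' branch, where it is \emph{not} the case that both $c_1 \GE{i} c_k$ and $c_2 \GE{i} c_k$; so $v_i$ ranks $c_k$ above at least one of $c_1, c_2$. Because $\Det$ is consecutive with $c_1$ its leftmost-relevant and $c_2$ its rightmost-relevant boundary (more precisely, $c_1$ and $c_2$ are consecutive by Corollary~\ref{cl:c1-c2}, and every determined candidate lies on the far side of one of them), the undetermined candidate $c_k$ lies either entirely to the left of $c_1$ or entirely to the right of $c_2$. I would argue that $v_i$'s ranking of $c_1$ versus $c_2$ pins down which: if $c_1 \GE{i} c_2$ then $v_i$ is on the $c_1$-side of $p$, and since $c_k$ is outside the block, $c_k$ being preferred to the farther of the two pivots forces $c_k$ to lie on $v_i$'s side, i.e.\ to the left, so $c_k \in L$; symmetrically $c_2 \GE{i} c_1$ gives $c_k \in R$. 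This matches the pseudocode. The little subtlety to nail down here is \emph{why} "$c_k$ preferred to at least one pivot" plus "$c_k$ outside the consecutive block" implies $c_k$ sits on the near side rather than the far side — this is where I expect to lean on consecutiveness of $\Det$ and a short case analysis on whether $c_k$ is left of $c_1$ or right of $c_2$, combined with the midpoint/bisector observation applied to $v_i$, $c_k$, and whichever pivot $c_k$ beats.

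Next the ``if'' branch: both $c_1 \GE{i} c_k$ and $c_2 \GE{i} c_k$, so $v_i$ is closer to both pivots than to $c_k$. Here I would show $c_k$ must lie on the \emph{opposite} side of the block from the determined reference candidate $c_j$. The idea: being closer to both endpoints $c_1$ and $c_2$ of the pivot segment than to $c_k$ means $v_i$ is ``between'' the pivots in the relevant sense (on the near side of both bisectors), so $v_i$ is, roughly, inside or near the $\Det$ block; then since $c_j \in \Det$ and $c_k \notin \Det$ with $c_k \GE{i} c_j$, $c_k$ must be reachable from $v_i$ without crossing the whole block — forcing $c_j$ and $c_k$ to be on opposite ends, hence opposite sides of $p$. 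I would formalize this by contradiction: if $c_k$ were on the same side as $c_j$, then because $\Det$ is consecutive and contains a pivot on that side, that pivot would lie strictly between $v_i$ and $c_k$ or $c_k$ would be farther than that pivot, contradicting $c_k \GE{i}$ (that pivot) — wait, we need the consecutiveness to place the pivot between $v_i$ and $c_k$, which again uses that $v_i$ is on the near side of both bisectors.

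The main obstacle I anticipate is making the ``$v_i$ is effectively inside the block'' intuition rigorous in the ``if'' branch without hand-waving: I need to convert ``$v_i$ closer to both $c_1$ and $c_2$ than to $c_k$'' into a clean statement about $v_i$'s position relative to the consecutive run $\Det$, and then use the consecutiveness of $\Det$ (so that no gap hides $c_k$ on the same side as $c_j$) to force the opposite-side conclusion. Once that positional lemma is stated cleanly, both branches follow by short case checks on ``$c_k$ left of $c_1$'' vs.\ ``$c_k$ right of $c_2$,'' so I would isolate that positional claim as the one real step and treat the rest as bookkeeping.
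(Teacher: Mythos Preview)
Your plan for the ``else'' branch (Case~2 in the paper) is essentially correct and close to the paper's argument; you correctly identify that only the adjacency of $c_1,c_2$ (Corollary~\ref{cl:c1-c2}) is needed there, not the consecutiveness of $\Det$, and your bisector reasoning goes through: from $c_1 \GE{i} c_2$ you get $v_i$ left of $p$, and $c_k \GE{i} c_2$ then rules out $c_k$ lying to the right of $c_2$. The paper phrases this via Lemma~\ref{lm:consecutive-prefix} applied to a prefix containing $c_1,c_k$ but not $c_2$, but the content is the same.

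Your ``if'' branch (Case~1), however, has a genuine gap. The claim that ``being closer to both $c_1$ and $c_2$ than to $c_k$ means $v_i$ is between the pivots / inside the $\Det$ block'' is false: take $c_1=0$, $c_2=2$, $v_i=5$, $c_k=20$; both pivots beat $c_k$ yet $v_i$ is well outside $[c_1,c_2]$. More seriously, the contradiction you sketch---``that pivot would lie between $v_i$ and $c_k$ \dots\ contradicting $c_k \GE{i}$ (that pivot)''---targets the wrong statement: in this branch we \emph{already have} both pivots $\GE{i} c_k$, so re-deriving it contradicts nothing. The argument must involve $c_j$. The fix is short once you see it: assume for contradiction $c_j$ and $c_k$ are on the same side of $p$, say both in $R$. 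Since $\Det$ is consecutive, contains $c_1 \in L$ and $c_j$, but not $c_k$, the point $c_k$ lies to the right of \emph{every} element of $\Det$; in particular $c_1 < c_j < c_k$ on the line. Now $c_1 \GE{i} c_k$ means $v_i$ is left of the midpoint of $\overline{c_1 c_k}$, hence also left of the midpoint of $\overline{c_j c_k}$ (which is further right), giving $c_j \GE{i} c_k$---contradicting the hypothesis $c_k \GE{i} c_j$. The paper obtains the same contradiction by applying Lemma~\ref{lm:consecutive-prefix} to the prefix of $\GE{i}$ ending at $c_k$ (which contains $c_1,c_2$ but not $c_j$) to force $c_j$ to the right of $c_k$, then uses consecutiveness of $\Det$ to force $c_k$ to the right of $c_j$. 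Either route works; yours just needs to reroute the contradiction through $c_j$ rather than through a pivot.
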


\begin{proof}
Let us consider two cases regarding the positioning of $c_1$, $c_2$, and $c_k$ in the ordinal preference of $v_i$.

Case 1: Both $c_1$ and $c_2$ are positioned before $c_k$ in the ordinal preference of $v_i$.

Without loss of generality, we can assume that $c_j$ is in $R$. We use proof by contradiction to show that $c_k$ is in $L$. Assume that $c_i$ is in $R$. Consider the prefix of the ordinal preference of $v_i$ ending with $c_k$. By Lemma \ref{lm:consecutive-prefix}, they must form a consecutive set of candidates. Therefore, $c_j$ must be on the right side of $c_k$.  However, since $\Det$ consists of consecutive candidates, $c_k$ would necessarily be positioned on the right side of $c_j$. 
Because of the contradiction we can conclude $c_k$ is in $L$ (Illustrated in Figure \ref{fig:case1})

\begin{figure}[ht]
        \centering
        \begin{tikzpicture}
            \draw[thick] (-5, 0) node[left] {\redcross} -- (5, 0);
            \draw[thick] (0, -0.1) -- (0, 0.1) -- (0,0) node[above] {$p$};
            
            \draw[blue, fill=blue] (-1, 0) circle (0.1) node[above] {$c_1$};
            \draw[blue, fill=blue] (1, 0) circle (0.1) node[above] {$c_2$};
            \draw[blue, fill=blue] (4, 0) circle (0.1) node[above] {$c_j$};

            \draw[red] (-1, -0.2) -- (-1, -0.5) -- (4, -0.5) -- (4, -0.2);
            \draw[red] (1, -0.2) -- (1, -0.5);

            \draw[Green] (-1, 0.5) -- (-1, 0.8) -- (2.5, 0.8) -- (2.5, 0.5);
            \draw[Green] (1, 0.5) -- (1, 0.8);

            \draw[black, fill=black] (2.5, 0) circle (0.1) node[above] {$c_k$};
        \end{tikzpicture}
        \\[0.5cm]
        \begin{tikzpicture}
            \draw[thick] (-5, 0) node[left] {\greencheck} -- (5, 0);
            \draw[thick] (0, -0.1) -- (0, 0.1) -- (0,0) node[above] {$p$};
            
            \draw[blue, fill=blue] (-1, 0) circle (0.1) node[above] {$c_1$};
            \draw[blue, fill=blue] (1, 0) circle (0.1) node[above] {$c_2$};
            \draw[blue, fill=blue] (4, 0) circle (0.1) node[above] {$c_j$};

            \draw[red] (-1, -0.2) -- (-1, -0.5) -- (4, -0.5) -- (4, -0.2);
            \draw[red] (1, -0.2) -- (1, -0.5);

            \draw[Green] (1, 0.5) -- (1, 0.8) -- (-2.5, 0.8) -- (-2.5, 0.5);
            \draw[Green] (-1, 0.5) -- (-1, 0.8);

            \draw[black, fill=black] (-2.5, 0) circle (0.1) node[above] {$c_k$};
        \end{tikzpicture}
        \caption{For a voter $v_i$, if we have $c_1 \GE{i} c_k$, $c_2 \GE{i} c_k$, 
        and $c_k \GE{i} c_k$, then $c_k$ and $c_j$ cannot both be in the same side of $p$. The figure above illustrates this contradiction, while the one below shows that they can be on opposite sides.}
        \label{fig:case1}
\end{figure}
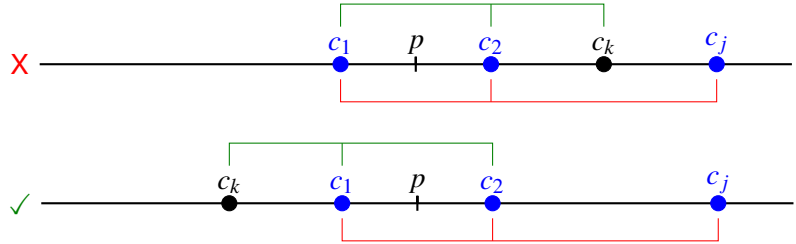

Case 2: At least one of $c_1$ and $c_2$ is positioned after $c_k$ in the ordinal preference of $v_i$.

Without loss of generality, assume that $c_1$ precedes $c_2$ in the ordinal preference of $v_i$. Consider the prefix of the ordinal preference of $v_i$ that contains $c_1$ and $c_k$ but not $c_2$. By Lemma~\ref{lm:consecutive-prefix}, this prefix must form a consecutive sequence of candidates, meaning $c_2$ cannot lie between $c_1$ and $c_k$.
If $c_k$ were in $R$, $c_2$ would lie between $c_1$ and $c_k$, as $c_1$ and $c_2$ are consecutive.
This contradiction implies that $c_k$ is in $L$ (Illustrated in Figure \ref{fig:case2}).

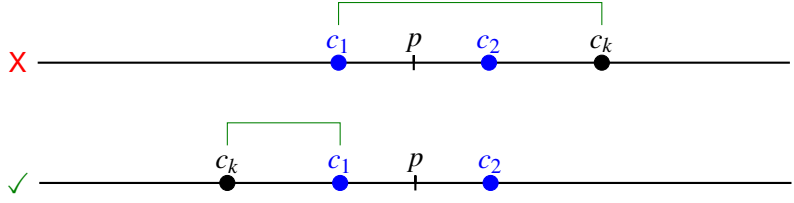
\begin{figure}[ht]
        \centering
        \begin{tikzpicture}
            \draw[thick] (-5, 0) node[left] {\redcross} -- (5, 0);
            \draw[thick] (0, -0.1) -- (0, 0.1) -- (0,0) node[above] {$p$};
            
            \draw[blue, fill=blue] (-1, 0) circle (0.1) node[above] {$c_1$};
            \draw[blue, fill=blue] (1, 0) circle (0.1) node[above] {$c_2$};

            \draw[Green] (-1, 0.5) -- (-1, 0.8) -- (2.5, 0.8) -- (2.5, 0.5);

            \draw[black, fill=black] (2.5, 0) circle (0.1) node[above] {$c_k$};
        \end{tikzpicture}
        \\[0.5cm]
        \begin{tikzpicture}
            \draw[thick] (-5, 0) node[left] {\greencheck} -- (5, 0);
            \draw[thick] (0, -0.1) -- (0, 0.1) -- (0,0) node[above] {$p$};
            
            \draw[blue, fill=blue] (-1, 0) circle (0.1) node[above] {$c_1$};
            \draw[blue, fill=blue] (1, 0) circle (0.1) node[above] {$c_2$};

            \draw[Green] (-1, 0.5) -- (-1, 0.8) -- (-2.5, 0.8) -- (-2.5, 0.5);

            \draw[black, fill=black] (-2.5, 0) circle (0.1) node[above] {$c_k$};
        \end{tikzpicture}
        \caption{For a voter $v_i$, if we have $c_1 \GE{i} c_2$ 
        and $c_k \GE{i} c_2$, then $c_k$ is in $L$. The figure above illustrates that if $c_k$ were in $R$, then $c_2$ would be in the consecutive subsequence of $c_1$ and $c_k$ which is a contradiction.
        On the other hand, the one below shows that $c_k$ must be in $L$.}
        \label{fig:case2}
\end{figure}
\end{proof}

As explained in Lemma \ref{lm:extend}, determining a new candidate works if the current set of $\Det$ is consecutive. Therefore, in the following Lemma we show that in the beginineg of each iteration, $\Det$ is a consecutive subsequence of candidates.

\begin{lemma}
\label{lm:consecutive-determined}
    In Algorithm \ref{alg:split-line}, in the beginning of each iteration (Line \ref{line:iteration}), all determined candidates, denoted as $\Det$, form a consecutive subsequence of all candidates.
\end{lemma}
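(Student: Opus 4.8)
The plan is to prove Lemma~\ref{lm:consecutive-determined} by induction on the iteration number of the \texttt{Repeat} loop in Algorithm~\ref{alg:split-line}. For the base case, at the start of the first iteration we have $\Det = \{c_1, c_2\}$, and by Corollary~\ref{cl:c1-c2} the candidates $c_1$ and $c_2$ are consecutive, so $\Det$ is a consecutive subsequence. For the inductive step, I assume that $\Det$ is consecutive at the beginning of some iteration and show it remains consecutive at the beginning of the next iteration, i.e., after the \texttt{while} loop finishes and $\Det$ is updated to $\Det \cup \New$ on Line~\ref{line:update-det}.

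The key structural fact to extract is that whenever \determine{} adds a new candidate $c_k$ to $L$ or $R$, it places $c_k$ adjacent to the current determined block. Concretely, I would argue the following: at the moment $c_k$ is determined via some voter $v_i$ and some $c_j \in \Det$ with $c_k \GE{i} c_j$, Lemma~\ref{lm:extend} tells us $c_k$ is correctly classified into $L$ or $R$; moreover, using Lemma~\ref{lm:consecutive-prefix} on the prefix of $v_i$'s preference ending at $c_k$ (which contains $c_j$ and is consecutive), together with the fact that $\Det$ is consecutive and sits between $c_1$ and $c_2$, one concludes that $c_k$ must lie immediately adjacent to the current block $L \cup R$ — on the left end if $c_k$ is put in $L$, on the right end if $c_k$ is put in $R$. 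The point is that there can be no ``gap'': if $c_k$ were separated from the block by some undetermined candidate $c_\ell$, then $c_\ell$ would lie between $c_j$ and $c_k$, hence $c_\ell$ would be in the consecutive prefix of $v_i$ ending at $c_k$, forcing $c_\ell$ to also be more preferred than $c_j$, and this more-preferred status would propagate — but one must be careful that $\New$ accumulates several candidates within one iteration, so the ``block'' being extended is $L \cup R$ as it grows during the \texttt{while} loop, not the frozen $\Det$ from the iteration's start. I would handle this by an inner induction on the order in which candidates are added to $\New$ during a single iteration, maintaining the invariant that $L \cup R$ is always consecutive; the outer $\Det$ staying consecutive then follows since at the end $\Det = L \cup R$.

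A subtlety I would address explicitly: \determine{} is called with arguments that determine membership based on the pivot candidates $c_1, c_2$ and the side of $c_j$, not by literally checking adjacency, so I need to verify that the side it assigns is consistent with geometric adjacency to the block. This is exactly where I would invoke the two cases in the proof of Lemma~\ref{lm:extend} — in both cases the conclusion is that $c_k$ lands on a specific side of the pivot point $p$, and since $L$ already occupies a consecutive stretch ending at $p$ from the left and $R$ a consecutive stretch starting at $p$ from the right, adding $c_k$ to the correct side keeps each of $L$ and $R$ consecutive and keeps $L \cup R$ consecutive across $p$. I would also note that $c_1 \in L$ and $c_2 \in R$ remain the ``innermost'' elements straddling $p$ throughout, so the hypothesis of Lemma~\ref{lm:extend} (that $\Det$ includes $c_1, c_2$ and is consecutive) is preserved for every subsequent call within the iteration.

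The main obstacle I anticipate is the bookkeeping of the inner \texttt{while} loop: since candidates are added to $L$/$R$ one at a time but merged into $\Det$ only at the end, I must be precise about which set plays the role of ``the current determined block'' when each call to \determine{} is analyzed, and argue that Lemma~\ref{lm:extend}'s guarantee still applies even though its statement is phrased in terms of $\Det$. I expect the cleanest fix is to observe that the correctness argument of Lemma~\ref{lm:extend} only ever uses that the block is a consecutive set containing $c_1$ and $c_2$ — a property I maintain for $L \cup R$ throughout the \texttt{while} loop by the inner induction — so the same reasoning applies verbatim with $L \cup R$ in place of $\Det$. Everything else is routine: consecutiveness is preserved under appending an adjacent element, and the loop terminates because $C$ is finite, so $\Det$ is consecutive at the start of every iteration.
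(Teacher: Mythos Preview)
Your inner-induction invariant---that $L \cup R$ remains consecutive after \emph{each} call to \determine{} inside the \texttt{while} loop---is false as stated. The \texttt{while} loop's condition tests $c_k \notin \Det$ and $c_j \in \Det$ against the \emph{frozen} set $\Det$ from the start of the iteration, and it does not constrain which eligible $c_k$ is processed first. Concretely, suppose the line has candidates $a,b,c_1,c_2$ in this left-to-right order with $\Det=\{c_1,c_2\}$, and some voter $v_i$ sits far to the left so that $a \GE{i} b \GE{i} c_1 \GE{i} c_2$. The triple $(v_i,c_1,a)$ satisfies the loop condition, so the algorithm may process $a$ first; after that call $L\cup R=\{a,c_1,c_2\}$, which is not consecutive (the gap $b$ is still undetermined). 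Your adjacency claim therefore fails, and with it the inner induction. (There is also a slip: since $c_k \GE{i} c_j$, the prefix of $v_i$'s preference ending at $c_k$ does \emph{not} contain $c_j$; you presumably meant the prefix ending at $c_j$.)

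The paper sidesteps this entirely by not tracking the intermediate state of $L\cup R$. Instead, for each voter $v_j$ it lets $\last_j$ be $v_j$'s least-preferred candidate in the old $\Det$, and observes that by the end of the \texttt{while} loop every candidate preferred by $v_j$ to $\last_j$ has been added; this set $\prefix_j$ is a preference-prefix and hence consecutive by Lemma~\ref{lm:consecutive-prefix}, and it contains $\{c_1,c_2\}$. Then $\Det_i=\bigcup_j \prefix_j$ is a union of intervals all sharing $\{c_1,c_2\}$, hence itself an interval. Your ``propagation'' remark is in fact the seed of this argument: the right move is to reason about the \emph{final} $\Det\cup\New$ as a union of overlapping preference-prefixes, rather than to maintain a step-by-step adjacency invariant that the algorithm does not guarantee.
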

\begin{proof}
    Let $\Det_i$ be the set of determined candidates in the beginning of $i$-th iteration. 
    We prove that for any $i$, $\Det_i$ is a consecutive subsequence. $\Det_1 = \{c_1, c_2\}$ satisfies the condition. (Corollary \ref{cl:c1-c2}).
    
    In the $(i-1)$-th iteration ($i \ge 2$), For any voter $v_j$, let $\last_j$ be the least preferred candidate of $v_j$ in $\Det_{i-1}$. Then by Lemma \ref{lm:extend}, by the end of iteration $i-1$ any candidate $c_k$ such that $c_k \GE{v} \last_j$ is determined. We call this set of candidates $\prefix_j$; note that $\Det_{i-1} \subseteq \prefix_j$.  As $\prefix_j$ is a prefix of the ordinal preference of $v_j$, by Lemma \ref{lm:consecutive-prefix}, $\prefix_j$ is a consecutive subsequence of candidates. Also $\prefix_j$ contains $c_1$ and $c_2$ because $\{c_1, c_2\} \subseteq \Det_{i-1} \subseteq \prefix_j$.
   
    Since $\Det_i = \Det_{i-1} \cup \New$ (Line \ref{line:update-det}) and $\New = \bigcup_{j} (\prefix_j \setminus \Det_{i-1})$, we have $\Det_i = \bigcup_{j} P_j$.
    Observe however that a union of intersecting intervals is always an interval; this can be seen from an easy inductive argument. Since $\prefix_j$ all contain $\{c_1, c_2\}$, it follows that $\Det_i$ is a consecutive subsequence as well. 
\end{proof}

Next, we show that $\Det$ is a \core{} subset of candidates.

\begin{lemma}
    \label{lm:det-cand-opt}
    By the end of the algorithm, $\Det$ is \core{}. i.e.
    For any voter $v_i$, candidate $c_j\in\Det$, $c_k\notin\Det$, we have $c_j \GE{i} c_k$.
\end{lemma}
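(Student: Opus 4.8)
The plan is to argue that the algorithm's main loop cannot terminate until $\Det$ has the defining property of a \core{}, i.e.\ that termination (the condition $\New = \emptyset$) already witnesses \core{}-ness. Suppose, for contradiction, that when the algorithm halts, $\Det$ is not a \core{}. Then there is a voter $v_i$, a candidate $c_j \in \Det$, and a candidate $c_k \notin \Det$ with $c_k \GE{i} c_j$. But this is precisely the trigger condition in the \textbf{while} loop of Algorithm \ref{alg:split-line}: the existence of such a triple $(v_i, c_j, c_k)$ means the loop would not have exited, so $c_k$ would have been added to $L$ or $R$ via \determine{} and placed into $\New$, contradicting $\New = \emptyset$ at termination. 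Hence no such triple exists, which is exactly the statement that $\Det$ is a \core{}.

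To make this rigorous I would first invoke Lemma \ref{lm:consecutive-determined} to guarantee that at the start of every iteration $\Det$ is a consecutive subsequence containing $c_1, c_2$; this is the hypothesis needed so that each call to \determine{} in that iteration is valid (Lemma \ref{lm:extend}), and it ensures the inner \textbf{while} loop is well-defined — every candidate it processes is correctly classified into $L$ or $R$. Next I would note that the algorithm indeed terminates: $\Det$ grows monotonically across iterations and is bounded by $C$, so after finitely many iterations $\New = \emptyset$; and within an iteration the \textbf{while} loop adds a new candidate to $\New$ each time it runs (since $c_k \notin \Det$ and $c_k$ is freshly determined), so it too terminates. Then the contradiction argument above applies to the final state.

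One subtlety to address cleanly: the trigger condition in the \textbf{while} loop is phrased as ``$c_k \GE{i} c_j \land c_k \notin \Det \land c_j \in \Det$'', which is identical to the negation of the \core{} property, so the match is exact — but I should make sure the quantifier scope is right, namely that ``$\Det$ is not a \core{}'' unfolds to ``$\exists v_i, \exists c_j \in \Det, \exists c_k \in C\setminus\Det: c_k \GE{i} c_j$'', which it does by the definition of \Core{}. I do not expect a genuine obstacle here; the only care needed is the bookkeeping that the while-loop condition being false at termination is logically equivalent to the \core{} condition holding, together with citing Lemmas \ref{lm:extend} and \ref{lm:consecutive-determined} to justify that every candidate the loop touched was placed on the correct side (so that $\Det = L \cup R$ is a genuine left/right partition, not just an abstract set). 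The whole proof is essentially a two-line observation once the supporting lemmas are in place.
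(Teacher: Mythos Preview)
Your proposal is correct and follows essentially the same approach as the paper: both argue by contradiction that if $\Det$ were not a \core{} at termination, the witnessing triple $(v_i, c_j, c_k)$ would satisfy the \textbf{while}-loop trigger, so the algorithm could not have halted. Your version is more carefully fleshed out (explicitly invoking Lemmas~\ref{lm:extend} and~\ref{lm:consecutive-determined} and addressing termination), but the core idea is identical to the paper's two-line proof.
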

\begin{proof}
    Assume otherwise that $c_k \GE{i} c_i$. Then, we can determine $c_k$ using $c_j$ and voter $v_i$ according to Lemma \ref{lm:extend}, and consequently, Algorithm \ref{alg:split-line} cannot have terminated at this point. Therefore, $c_j \GE{i} c_i$ for any voter $v_i$.
\end{proof}

We now demonstrate that the function \sortcandidates{} in Algorithm \ref{alg:sort-candidates} returns the determined candidates in sorted order from left to right.

\begin{lemma}
\label{lm:candidates-order}
    Function \sortcandidates{} correctly sorts determined candidates, denoted as $\Det = L \cup R$ from left to right.
\end{lemma}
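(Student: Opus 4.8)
The plan is to compute explicitly the sequence that \sortcandidates{} returns and check it equals the left-to-right positional order of $\Det = L\cup R$. Reading Algorithm~\ref{alg:sort-candidates}: it scans $v_1$'s preference list from the most preferred candidate to the least; each scanned candidate lying in $L$ is prepended to $S_C$ and each lying in $R$ is appended (undetermined candidates are skipped). Hence in $S_C$ the candidates of $L$ appear in the \emph{reverse} of $v_1$'s preference order restricted to $L$, the candidates of $R$ appear in $v_1$'s preference order restricted to $R$, and — since prepending always puts an element before everything currently in $S_C$ and appending puts it after everything — every element of $L$ precedes every element of $R$. On the other hand, by Lemma~\ref{lm:extend} (applied over the iterations of \splitline) $L$ and $R$ consist of determined candidates lying respectively to the left and to the right of the pivot point $p$, so the true left-to-right order of $\Det$ is the left-to-right order of $L$ followed by that of $R$. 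Thus it suffices to prove: (i) $v_1$'s preference order restricted to $L$ lists these candidates from right to left by position, and (ii) $v_1$'s preference order restricted to $R$ lists them from left to right by position.

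To prove (i) and (ii) I would first pin down $c_1$ and $c_2$. By Corollary~\ref{cl:c1-c2} they are consecutive candidates, so no candidate lies strictly between them; since $p$ is their midpoint (Definition~\ref{def:pivot}), this forces $c_1$ to be the rightmost candidate lying left of $p$ and $c_2$ to be the leftmost candidate lying right of $p$. List the candidates lying at or to the left of $c_1$, from right to left, as $c_1, a_1, a_2, \dots$ (so $a_\ell$ is the $\ell$-th candidate to the left of $c_1$), and those lying at or to the right of $c_2$, from left to right, as $c_2, b_1, b_2, \dots$; by the previous sentence $L\subseteq\{c_1,a_1,a_2,\dots\}$ and $R\subseteq\{c_2,b_1,b_2,\dots\}$. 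So it is enough to show that $v_1$'s preference order restricted to $\{c_1,a_1,a_2,\dots\}$ is exactly $c_1\GE{1}a_1\GE{1}a_2\GE{1}\cdots$ and restricted to $\{c_2,b_1,b_2,\dots\}$ is exactly $c_2\GE{1}b_1\GE{1}b_2\GE{1}\cdots$: restricting these two chains further to $L$ and $R$ then yields (i) and (ii) and finishes the proof.

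To establish those two chains I would use Lemma~\ref{lm:consecutive-prefix}. Write $w_1\GE{1}w_2\GE{1}\cdots$ for $v_1$'s preference order and $B_k=\{w_1,\dots,w_k\}$; by that lemma each $B_k$ is a consecutive block of candidates. Since $c_1,c_2$ are $v_1$'s two nearest candidates, $B_1\in\{\{c_1\},\{c_2\}\}$ and $B_2=\{c_1,c_2\}$. For $k\ge 2$, since $B_k$ and $B_{k+1}=B_k\cup\{w_{k+1}\}$ are both consecutive, $w_{k+1}$ must be the candidate immediately to the left of the leftmost candidate of $B_k$ or the candidate immediately to the right of the rightmost candidate of $B_k$. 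A routine induction on $k$ then shows $B_k$ always has the shape $\{a_i,\dots,a_1,c_1,c_2,b_1,\dots,b_j\}$ for some $i,j\ge 0$ (empty $a$-part when $i=0$, empty $b$-part when $j=0$). Consequently, reading $w_1,w_2,\dots$, the $a$'s are encountered in the order $a_1,a_2,\dots$ and the $b$'s in the order $b_1,b_2,\dots$, while $c_1,c_2\in B_2$ precede all of them; this is exactly the pair of chains required above.

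The step that needs the most care is the induction in the last paragraph: one must track that $B_1$ is $\{c_1\}$ or $\{c_2\}$ according to which is strictly closer to $v_1$, that the block may grow at either end at each step, and that the inductive hypothesis carries the full ``$\{a_i,\dots,c_1,c_2,\dots,b_j\}$'' shape rather than just ``consecutive''. Everything else — the description of \sortcandidates{}'s output, the reduction to (i)/(ii), and locating $c_1,c_2$ with Corollary~\ref{cl:c1-c2} — is routine bookkeeping. (Alternatively, (i)/(ii) can be shown directly: for $c$ to the left of $c'$ in $L$ one checks that $v_1$ lies to the right of the midpoint of $c$ and $c'$, using that $c_1,c_2$ are $v_1$'s two closest candidates; but this splits into a few cases for the location of $v_1$ and is no shorter.)
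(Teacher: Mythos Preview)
Your proof is correct, but it proceeds along a different route than the paper's. The paper argues (i) and (ii) by a direct distance computation: for any $c_i,c_j\in L$ it verifies that $d(c_i,c_1)-d(c_j,c_1)=d(c_i,v_1)-d(c_j,v_1)$ by splitting into the two cases ``$c_1$ is to the right of $v_1$'' and ``$c_1$ is to the left of $v_1$'', so that $v_1$'s preference on $L$ coincides with the distance-to-$c_1$ order, which is the right-to-left positional order; the argument for $R$ is symmetric. You instead leverage Lemma~\ref{lm:consecutive-prefix} structurally: since each prefix $B_k$ of $v_1$'s preference list is a positional interval and $B_2=\{c_1,c_2\}$, the interval can only grow one step at either end, forcing the $a_\ell$'s (resp.\ $b_\ell$'s) to appear in $\GE{1}$ in the order $a_1,a_2,\dots$ (resp.\ $b_1,b_2,\dots$). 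Your approach reuses an already-proved lemma and avoids the geometric case split entirely, at the cost of a small induction; the paper's approach is more self-contained but requires tracking the position of $v_1$ relative to $c_1$. The alternative you sketch in your final parenthetical --- checking that $v_1$ lies to the right of the midpoint of any two $L$-candidates --- is essentially the spirit of the paper's argument.
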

\begin{proof}
Recall that sets $L$ and $R$ contain determined candidates (Definition \ref{def:determine}). Now, consider the ordinal preference of $v_1$. For any pair of candidates $c_i, c_j \in L$, we show that $d(c_i, c_1) < d(c_j, c_1)$ if $c_i \GE{1} c_j$.

There are two cases:

\begin{itemize}
    \item $c_1$ is on the right side of $v_1$; we have:
    \begin{align*}
        d(c_i, c_1) - d(c_j, c_1) &= d(c_i, v_1) - d(c_1, v_1) - d(c_j, v_1) + d(c_1, v_1) \\
        &= d(c_i, v_1) - d(c_j, v_1)
    \end{align*}
    
    \item $c_1$ is on the left side of $v_1$; we have:
    \begin{align*}
        d(c_i, c_1) - d(c_j, c_1) &= d(c_i, v_1) + d(c_1, v_1) - d(c_j, v_1) - d(c_1, v_1) \\
        &= d(c_i, v_1) - d(c_j, v_1)
    \end{align*}
\end{itemize}

Therefore, we have:
\[
d(c_i, c_1) < d(c_j, c_1) \Leftrightarrow d(c_i, v_1) < d(c_j, v_1)
\]

Based on the definition of ordinal preference, we have $d(c_i, v_1) < d(c_j, v_1)$ if $c_i \GE{1} c_j$. Consequently $d(c_i, c_1) < d(c_j, c_1)$ if $c_i \GE{1} c_j$.

Similarly, for any pair of candidates $c_i, c_j \in R$, we can show that $d(c_i, c_2) < d(c_j, c_2)$ if  $c_i \GE{1} c_j$.

The function \sortcandidates{} processes candidates according to $\GE{1}$, adding them to the left of $S_C$ if they are in $L$ and to the right otherwise. Therefore, $S_C$ contains candidates in the correct order from left to right.
\end{proof}

Next, we show the correctness of function \sortvoters{}.

\begin{lemma}
\label{lm:voters-order}
    function \sortvoters{} 
    finds the order of voters with respect to determined candidates.
\end{lemma}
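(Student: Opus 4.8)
The plan is to reduce the whole statement to one \emph{single-crossing} fact about pairs of voters, and then feed that fact into the \core{} property of $\Det$ (Lemma~\ref{lm:det-cand-opt}) together with the correctness of \sortcandidates{} (Lemma~\ref{lm:candidates-order}). The \emph{key claim} is: fix voters $v_i,v_j$ with $\GE{i}\neq\GE{j}$, let $k$ be the smallest index at which $\GE{i}$ and $\GE{j}$ differ, and set $a=\GE{i,k}$, $b=\GE{j,k}$. Since $v_i$ and $v_j$ have the same first $k-1$ candidates, the top-$k$ candidates of $v_i$ are exactly those $k-1$ candidates together with $a$; hence $b$ is not among the top $k$ candidates of $v_i$, so $a\GE{i}b$, and symmetrically $b\GE{j}a$. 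On the line these inequalities read $d(v_i,a)<d(v_i,b)$ and $d(v_j,b)<d(v_j,a)$, i.e. $v_i$ lies strictly on $a$'s side and $v_j$ strictly on $b$'s side of the midpoint of $a$ and $b$. Therefore $v_i$ is strictly to the left of $v_j$ if and only if $a$ is strictly to the left of $b$. (One could instead invoke Lemma~\ref{lm:consecutive-prefix} to note that $a$ and $b$ are the two candidates immediately flanking the common prefix block, but the bisector argument already suffices and uses nothing beyond $a\GE{i}b$ and $b\GE{j}a$.)

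Next I would do a case analysis on $k$, using that $\Det$ is a \core{}: in every voter's preference order the first $|\Det|$ entries are precisely the candidates of $\Det$, so $\GE{i,k}$ is determined exactly when $k\le|\Det|$, and then so is $\GE{j,k}$ — thus the two candidates \sortvoters{} compares are determined together or undetermined together. If $k\le|\Det|$, then $a,b\in\Det$, and \sortvoters{} puts $v_i$ before $v_j$ iff $a$ precedes $b$ in $S_C$; by Lemma~\ref{lm:candidates-order} this happens iff $a$ is to the left of $b$ on the line, which by the key claim happens iff $v_i$ is to the left of $v_j$. Moreover, in this case $\GE{i}$ and $\GE{j}$ already disagree within $\Det$ (namely at position $k$), so the target order must not tie $v_i$ and $v_j$, matching \sortvoters{}. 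If instead $k>|\Det|$, then $a,b\notin\Det$, \sortvoters{} declares a tie, and indeed $v_i$ and $v_j$ share their first $|\Det|$ preference entries and hence induce the same order on $\Det$, so the order with respect to $\Det$ also ties them; the leftover possibility $\GE{i}=\GE{j}$ is a tie on both sides trivially.

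Finally I would verify that these pairwise verdicts assemble into a legitimate ``order of voters with respect to $\Det$'': two voters are tied exactly when they rank the points of $S_C$ identically, and the set of positions on the line inducing a fixed ranking of a fixed finite set of points is an interval, so the tie-classes are disjoint intervals that are naturally ordered from left to right; across classes the relation coincides with the left-to-right order of $\mathbb{R}$, which is transitive, and the key claim certifies that \sortvoters{} returns precisely this order. Hence $S_V$ is the order of voters with respect to $\Det$.

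The only genuine obstacle is the key claim — recognizing that the \emph{first} point of disagreement between two preference lists already forces the two voters onto opposite sides of a single midpoint. Once that is in hand, the \core{} property is exactly what lines up \sortvoters{}'s determined/undetermined dichotomy with the strict/tie dichotomy of the desired order, and everything else is bookkeeping built on Lemmas~\ref{lm:det-cand-opt} and~\ref{lm:candidates-order}.
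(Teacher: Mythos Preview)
Your proposal is correct and follows essentially the same approach as the paper: both arguments compare voters lexicographically on their preference lists restricted to the determined candidates, invoking Lemma~\ref{lm:det-cand-opt} to ensure those candidates form a prefix and Lemma~\ref{lm:candidates-order} to translate candidate positions in $S_C$ into left/right on the line. Your writeup is in fact considerably more detailed than the paper's own proof, which simply asserts that the lexicographically smaller $\sorted$ sequence corresponds to the voter farther left without spelling out the bisector argument you give for the key claim.
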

\begin{proof}
    By Lemma \ref{lm:candidates-order}, we have the order of a determined candidates. Assume that for candidate $c_i$, $\order_{c_i}$ is the index of $c_i$ from left to right. As \splitline{} determines a prefix for each candidate, for a voter $v_i$, we let sequence $\sorted_j = \order_{\GE{i,1}}, \order_{\GE{i,2}}, \dots \order_{\GE{i,k}}$ where $k$ is the number of determined candidates. Now, we can compare voters' $\sorted$ sequences lexicographically. A voter with smaller $\sorted$ is on the left side of a voter with larger $\sorted$. It is important to highlight that this approach may result in ties among some voters in the ordering.
\end{proof}

By Lemmas \ref{lm:det-cand-opt}, \ref{lm:candidates-order}, and \ref{lm:voters-order}, we have Theorem \ref{thm:order} proved.

\section{Line metric upper bounds}
\label{sc:upper-bounds}

\subsection{Sum-cost objective}
\label{sc:sum-upper-bound}

In this section, we study the sum objective, where the cost of a set of candidates is defined as the sum of the distance of voters to the candidate set. We consider the line metric, where voters and candidates are located on the real line, and utilize the orderings obtained in Section \ref{order} to present upper bounds on distortion. We start by showing that it is possible to choose a set of three candidates, such that an optimal candidate is always chosen. We then improve this by showing that we can always omit one of the three selected candidates, resulting in a shortlist of size two that includes an optimal candidate. We note that this not only shows a distortion of two based on our distortion metric, but it also ensures that the actual optimal single candidate is included in the selected list. 

First, we identify the optimal single candidate when candidates and voters are positioned on the real line. In the following proofs, we often refer to the median voter. When the number of voters is even, we can consider either of the middle two candidates as the median.

\begin{lemma}
    \label{lm:real-line-opt}
    When considering the sum of distances objective and the candidates and voters are located on the real line, one of the candidates directly to the right or left of the median voter will be an optimal candidate.
\end{lemma}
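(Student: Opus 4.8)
The plan is to reduce the claim to the standard one-dimensional fact that the sum of distances $\sum_{v} |x - v|$ as a function of the facility location $x$ is minimized at a median of the voter positions, and then translate this from ``the optimal location on the line'' to ``the optimal candidate among $C$.'' First I would set up coordinates: let the voters sit at positions $y_1 \le y_2 \le \cdots \le y_n$ on $\mathbb{R}$ and consider the function $g(x) = \sum_{i=1}^n |x - y_i|$. A routine argument (examining the left/right derivative, or pairing up $y_i$ with $y_{n+1-i}$) shows $g$ is convex, piecewise linear, non-increasing on $(-\infty, y_{\lceil n/2 \rceil}]$ and non-decreasing on $[y_{\lfloor n/2 \rfloor + 1}, \infty)$; in particular $g$ attains its minimum exactly on the median interval, and more importantly $g$ is monotone on either side of it. I would state this as a small self-contained observation rather than citing it, since it is elementary.

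Next I would use this monotonicity to locate the optimal \emph{candidate}. Let $v^\star$ denote the median voter (either of the two middle voters if $n$ is even), at position $y^\star$. The candidates all lie on the line; partition $C$ into those at positions $\le y^\star$ and those at positions $> y^\star$ (ties can be broken arbitrarily since such a candidate coincides with the median voter and is trivially optimal). Among the candidates weakly to the left of $y^\star$, by the monotonicity of $g$ on $(-\infty, y^\star]$, the one with the \emph{largest} position minimizes $g$; by definition this is the candidate immediately to the left of (or at) the median voter. Symmetrically, among candidates to the right, the one immediately to the right of the median voter minimizes $g$ over that side. Since every candidate is on one side or the other, the overall minimizer of $\cost_s(\{c\})= g(\text{pos}(c))$ over $c \in C$ is one of these two ``neighbors'' of the median voter, which is exactly the statement.

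The one point that needs a little care — and which I expect to be the main (minor) obstacle — is the even-$n$ case and the handling of the phrase ``directly to the right or left of the median voter'': I need the statement to remain true regardless of which of the two middle voters is designated the median. The clean way is to observe that $g$ is \emph{constant} on the whole closed interval between the two middle voters, and non-increasing to its left and non-decreasing to its right; hence for \emph{any} point $p$ in that interval, the best candidate at position $\le p$ is the rightmost such candidate and the best at position $\ge p$ is the leftmost such, and one of these two realizes the global minimum. So picking either middle voter as ``the median voter'' is fine, and in fact a candidate lying strictly between the two middle voters (if one exists) is itself optimal and is simultaneously the ``left neighbor'' of the right median and the ``right neighbor'' of the left median, so no contradiction arises. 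A second small subtlety is that distinct candidates may share a position, or a candidate may coincide with the median voter; both are harmless, since we only ever compare values of $g$ at candidate positions and ties can be resolved either way. With these edge cases dispatched, the lemma follows immediately from convexity of $g$.
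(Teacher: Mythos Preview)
Your proposal is correct and follows essentially the same approach as the paper: both arguments establish that the total-distance function is non-increasing as one moves toward the median from the left and non-decreasing as one moves away to the right, and then conclude that the best candidate on each side of the median voter is the one adjacent to it. The paper phrases this via a direct counting-of-voters ``moving'' argument rather than invoking convexity of $g$ explicitly, and it is somewhat briefer about the even-$n$ and tie edge cases you handle, but the substance is the same.
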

\begin{proof}
Assume, without loss of generality, that there exists a candidate $c_l$ who is the first candidate to the left of the median voter $v$. Consider any candidate $c$ to the left of $c_l$. As we move from $c$ to $c_l$, all voters to the right of $v$ (including $v$) will experience a \textit{reduction} in their distance to the candidates by the difference between $c$ and $c_l$, while the distance for other voters can increase by at most the same amount. Since $v$ is the median voter, there are at least as many voters to the right of $v$ (including $v$) as there are to its left (excluding $v$). Therefore, the total cost cannot increase when moving from $c$ to $c_l$, implying that the cost of $c_l$ is \textit{at most} the cost of $c$. Similarly, if a candidate $c_r$ directly to the right of $v$ exists, any candidate farther to the right will have a distance that is at least as great. Hence, either $c_l$ or $c_r$ will be an optimal candidate that minimizes the total cost.
\end{proof}

Next, we show that we can select three candidates to ensure an optimal candidate, which is immediately to the left or right of the median voter is selected. 

\begin{lemma}
\label{lm:three-cand}
There exists a voting rule for $3$-committee election on line metric such that the resulting committee includes an optimal candidate.
\end{lemma}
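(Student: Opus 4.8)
The plan is to use the machinery of Section~\ref{order}: run \sortcandiatesandvoters{} to obtain the core set $\Det$, the left-to-right order $S_C$ of its candidates, and the order $S_V$ of the voters with respect to $\Det$. By Lemma~\ref{lm:real-line-opt}, some candidate immediately to the left or right of the median voter is optimal, so it suffices to produce a committee of three that is guaranteed to contain both of the candidates flanking the median voter (or at least whichever of them exists). The key point I would establish first is that an optimal candidate must lie in $\Det$: since $\Det$ is a \core{} (Lemma~\ref{lm:det-cand-opt}), every voter prefers every candidate of $\Det$ to every candidate outside it, and a standard exchange/averaging argument on the line then shows that replacing any candidate outside $\Det$ by its nearest neighbor inside $\Det$ cannot increase the sum-cost. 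Hence the optimal single candidate can be taken to be one of the two candidates of $\Det$ adjacent to the median voter.

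Next I would locate the median voter within the ordering $S_V$. Even though $S_V$ may contain ties — blocks of voters that agree on all of $\Det$ — this does not matter: a tied block of voters all have the same relative position with respect to $\Det$, so the notion of ``the candidate in $\Det$ immediately to the left/right of the median voter'' is well defined from $S_C$ and $S_V$ alone. Concretely, the median voter sits between two consecutive candidates $c_L$ and $c_R$ of $S_C$ (or at/outside an end of $S_C$), and by Lemma~\ref{lm:real-line-opt} together with the previous paragraph, one of $c_L, c_R$ is an optimal single candidate. So the rule outputs $\{c_L, c_R\}$ — that is already a committee of size two containing an optimum; to match the stated size~3 one may pad arbitrarily, or (anticipating the next lemma) keep a third candidate as slack. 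I would phrase the rule as: return the (at most) two candidates of $S_C$ bracketing the position of the median voter in $S_V$, plus one extra candidate if needed to reach size three.

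The main obstacle, and the step I would spend the most care on, is the edge-case analysis around ties and around the ends of $S_C$. Specifically: (i) the median voter's position relative to $\Det$ is determined only up to a tied block, so I must argue that all candidates of $\Det$ that could be ``adjacent to the median voter'' across that whole block are captured — this needs the observation that tied voters share an identical preference over $\Det$, hence share the same bracketing pair $(c_L,c_R)$; (ii) when $\Det$ does not extend past the median voter on one side, only one of $c_L,c_R$ exists, and I must check Lemma~\ref{lm:real-line-opt} still pins the optimum to the one that does exist (it does, since the non-existent neighbor would lie outside $\Det$ and hence is dominated); and (iii) I must make sure the reduction ``optimum lies in $\Det$'' is airtight even when several candidates coincide in location or when the median voter coincides with a candidate. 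Once these cases are handled, correctness of the three-candidate rule is immediate from Lemmas~\ref{lm:real-line-opt}, \ref{lm:det-cand-opt}, \ref{lm:candidates-order}, and \ref{lm:voters-order}.
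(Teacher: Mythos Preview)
Your proposal has a genuine gap at its central step: you assert that ``the candidate in $\Det$ immediately to the left/right of the median voter is well defined from $S_C$ and $S_V$ alone,'' and that the rule can therefore output the bracketing pair $\{c_L,c_R\}$ directly. This is false. Knowing the median voter's full ordinal preference over $\Det$ determines only her \emph{closest} candidate $c$ in $\Det$; it does not determine on which side of $c$ she lies. Concretely, take $\Det=\{c_1,c_2,c_3\}$ at positions $0,10,11$ and a single (hence median) voter $v$. If $v$ is at $6$, the bracketing pair is $(c_1,c_2)$; if $v$ is at $10.4$, it is $(c_2,c_3)$. In both placements $v$'s preference over $\Det$ is $c_2\succ c_3\succ c_1$, so $S_C$ and $S_V$ are identical and cannot distinguish the two bracketing pairs. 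The same example shows your tie-handling claim fails too: two voters with identical preferences over $\Det$ can sit on opposite sides of their common top choice, hence have different bracketing pairs.

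The paper's proof resolves exactly this ambiguity, and this is why three candidates are needed rather than two. It takes the median voter's top choice $c$ (which \emph{is} determined by the ordinal data and is one of the two candidates flanking the median) and then, using $S_C$, takes the neighbors $c_l,c_r$ of $c$ on both sides. Whichever side of $c$ the median actually lies on, the other flanking candidate is among $\{c_l,c_r\}$, so by Lemma~\ref{lm:real-line-opt} the set $\{c_l,c,c_r\}$ contains an optimum. The third candidate is therefore essential, not padding; the reduction to a $2$-committee requires a separate counting argument (comparing $|V_1|$ and $|V_3|$) carried out in the next theorem. Your use of Lemma~\ref{lm:det-cand-opt} to restrict attention to $\Det$ and your observation that tied voters share the same top choice are both correct and match the paper; the missing piece is precisely the side-of-$c$ ambiguity.
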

\begin{proof}
First, we can determine the order of a subset of candidates $\Det$ and all voters based on the algorithms in Section \ref{order}. 
Additionally, by Lemma \ref{lm:det-cand-opt}, any candidate $c \not\in \Det$ cannot be an optimal candidate, since $c$ will be farther from all voters than any candidate in $\Det$. 
Thus, we can ignore the candidates outside $\Det$ and proceed with the remaining candidates.
    
Now, we consider the median voter $v$ in the ordering of voters. 
We note that there might be a tie between multiple voters for this position, but we will only utilize the voter's closest candidate, which will be the same for all tied voters. 
    
After identifying the median voter, we consider $v$'s closest candidate $c$. 
This candidate will be either the one immediately to the left or the right of the median voter $v$. 
Next, we use the ordering of the candidates to find the candidates $c_l$ to the left of $c$ and $c_r$ to the right of $c$. 
We claim that the set $\{c,c_l,c_r\}$ is guaranteed to include an optimal candidate: 
if $c$ is to the left of the median voter, then $c_r$ will be the first candidate to the right of the median, and one of $c$ or $c_r$ will be an optimal candidate by Lemma \ref{lm:real-line-opt}. 
Similarly, if $c$ is to the right of the median, one of $c$ or $c_l$ will be an optimal candidate. 
Therefore, the set $\{c,c_l,c_r\}$ will contain an optimal candidate. 

Finally, we note that candidates $c_l$ and $c_r$ might not exist. 
In such cases, the candidates to the left and right of $v$ are still selected, if they exist.

\end{proof}

Next, we show that we can omit either $c_l$ or $c_r$, choosing only two candidates, while still including an optimal candidate. 

\begin{theorem}
    There exists a voting rule for 2-committee election on line metric such that the resulting committee includes an optimal candidate.
\end{theorem}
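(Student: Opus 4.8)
The plan is to sharpen Lemma~\ref{lm:three-cand}. Running the procedure of that lemma we obtain, via Theorem~\ref{thm:order}, the recovered left-to-right order $S_C$ of the core candidates, the favourite $c$ of the median voter, and the candidates $c_l$ and $c_r$ lying immediately to the left and right of $c$ in $S_C$; Lemma~\ref{lm:three-cand} guarantees that some optimal single candidate lies in $\{c,c_l,c_r\}$. I will show that one of $c_l,c_r$ can always be discarded. If $c_l$ does not exist then $\{c,c_r\}$ already has size at most two and the optimum lies in it; likewise if $c_r$ does not exist; and if neither exists then $\Det=\{c\}$ and $c$ is optimal by Lemma~\ref{lm:det-cand-opt}. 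So from now on assume both $c_l$ and $c_r$ exist.

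The rule is to compare two quantities read off directly from the preference profile: let $\ell=|\{i: c_l \GE{i} c\}|$ and $\rho=|\{i: c_r \GE{i} c\}|$, and output $\{c,c_r\}$ if $\rho\ge\ell$, and $\{c,c_l\}$ otherwise. Correctness reduces to the following claim: if $\rho\ge\ell$ then $\cost_s(\{c_l\})\ge\cost_s(\{c\})$ for \emph{every} metric consistent with the profile (and symmetrically $\ell\ge\rho$ implies $\cost_s(\{c_r\})\ge\cost_s(\{c\})$). Granting this claim, suppose $\rho\ge\ell$; then among $\{c,c_l,c_r\}$ the minimum cost is attained by $c$ or by $c_r$, so by Lemma~\ref{lm:three-cand} the optimal single candidate lies in $\{c,c_r\}$, which is exactly what we output; the case $\ell>\rho$ is symmetric.

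What remains is to prove the claim, and I expect this to be the only real step. I would fix a consistent metric, write $p_l<p_c<p_r$ for the positions of $c_l,c,c_r$ and $y_i$ for the position of voter $v_i$, so that $\cost_s(\{c_l\})-\cost_s(\{c\})=\sum_i\bigl(|y_i-p_l|-|y_i-p_c|\bigr)$, and partition the voters into those with $y_i< \mathrm{mid}(c_l,c)$ (equivalently, those with $c_l\GE{i}c$; there are $\ell$ of them), those with $\mathrm{mid}(c_l,c)\le y_i< p_c$, and those with $y_i\ge p_c$. A one-line case check shows that each term in the first group is at least $-(p_c-p_l)$, each term in the second group is nonnegative, and each term in the third group equals exactly $p_c-p_l$; hence $\cost_s(\{c_l\})-\cost_s(\{c\})\ge (p_c-p_l)\bigl(|\{i:y_i\ge p_c\}|-\ell\bigr)$. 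Finally, any voter with $c_r\GE{i}c$ has $y_i\ge\mathrm{mid}(c,c_r)>p_c$, so $|\{i:y_i\ge p_c\}|\ge\rho$, and the bound becomes $\ge (p_c-p_l)(\rho-\ell)\ge 0$ when $\rho\ge\ell$; the symmetric statement is identical with left and right interchanged. The only subtlety — that $c$, $c_l$, $c_r$ are unambiguously defined even when several voters tie for the median slot — is already handled in Lemma~\ref{lm:three-cand} (all tied voters share the favourite $c$), and the displayed inequality never again refers to the voter ordering, so ties among voters are harmless. The genuine content of the proof is thus just the identification of the right profile-visible comparison ($\ell$ versus $\rho$) together with the termwise bound above.
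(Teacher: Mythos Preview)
Your proof is correct and follows essentially the same approach as the paper: both compare the number of voters siding with $c_l$ against $c$ to the number siding with $c_r$ against $c$ (your $\ell,\rho$ are exactly the paper's $|V_1|,|V_3|$), and discard $c_l$ or $c_r$ accordingly after showing the middle candidate $c$ dominates the discarded one via a linear bound in $\rho-\ell$. The only cosmetic difference is that the paper derives the inequality $\cost_s(c)\le \cost_s(c_l)+(\ell-\rho)\,d(c_l,c)$ through a triangle-inequality chain over the partition $V_1,V_2,V_3$, whereas you obtain the equivalent bound by a direct position-based case split; the decision rule and the resulting inequality are identical.
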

\begin{proof}
By Lemma \ref{lm:three-cand}, we can select three candidates that include an optimal candidate. 
Let $c_1$, $c_2$, and $c_3$ denote these candidates from left to right. 
We then define the sets $V_1$, $V_2$, and $V_3$ as the sets of voters who prefer the corresponding candidate to the other two, as illustrated in Figure \ref{fig:2cands}. 
Thus, we can state that
\begin{align*}
    \cost_s(c_2) &= \sum_{v \in V} d(v,c_2) \\
    &= \sum_{v \in V_1} d(v,c_2) + \sum_{v \in V_2} d(v,c_2) + \sum_{v \in V_3} d(v,c_2)\\
    &\leq \sum_{v \in V_1} d(v,c_2) + \sum_{v \in V_2} d(v,c_2) + \sum_{v \in V_3} \left(d(v,c_3) + d(c_2,c_3) \right) \tag{Triangle Inequality}\\
    &\leq \sum_{v \in V_1} d(v,c_2) + \sum_{v \in V_2} d(v,c_3) + \sum_{v \in V_3} \left(d(v,c_3) + d(c_2,c_3)\right) \tag{$\forall_{v\in V_2}\ d(v,c_2) \leq d(v,c_3)$}\\
    &= \sum_{v \in V_1} \left(d(v,c_3) - d(c_3,c_2)\right) + \sum_{v \in V_2} d(v,c_3) + \sum_{v \in V_3} \left(d(v,c_3) + d(c_2,c_3)\right)\tag{$\forall_{v\in V_1}\ d(v,c_2) = d(v,c_3) - d(c_2,c_3)$}\\
    &=\cost_s(c_3) - \lvert V_1 \rvert \cdot d(c_2,c_3) + \lvert V_3 \rvert \cdot d(c_2,c_3) \\
    &= \cost_s(c_3) + (\lvert V_3 \rvert - \lvert V_1 \rvert) \cdot d(c_2,c_3).
\end{align*}
Similarly, we can show that $\cost_s(c_2) \leq \cost_s(c_1) + (\lvert V_1 \rvert - \lvert V_3 \rvert) \cdot d(c_2,c_1)$. Now, depending on whether $\lvert V_1 \rvert < \lvert V_3 \rvert$ or not, we can see that either 
 $\cost_s(c_2) \leq \cost_s(c_1)$ or $\cost_s(c_2) \leq \cost_s(c_3)$. So we can disregard one of $c_1$ and $c_3$ based on this and still find the optimum, as $c_2$ is guaranteed to be a better candidate than the discarded one. 
 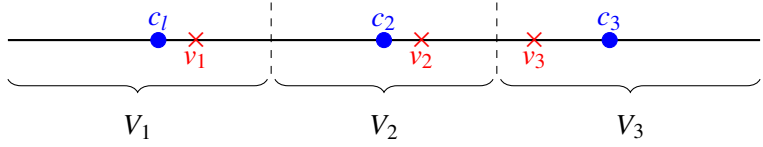
\begin{figure}[t]
        \centering
        \begin{tikzpicture}
            \draw[thick] (-5, 0) -- (5, 0);
            
            \draw[blue, fill=blue] (-3, 0) circle (0.1) node[above] {$c_l$};
            \draw[blue, fill=blue] (0, 0) circle (0.1) node[above] {$c_2$};
            \draw[blue, fill=blue] (3, 0) circle (0.1) node[above] {$c_3$};
            
            \draw[red, thick] (-2.5, 0) node[below] {$v_1$};
            \draw[red, thick]  (-2.5, 0) node {$\times$};
            \draw[red, thick] (.5, 0) node[below] {$v_2$};
            \draw[red, thick]  (.5, 0) node {$\times$};
            \draw[red, thick] (2, 0) node[below] {$v_3$};
            \draw[red, thick]  (2, 0) node {$\times$};

            \draw[dashed]  (-1.5, 0.5) -- (-1.5, -0.5) ;
            \draw[dashed]  (1.5, 0.5) -- (1.5, -0.5) ;

            \draw [decorate,decoration={brace,amplitude=5pt,mirror,raise=3ex}]
                (-5,0) -- (-1.55,0) node[midway,yshift=-3em]{$V_1$};
            \draw [decorate,decoration={brace,amplitude=5pt,mirror,raise=3ex}]
                (-1.45,0) -- (1.45,0) node[midway,yshift=-3em]{$V_2$};
            \draw [decorate,decoration={brace,amplitude=5pt,mirror,raise=3ex}]
                 (1.55,0) -- (5,0) node[midway,yshift=-3em]{$V_3$};
        \end{tikzpicture}
        \caption{A figure illustrating three candidates $c_1$, $c_2$, and $c_3$ along with the possible locations of voters closest to each candidate, $V_1$, $V_2$, and $V_3$. Voters $v_1$, $v_2$ and $v_3$ show examples of voters in each set.}
        \label{fig:2cands}
    \end{figure}
\end{proof}

\begin{corollary}
    There exists a voting rule for $2$-committee election on line metric with $1$-distortion of $1$.
\end{corollary}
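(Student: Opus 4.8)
The plan is to obtain this as an immediate consequence of the theorem immediately preceding it, which supplies a voting rule $f$ for the $2$-committee election on the line whose output is always guaranteed to contain an optimal single candidate. The only two facts needed beyond that theorem are (i) that the utilitarian sum-cost can only decrease when the committee is enlarged, and (ii) that the $1$-distortion of any rule is at least $1$.

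For the upper bound $\dist(f)\le 1$, I would fix an arbitrary election instance $\elec$ and a consistent metric $d \vartriangleright \elec$, write $I=(\elec,d)$ and $S=f(\elec)$. By the theorem, $\copt \in S$, so for every voter $v$ we have $d(v,S)=\min_{c\in S} d(v,c)\le d(v,\copt)$. Summing over $v\in V$ yields $\cost_s(S,I)\le \sum_{v\in V} d(v,\copt)=\cost_s(\{\copt\},I)=\opt$. Since this holds for every $\elec$ and every $d\vartriangleright\elec$, taking the supremum in the definition of $\dist$ gives $\dist(f)\le 1$.

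For the matching lower bound I would simply observe that on any instance in which the committee is forced to realize exactly the optimal single-candidate cost (e.g.\ an instance with a single candidate, or with two co-located candidates), the ratio $\cost_s(f(\elec))/\opt$ equals $1$; hence $\dist(f)\ge 1$, and therefore $\dist(f)=1$.

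There is essentially no obstacle here: all the real content lives in the preceding theorem. The only point worth stating explicitly in the write-up is the direction of the inequality — adjoining $\copt$ (together with the at most one other selected candidate) to the committee never hurts the utilitarian objective, so the mere inclusion of an optimal candidate already certifies $1$-distortion $1$, and no separate argument about the second selected candidate is required.
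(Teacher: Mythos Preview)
Your proposal is correct and takes exactly the approach the paper intends: the corollary is stated without proof in the paper, as an immediate consequence of the preceding theorem, and you have simply spelled out the (trivial) monotonicity step that $\copt\in S$ forces $\cost_s(S)\le\opt$. Your added remark about the lower bound $\dist(f)\ge 1$ is fine, though the paper does not even bother to mention it.
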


 Finally, we show that for a general metric, it is possible to achieve a $1$-distortion of $1+\frac{2}{m-1}$ when choosing $k=m-1$ candidates. 
\begin{theorem}
    \label{thm:up-m-1}
    There exists a voting rule choosing $m-1$ out of $m$ candidates achieving a $1$-distortion of $1+\frac{2}{m-1}$ for the sum-cost objective.
\end{theorem}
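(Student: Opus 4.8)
The plan is to pick the committee $S = C \setminus \{c_{\text{opt}}'\}$ for a cleverly chosen candidate to omit, and argue that dropping one non-optimal candidate changes the sum-cost by only a controlled amount. Concretely, I would first run a deterministic $3$-distortion single-winner rule (e.g.\ the one from \cite{gkatzelis2020resolving}) to obtain a candidate $c$ with $\cost_s(\{c\}) \le 3\,\opt$. The committee will then be $S = C \setminus \{c'\}$ where $c'$ is the candidate \emph{farthest} (in the sense of being least preferred / least useful) — but more carefully, $c'$ should be chosen as the candidate whose removal is cheapest, i.e.\ $c' = \arg\min_{c'' \ne c}\bigl(\cost_s(C) - \cost_s(C \setminus \{c''\})\bigr)$ if this were computable; since it is not (we do not know distances), the real choice is $c' = $ the candidate ranked last by the output $c$ of the single-winner rule, or any fixed non-$c$ candidate. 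Let me instead describe the bound that works.

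The key inequality I would establish: for the committee $S = C \setminus \{c'\}$ with any $c' \ne c_{\text{opt}}$ (where $c_{\text{opt}}$ is the true optimal single candidate, which lies in $S$), we trivially have $\cost_s(S, I) \le \cost_s(\{c_{\text{opt}}\}, I) = \opt$, which would give $1$-distortion $1$ — but that is false in general metrics (Theorem on the $2$-D lower bound rules it out), so the subtlety is that the algorithm does \emph{not} know $c_{\text{opt}}$, and must commit to which candidate to drop using only ordinal information. So the plan is: the algorithm outputs $S = C \setminus \{c^\star\}$ where $c^\star$ is chosen so that it is \emph{guaranteed not to be the unique optimum}, or, when every candidate could be optimal, chosen to be "dominated" enough that discarding it costs little. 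The natural averaging argument: sum over all $m$ choices of which candidate $c''$ to drop. We have $\sum_{c'' \in C} \cost_s(C \setminus \{c''\}, I) \le \sum_{c'' \in C}\bigl(\opt + [\text{penalty if } c'' = c_{\text{opt}}]\bigr)$. More usefully: for $c'' \ne c_{\text{opt}}$, $\cost_s(C \setminus\{c''\}) \le \cost_s(\{c_{\text{opt}}\}) = \opt$. For $c'' = c_{\text{opt}}$, $\cost_s(C \setminus \{c_{\text{opt}}\}) \le \cost_s(\{c\})$ for the $3$-distortion winner $c$, so $\le 3\opt$. Hence $\min_{c''} \cost_s(C \setminus \{c''\}) \le \opt$, but again we cannot identify the minimizer.

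So the actual mechanism must be: run the single-winner $3$-distortion rule to get $c$; output $S = C \setminus \{c'\}$ where $c'$ is the candidate that $c$'s supporters (or a majority) rank worst — the point being that such a $c'$ satisfies $\cost_s(\{c'\}) \ge \cost_s(\{c\})$ is \emph{not} what we need; rather we need that removing $c'$ from the full set is cheap. I would argue: let $c'$ be \emph{any} candidate with $c' \ne c$. Then every voter $v$ whose nearest candidate in $C$ is not $c'$ is unaffected by removing $c'$; a voter $v$ whose nearest candidate \emph{is} $c'$ now routes to its second-nearest, and since $c \in S$, $d(v, S) \le d(v, c)$. Summing, $\cost_s(S) \le \sum_v \min(d(v,C\setminus\{c'\})) \le \cost_s(C) + \sum_{v : \text{nn}(v) = c'} (d(v,c) - d(v,c'))$. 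This does not obviously give $(1 + 2/(m-1))\opt$. The clean route, and the one I would pursue, is the pigeonhole/averaging over the $m-1$ "non-winner" candidates: since $\cost_s(C\setminus\{c''\}) \le \opt$ for every $c'' \ne c_{\text{opt}}$ and there are $m - 1$ such choices, the \emph{average} over $c'' \in C \setminus \{c\}$ of $\cost_s(C \setminus \{c''\})$ is at most $\frac{1}{m-1}\bigl((m-2)\opt + 3\opt\bigr) = \bigl(1 + \tfrac{2}{m-1}\bigr)\opt$ — wait, $c_{\text{opt}}$ might equal $c$, in which case all $m-1$ terms are $\le \opt$. In either case the average is $\le (1 + 2/(m-1))\opt$, so there \emph{exists} $c'' \ne c$ with $\cost_s(C \setminus \{c''\}) \le (1 + 2/(m-1))\opt$; but we still must pick it ordinally.

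The main obstacle — and the heart of the proof — is precisely this identification step: converting the existential averaging bound into a deterministic ordinal rule. I expect the resolution is that the mechanism does \emph{not} try to identify the best candidate to drop; instead it observes that the bound $\cost_s(C \setminus \{c''\}) \le (1+2/(m-1))\opt$ must hold for the \emph{particular} $c''$ that the rule selects by a symmetry or worst-case argument, OR the rule outputs the committee that omits the candidate ranked last by a $3$-distortion single winner and one shows directly, via the triangle inequality and the defining property of the $3$-distortion rule (every voter can be "charged" appropriately), that rerouting the voters who lose $c''$ costs at most $\frac{2}{m-1}\opt$ in aggregate because those voters' detour is bounded by terms telescoping against $\opt$. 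I would structure the write-up as: (1) state the mechanism (run single-winner $3$-distortion rule, drop a specifically-chosen candidate); (2) prove the averaging bound $\frac{1}{m-1}\sum_{c'' \ne c}\cost_s(C\setminus\{c''\}) \le (1 + \tfrac{2}{m-1})\opt$ using $\sum_{c''\ne c}\cost_s(C\setminus\{c''\}) \le \sum_{c'' \ne c}\sum_v d(v, C \setminus \{c''\})$ and bounding each inner sum by $\opt$ except possibly the $c_{\text{opt}}$ term by $3\opt$; (3) invoke a minimality/tie-breaking argument to extract a concrete $c''$ achieving the average, which is the step I'd need to nail down carefully and which I flag as the crux.
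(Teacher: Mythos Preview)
Your averaging argument is correct as far as it goes --- $\frac{1}{m-1}\sum_{c'' \ne c}\cost_s(C\setminus\{c''\}) \le (1+\tfrac{2}{m-1})\opt$ is a true inequality --- but the gap you flag as ``the crux'' is real and is not resolved by anything in the proposal. There is no symmetry or tie-breaking trick that lets you ordinally pick a $c''$ achieving the average: the lower-bound construction in Theorem~\ref{thm:LB-sum-all} is designed precisely so that the adversary can make \emph{whichever} candidate you drop be $c_{\text{opt}}$. So the strategy ``run a $3$-distortion rule, then drop some $c'' \ne c$'' cannot be completed along the lines you sketch.

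The paper's mechanism is different and sidesteps identification entirely: drop the candidate that is the \emph{top choice of the fewest voters} (the plurality loser). The analysis does not try to keep $c_{\text{opt}}$ in the committee; it accepts that $c_{\text{opt}}$ may be the dropped candidate, but then exploits that in this case $c_{\text{opt}}$ is the top choice of at most $n/m$ voters. Every other voter still has their top choice in $S$ and contributes at most their $\opt$-share. For the $\le n/m$ voters $v$ with $\first_v = c_{\text{opt}}$, reroute through the voter $v'$ closest to $c_{\text{opt}}$ among those with $\first_{v'} \ne c_{\text{opt}}$: by the triangle inequality $d(v,S) \le d(v,c_{\text{opt}}) + 2\,d(c_{\text{opt}},v')$. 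Since there are at least $\tfrac{m-1}{m}n$ voters with $\first \ne c_{\text{opt}}$, each at distance $\ge d(c_{\text{opt}},v')$ from $c_{\text{opt}}$, one gets $\tfrac{n}{m}\,d(c_{\text{opt}},v') \le \tfrac{1}{m-1}\opt$, and the total rerouting overhead is at most $\tfrac{2}{m-1}\opt$.

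The idea you are missing, then, is to stop trying to avoid dropping $c_{\text{opt}}$ and instead drop the candidate whose removal provably affects the fewest voters, bounding that damage directly against $\opt$. No black-box $3$-distortion rule is needed.
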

\begin{proof}
    For each voter $v$, let $\first_v$ be their closest candidate. Then, we claim that the voting rule that chooses the $m-1$ candidates appearing most frequently in $\first$ achieves the desired distortion. For a given instance, let $C$ be the set of candidates selected by this algorithm and $c_{\text{opt}}$ be the optimal single candidate. If $c_{\text{opt}}\in C$, then we get a $1$-distortion of $1$ and we are done. Otherwise, $C$ includes every candidate except for $c_{\text{opt}}$. Now, we can bound the optimal cost $\opt$ in this instance as 
\begin{align*}
    \opt &= \sum_{i\in[n]} d(\copt,v) \\&= \sum_{\substack{v\in V\\ \first_v=\copt}} d(\copt,v) + \sum_{\substack{v\in V\\ \first_v\neq \copt}} d(\copt,v)\\
    &\geq \sum_{\substack{v\in V\\ \first_v=\copt}} d(\copt,v) + \sum_{\substack{v\in V\\ \first_v\neq \copt}} d(C,v). \tag{$\first_v \in C$ if $\first_v\neq o$}
\end{align*}
Let $v'$ be the voter closest to $\copt$ such that $\first_v \neq o$. Then, we can use this to state that 
\begin{align*}
\opt &\geq \sum_{\substack{v\in V\\ \first_v\neq \copt}} d(\copt,v) \\
&\geq \sum_{\substack{v\in V\\ \first_v\neq \copt}} d(\copt,v') \\
&\geq \frac{m-1}{m}\cdot n \cdot d(\copt,v') \tag{$\first_v=\copt$ for at most $\frac{n}{m}$ voters based on choice of $C$}
\end{align*}
and therefore 
\begin{equation} \label{lm:mid-m-1}
\frac{n}{m} d(\copt,v') \leq \frac{1}{m-1}\opt.
\end{equation}
Now, we can bound the cost of $C$ as follows:
\begin{align*}
    \cost_m(C) &= \sum_{v\in V} d(C,v)\\
    &= \sum_{\substack{v\in V\\ \first_v=\copt}} d(C,v) + \sum_{\substack{v\in V\\ \first_v\neq \copt}} d(C,v)\\
    &\leq \sum_{\substack{v\in V\\ \first_v=\copt}} (d(\copt,v) + d(\copt,v') + d(C,v')) + \sum_{\substack{v\in V\\ \first_v\neq \copt}} d(C,v) \tag{Triangle inequality}\\
    &\leq \sum_{\substack{v\in V\\ \first_v=\copt}} (d(\copt,v) + d(\copt,v') + d(\copt,v')) + \sum_{\substack{v\in V\\ \first_v\neq o}} d(C,v) \tag{$\first_{v'}\ne \copt$ and $\first_{v'}\in C$}\\
    &\leq \opt + \sum_{\substack{v\in V\\ \first_v=\copt}} 2d(\copt,v')\tag{Previous upper bound for OPT}\\
    &=\opt + 2\cdot \frac{n}{m} d(\copt,v') \tag{$\first_v=\copt$ for at most $\frac{n}{m}$ voters}\\
    &\leq(1 + \frac{2}{m-1}) \opt \tag{By Equation \ref{lm:mid-m-1}}
\end{align*}
\end{proof}

\subsection{Max-cost objective}

Next, we focus on the max objective, where the cost of a set of candidates is the maximum distance of any voter from the set. We show that selecting four candidates is sufficient for achieving a distortion of $1$ compared to the single optimal candidate. In addition, we show that we can achieve distortion factors of $3/2$ and $2$ by selecting three or two candidates respectively. We note that in different parts of this section, we refer to the leftmost and rightmost voters $v_l$ and $v_r$ which we identify using the algorithms in Section \ref{order}. While the orderings in that section may have ties, we are only concerned with the most preferred candidate of each voter, which will be the same for tied voters.

We begin by determining a lower bound for the optimal cost $\opt$ when only one candidate is selected.

\begin{lemma}
\label{lm:distance-cand-vl-vr}
Let $v_l$ and $v_r$ be the leftmost and rightmost candidates respectively. If the distance between \( v_l \) and \( v_r \) is \( d \) (i.e., \( d = d(v_l, v_r) \)), then the cost for the optimal single candidate $\opt$ satisfies \( \opt \geq \frac{d}{2} \).
\end{lemma}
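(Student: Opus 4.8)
The plan is to reduce the claim to a single application of the triangle inequality, using only that $v_l$ and $v_r$ are genuine voters in $V$ (their being the extreme voters is not needed for this particular bound, though it is how the statement is phrased). First I would fix the optimal single candidate $\copt$, so that $\opt = \cost_m(\copt) = \max_{v \in V} d(v, \copt)$. Since $v_l, v_r \in V$, both $d(v_l, \copt)$ and $d(v_r, \copt)$ are among the terms in this maximum, hence
\[
\opt \;\ge\; \max\{\, d(v_l, \copt),\; d(v_r, \copt) \,\}.
\]

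Next I would invoke the triangle inequality on the triple $v_l, \copt, v_r$ to get $d(v_l, \copt) + d(v_r, \copt) \ge d(v_l, v_r) = d$. Combining this with the previous display: if the larger of $d(v_l,\copt)$ and $d(v_r,\copt)$ were strictly less than $d/2$, their sum would be strictly less than $d$, a contradiction; therefore $\max\{d(v_l,\copt), d(v_r,\copt)\} \ge d/2$, and so $\opt \ge d/2$, as claimed.

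I do not expect any real obstacle here, as the argument is a two-line triangle-inequality estimate; the only point that deserves a word is that $\copt$ realizes the minimum over candidates of $\cost_m$, so the inequality $\cost_m(\copt) \ge d/2$ in fact holds for every candidate and the bound on $\opt$ is immediate. It is also worth remarking, for use later in the section, that the same argument shows $\opt \ge d(v,v')/2$ for \emph{any} pair of voters $v, v'$, and that nothing in the argument is specific to the line metric.
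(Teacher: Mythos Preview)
Your proposal is correct and follows essentially the same approach as the paper: fix $\copt$, apply the triangle inequality to $v_l, \copt, v_r$ to get $d(v_l,\copt)+d(v_r,\copt)\ge d$, and conclude that the maximum of the two distances (hence $\opt$) is at least $d/2$. Your additional remarks---that extremality of $v_l,v_r$ is not used, that the bound holds for any pair of voters, and that it is metric-agnostic---are all valid and harmless observations.
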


\begin{proof}  
Let \( c_{\text{opt}} \) be an optimal candidate. The distance between \( v_l \) and \( c_{\text{opt}} \) is \( d(v_l, c_{\text{opt}}) \), and the distance between \( v_r \) and \( c_{\text{opt}} \) is \( d(v_r, c_{\text{opt}}) \). By the triangle inequality:
\[
d(v_l, c_{\text{opt}}) + d(v_r, c_{\text{opt}}) \geq d.
\]
Thus,
\[
\max(d(v_l, c_{\text{opt}}), d(v_r, c_{\text{opt}})) \geq \frac{d}{2}.
\]
Therefore, \( \opt \geq \frac{d}{2} \).

\end{proof}

Based on the algorithms in Section \ref{order}, we can determine the order of candidates in $\Det$ and the voters.
We denote the leftmost voter as \( v_l \) and the rightmost voter as \( v_r \). Similarly, we denote the closest candidate to \( v_l \) as \( c_l \) and the closest candidate to \( v_r \) as \( c_r \).

Now, we show that if there are no candidates between \( v_l \) and \( v_r \), a distortion of $1$ can be achieved with 2 candidates. 

\begin{lemma}
If there are no candidates placed between \( v_l \) and \( v_r \),  a distortion of $1$ can be achieved by selecting \( c_l \) and \( c_r \).
\end{lemma}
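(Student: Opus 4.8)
The plan is to show that when no candidate lies strictly between $v_l$ and $v_r$, the committee $\{c_l, c_r\}$ already achieves max-cost at most $\opt$, hence distortion $1$. First I would use the orderings from Section~\ref{order} together with the hypothesis to pin down the geometry: since $c_l$ is the candidate closest to $v_l$ and there is no candidate in the open interval $(v_l, v_r)$, the candidate $c_l$ must lie at or to the left of $v_l$; symmetrically $c_r$ lies at or to the right of $v_r$. (If some candidate coincided with $v_l$ or $v_r$ we would be in an even easier case, so the interesting configuration has $c_l \le v_l \le v_r \le c_r$ on the line.) Consequently every voter $v$ lies in the interval $[c_l, c_r]$, because all voters lie in $[v_l, v_r] \subseteq [c_l, c_r]$.

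Next I would bound $\cost_m(\{c_l,c_r\})$. For any voter $v \in [c_l, c_r]$, its distance to the nearer of the two endpoints $c_l, c_r$ is at most half of $d(c_l, c_r)$ only in the worst case of the midpoint — but that is not quite what I want, since $\opt$ is compared to $d(v_l,v_r)/2$, not $d(c_l,c_r)/2$. Instead the sharper observation is: for a voter $v$, $d(v, \{c_l,c_r\}) = \min(d(v,c_l), d(v,c_r)) = \min(v - c_l,\, c_r - v)$. The maximum of this over $v \in [v_l, v_r]$ is attained at $v_l$ or $v_r$ (the function $\min(v-c_l, c_r-v)$ is concave, but restricted to $[v_l,v_r]$ its max over that subinterval is at an endpoint only if the peak $\tfrac{c_l+c_r}{2}$ lies outside $(v_l,v_r)$; if the peak lies inside, I instead argue directly). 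Cleanest is to split: $d(v_l, \{c_l,c_r\}) = d(v_l, c_l)$ and $d(v_r,\{c_l,c_r\}) = d(v_r,c_r)$ since $c_l$ is $v_l$'s closest candidate overall and $c_r$ is $v_r$'s; and for any intermediate voter $v$, $d(v,\{c_l,c_r\}) \le \max(d(v,v_l), d(v, v_r)) + \text{(endpoint terms)}$ — I would actually just bound $d(v,\{c_l,c_r\}) \le \min(v - v_l + d(v_l,c_l),\ v_r - v + d(v_r,c_r))$ and note the max of the left term over $v \le$ midpoint plus the right term handles it, ultimately giving $\cost_m(\{c_l,c_r\}) \le \tfrac{1}{2} d(v_l,v_r) + \max(d(v_l,c_l), d(v_r,c_r))$, which is not obviously $\le \opt$.

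So the real argument I would commit to is the direct one: since $c_l$ is $v_l$'s closest candidate, $d(v_l, c_l) \le d(v_l, \copt)$; since $c_r$ is $v_r$'s closest candidate, $d(v_r, c_r) \le d(v_r, \copt)$. For an arbitrary voter $v$, assume WLOG $v$ is in the left half, i.e. closer to $c_l$; then $d(v, c_l) = v - c_l \le v_l - c_l$ is false in general (since $v \ge v_l$), so instead $d(v,c_l) \le (v - v_l) + (v_l - c_l) = d(v,v_l) + d(v_l,c_l)$. Now $d(v, v_l) \le d(v, \copt) + d(\copt, v_l)$ is too lossy. The clean finish: for $v$ in $[v_l, \tfrac{v_l+v_r}{2}]$ we have $d(v, c_l) \le d(v_l, c_l) + d(v_l, v)$, but we also know $d(v_l, v) \le d(v_l, v_r)/2$... this still leaves a factor. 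I therefore expect the \textbf{main obstacle} to be precisely this: getting the two endpoint correction terms $d(v_l, c_l)$ and $d(v_r, c_r)$ to be absorbed rather than added. The resolution I anticipate from the authors is that $\copt$ must serve $v_l$ and $v_r$ too, so $\opt \ge \max(d(v_l,\copt), d(v_r,\copt)) \ge \max(d(v_l,c_l), d(v_r,c_r))$ AND $\opt \ge d(v_l,v_r)/2$ by Lemma~\ref{lm:distance-cand-vl-vr}; then for a middle voter $v$ closest to (say) $c_l$, one shows $d(v,c_l) \le \opt$ directly by placing $\copt$ somewhere in $[v_l,v_r]$ or beyond and doing a short case analysis on which side of $v$ it falls — if $\copt \le v$ then $d(v,c_l) \le d(v, \copt) + d(\copt, c_l) \le d(v,\copt) + d(v_l, c_l) \cdot(\text{something})$; if $\copt \ge v$ then $d(v, c_l) = v - c_l \le v_r - c_l \le d(v_r, \copt) + d(\copt \text{ ... })$.

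In summary, the steps in order: (1) from the no-candidate-between hypothesis and the meaning of $c_l, c_r$, establish $c_l \le v_l \le v_r \le c_r$ and hence all voters lie in $[c_l, c_r]$; (2) record the two inequalities $d(v_l, c_l) \le \opt$ and $d(v_r, c_r) \le \opt$ from optimality of $\copt$ for those particular voters, and $d(v_l,v_r) \le 2\,\opt$ from Lemma~\ref{lm:distance-cand-vl-vr}; (3) take an arbitrary voter $v$, WLOG in the left half so $d(v, \{c_l,c_r\}) = v - c_l$, and bound $v - c_l$ by a case analysis on the position of $\copt$ relative to $v$, in each case producing $d(v,\copt) + (\text{a term} \le \opt)$ or directly $\le \opt$; (4) conclude $\cost_m(\{c_l,c_r\}) \le \opt$, so the distortion is exactly $1$ (it cannot be less). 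The delicate point throughout is step (3)'s case analysis; everything else is bookkeeping with the triangle inequality and the orderings from Section~\ref{order}.
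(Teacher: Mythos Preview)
Your proposal ultimately gestures toward a workable argument, but it is far more complicated than necessary and, as written, does not actually close the proof: you yourself flag step~(3) as ``the delicate point'' and leave it as an unfinished case analysis, with one branch promising only ``$d(v,\copt) + (\text{a term}\le\opt)$'', which would give $2\,\opt$, not $\opt$.

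The paper's proof is a one-liner and you are missing its key observation. Since there is no candidate in the open interval $(v_l,v_r)$, every candidate lies either weakly left of $v_l$ or weakly right of $v_r$. Consequently, for \emph{every} voter $v\in[v_l,v_r]$, the candidate nearest to $v$ is either the rightmost candidate $\le v_l$ or the leftmost candidate $\ge v_r$; and these are precisely $c_l$ and $c_r$ (or, if say $v_l$'s nearest candidate already lies to the right, then that single candidate is nearest to all voters). Hence $\{c_l,c_r\}$ contains each voter's own top choice, so $d(v,\{c_l,c_r\}) = d(v,\first_v) \le d(v,\copt)$ for every $v$, giving $\cost_m(\{c_l,c_r\})\le\opt$ immediately. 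No triangle inequalities, no case split on the location of $\copt$, no appeal to Lemma~\ref{lm:distance-cand-vl-vr}.

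If you want to salvage your route, the clean way to execute step~(3) is to case not on $\copt$ versus $v$ but on which side of $[v_l,v_r]$ the candidate $\copt$ lies (it must be on one side, by hypothesis). If $\copt\le v_l$, then since $c_l$ is $v_l$'s closest candidate we get $c_l\ge\copt$, hence for any $v\ge v_l$ we have $d(v,c_l)=v-c_l\le v-\copt=d(v,\copt)\le\opt$; the case $\copt\ge v_r$ is symmetric with $c_r$. This works, but it is still just a roundabout rediscovery of the fact that one of $c_l,c_r$ dominates $\copt$ for every voter.
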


\begin{proof}
In this case, the voters are next to each other in a block, with \( c_l \) being the first candidate immediately to the left of all voters and \( c_r \) being the first candidate immediately to the right of all voters. Therefore, by selecting \( c_l \) and \( c_r \), we ensure that the closest candidate to each voter is included.
Therefore, the cost of this set is at most that of the single optimal candidate, and a distortion factor of $1$ is achieved compared to this benchmark.
\end{proof}

From now on, we assume at least one candidate is located between $v_l$ and $v_r$. This implies that the optimal single candidate $c_{\text{opt}}$ is also between $v_l$ and $v_r$, as any candidate outside the interval has a cost of at least $d(v_l,v_r)$.

Next, we show that selecting two candidates is sufficient to achieve a distortion of $2$ compared to the optimal single candidate. 

We first prove the following lemma for the left-most and right-most voters, denoted \( v_l \) and \( v_r \). 

\begin{lemma}
\label{lm:distance-voter-vl-vr}

If the distance between \( v_l \) and \( v_r \) is \( d \), then for every voter \( v \), either \( d(v, v_l) \leq \opt \) or \( d(v, v_r) \leq \opt \).
\end{lemma}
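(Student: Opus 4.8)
The plan is to exploit the one‑dimensional structure: every voter $v$ lies in the closed interval $[v_l, v_r]$ (since, by construction of the voter ordering in Section~\ref{order}, $v_l$ is the leftmost and $v_r$ the rightmost voter), and the optimal single candidate $c_{\text{opt}}$ lies somewhere on the line. I would fix an arbitrary voter $v$ and case‑split on which side of $v$ the point $c_{\text{opt}}$ falls.

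First I would record the elementary fact that, since $\opt=\cost_m(c_{\text{opt}})=\max_{u\in V} d(u,c_{\text{opt}})$, we have both $d(v_l,c_{\text{opt}})\le\opt$ and $d(v_r,c_{\text{opt}})\le\opt$. Now suppose $c_{\text{opt}}$ lies weakly to the left of $v$. Then the three points appear in the order $c_{\text{opt}} \le v \le v_r$ on the line, so by monotonicity of distance along a line $d(v,v_r) = |v_r - v| \le |v_r - c_{\text{opt}}| = d(v_r,c_{\text{opt}}) \le \opt$. Symmetrically, if $c_{\text{opt}}$ lies weakly to the right of $v$, the order is $v_l \le v \le c_{\text{opt}}$, giving $d(v,v_l) \le d(v_l,c_{\text{opt}}) \le \opt$. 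Since one of the two cases always holds, every voter $v$ satisfies $d(v,v_l)\le\opt$ or $d(v,v_r)\le\opt$, which is exactly the claim.

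There is essentially no hard step here; the only thing to be careful about is that the argument should not secretly assume $c_{\text{opt}}\in[v_l,v_r]$. It does not: if $c_{\text{opt}}$ is further left than $v_l$ it is in particular left of $v$ (first case), and if it is further right than $v_r$ it is right of $v$ (second case), so the case analysis is genuinely exhaustive. A minor point is that the "leftmost/rightmost voter" and "median/closest candidate" data from Section~\ref{order} may be defined only up to ties, but as remarked there we only use the identity of $v_l$ and $v_r$ as extreme points and the value $d=d(v_l,v_r)$, which is unambiguous. If desired, one can also note in passing that this recovers $\opt\ge d/2$ (Lemma~\ref{lm:distance-cand-vl-vr}) by applying the bound to $v=v_l$ or $v=v_r$, but that is not needed for the statement.
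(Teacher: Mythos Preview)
Your proof is correct but takes a different route than the paper. The paper argues via additivity on the line: since $v\in[v_l,v_r]$, one has $d(v_l,v)+d(v,v_r)=d$, so one of the two summands is at most $d/2$, and then it invokes Lemma~\ref{lm:distance-cand-vl-vr} ($\opt\ge d/2$) to conclude. You instead bypass Lemma~\ref{lm:distance-cand-vl-vr} entirely and compare $v$ directly to $c_{\text{opt}}$: whichever side of $v$ the optimal candidate sits on, the extreme voter on the \emph{other} side is sandwiched between $v$ and $c_{\text{opt}}$, giving the bound by monotonicity. Your argument is self-contained and even recovers Lemma~\ref{lm:distance-cand-vl-vr} as a by-product, while the paper's version is slightly shorter because it reuses that lemma. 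Both are equally valid; nothing needs fixing.
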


\begin{proof} 
Since \( v \) is a voter between \( v_l \) and \( v_r \) (recall \( v_l \) is the left-most and \( v_r \) is the right-most voter), we have:
\[
d(v_l, v) + d(v, v_r) = d.
\]
Thus, either \( d(v_l, v) \leq \frac{d}{2} \) or \( d(v, v_r) \leq \frac{d}{2} \).

By Lemma \ref{lm:distance-cand-vl-vr}, \( \frac{d}{2} \leq \opt \). Therefore, we conclude:
\[
d(v_l, v) \leq \frac{d}{2} \leq \opt, \quad \text{or} \quad d(v, v_r) \leq \frac{d}{2} \leq \opt.
\]
\end{proof}

Now, we show that it's possible to select two candidates and achieve a distortion of $2$.
\begin{theorem}
    \label{thm:max-up-2}
    Let $v_l$ and $v_r$ be the leftmost and rightmost voters, and $c_l$ and $c_r$ the closest candidates to $v_l$ and $v_r$. Then, set $\{c_l,c_r\}$ has a cost at most twice the optimal single candidate.
\end{theorem}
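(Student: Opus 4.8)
The plan is to show that every voter is within distance $2\cdot\opt$ of the set $\{c_l, c_r\}$, i.e., $\cost_m(\{c_l,c_r\}) \le 2\opt$. Fix an arbitrary voter $v$. By Lemma~\ref{lm:distance-voter-vl-vr}, either $d(v,v_l) \le \opt$ or $d(v,v_r) \le \opt$; assume without loss of generality the former. The goal is then to bound $d(v, c_l)$ in terms of $\opt$, since $c_l \in \{c_l,c_r\}$. By the triangle inequality, $d(v, c_l) \le d(v, v_l) + d(v_l, c_l)$, so it suffices to argue $d(v_l, c_l) \le \opt$, which would give $d(v,c_l) \le 2\opt$ and finish the case.

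**Next I would** establish the key fact $d(v_l, c_l) \le \opt$. Here $c_l$ is the candidate closest to $v_l$, so $d(v_l, c_l) \le d(v_l, c)$ for every candidate $c$, and in particular $d(v_l, c_l) \le d(v_l, c_{\text{opt}})$. Now I need $d(v_l, c_{\text{opt}}) \le \opt$: this holds because $\opt = \cost_m(c_{\text{opt}}) = \max_{u \in V} d(u, c_{\text{opt}}) \ge d(v_l, c_{\text{opt}})$ since $v_l$ is itself a voter. Chaining these, $d(v_l, c_l) \le d(v_l, c_{\text{opt}}) \le \opt$. Symmetrically, $d(v_r, c_r) \le \opt$, which handles the case $d(v,v_r)\le\opt$ via $d(v,c_r) \le d(v,v_r) + d(v_r, c_r) \le 2\opt$.

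**Combining** the two cases, for every voter $v$ we have $d(v, \{c_l,c_r\}) \le 2\opt$, hence $\cost_m(\{c_l,c_r\}) = \max_{v\in V} d(v,\{c_l,c_r\}) \le 2\opt$, as claimed. Note the argument only uses that $v_l, v_r$ are the extreme voters (for Lemma~\ref{lm:distance-voter-vl-vr}) and that $c_l, c_r$ are the respective closest candidates (for the $d(v_l,c_l)\le\opt$ bound); it does not even need the hypothesis that a candidate lies between $v_l$ and $v_r$, though that hypothesis is harmless.

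**The main subtlety**, rather than a genuine obstacle, is just making sure the ``without loss of generality'' symmetry between the two cases is clean — both cases are structurally identical under swapping $(v_l,c_l) \leftrightarrow (v_r,c_r)$, so only one needs to be written out in full. A secondary point worth a sentence is that $v_l$ and $v_r$ are correctly identified by the algorithm of Section~\ref{order} only up to ties among voters, but as remarked before the theorem statement, ties do not affect the identity of a voter's closest candidate, so $c_l$ and $c_r$ are well-defined regardless.
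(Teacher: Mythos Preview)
Your proof is correct and follows essentially the same approach as the paper's: both use Lemma~\ref{lm:distance-voter-vl-vr} to get $d(v,v_l)\le\opt$ or $d(v,v_r)\le\opt$, bound $d(v_l,c_l)\le d(v_l,c_{\text{opt}})\le\opt$ via the defining property of $c_l$, and finish with the triangle inequality. The only difference is expository order (the paper establishes $d(v_l,c_l)\le\opt$ first, then fixes a voter), not substance.
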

\begin{proof}
Since \( c_l \) is closest to \( v_l \) and we know \( d(v_l, c_{\text{opt}}) \leq \opt \), it follows that:
\[
d(v_l, c_l) \leq d(v_l, c_{\text{opt}})\leq  \opt.
\]
Similarly, \( d(v_r, c_r) \leq \opt \).

For any voter \( v \), by Lemma \ref{lm:distance-voter-vl-vr}, either \( d(v, v_l) \leq \opt\) or \( d(v, v_r) \leq \opt \). Without loss of generality, assume that \( d(v, v_l) \leq \opt \). Then:
\[
d(v, c_l) \leq d(v, v_l) + d(v_l, c_l) \leq \opt + \opt = 2 \cdot \opt.
\]
Similarly, in the other case:
\[
d(v, c_r) \leq 2 \cdot \opt.
\]

Thus, for every voter, the distance to the closest candidate in $\{c_l,c_r\}$ is at most \( 2 \cdot \opt \). Hence, this selection achieves a distortion of $2$.
    
\end{proof}

Figure \ref{fig:opt4} illustrates how the closest candidates to $v_l$ and $v_r$ lying outside the interval between them leads to a distortion factor larger than $1$. 
Specifically, if $c_l$ is to the left of $v_l$ or $c_r$ is to the right of $v_r$, some voters between $v_l$ and $v_r$ might be further than $\opt$ from both $c_l$ and $c_r$.

\begin{figure}[t]
    \centering
    \begin{tikzpicture}
        \draw[thick] (-5, 0) -- (5, 0);
        
        \draw[blue, fill=blue] (-3, 0) circle (0.1) node[below] {$c_l$};
        \draw[blue, fill=blue] (3, 0) circle (0.1) node[below] {$c_r$};

        \draw[red, thick] (-2, 0) node[below] {$v_l$};
        \draw[red, thick] (-2, 0) node {$\times$};
        \draw[red, thick] (2, 0) node[below] {$v_r$};
        \draw[red, thick] (2, 0) node {$\times$};
        
        \draw[<->, thick] (-3, 0.5) -- (-2, 0.5) node[midway, above] {$\leq \textit{OPT}$};
        \draw[<->, thick] (2, 0.5) -- (3, 0.5) node[midway, above] {$\leq \textit{OPT}$};
    \end{tikzpicture}
    \caption{Illustration of the candidates $c_l$ and $c_r$, and the voters $v_l$ and $v_r$, demonstrating why choosing these candidates does not achieve a distortion of $1$}
    \label{fig:opt4}
\end{figure}
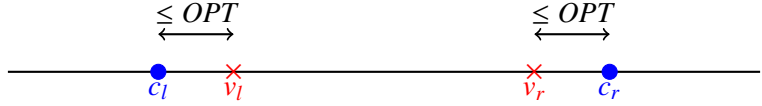

On the other hand, if the candidates $c_l$ and $c_r$ are placed between $v_l$ and $v_r$, as shown in Figure \ref{fig:opt4-swapped}, this selection would achieve a distortion of $1$. In this case, the following property holds for any voter $v$:
\[
d(v, c_l) \leq \max(d(v, v_l),\opt) \quad \text{and} \quad d(v, c_r) \leq \max(d(v, v_r),\opt)
\]
as $c_l$ lies between $v_l$ and $c_{\text{opt}}$ and $c_r$ lies between $v_r$ and $c_{\text{opt}}$.

By Lemma \ref{lm:distance-voter-vl-vr}, we know that for every voter $v$, either $d(v, v_l) \leq \opt$ or $d(v, v_r) \leq \opt$. Combining these, it follows that:
\[
d(v, c_l) \leq \textit{OPT} \quad \text{or} \quad d(v, c_r) \leq \textit{OPT}.
\]

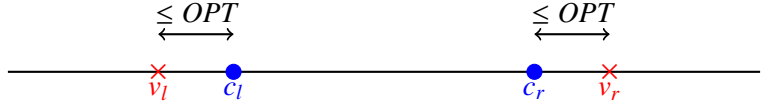
\begin{figure}[t]
    \centering
    \begin{tikzpicture}
        \draw[thick] (-5, 0) -- (5, 0);
        
        \draw[blue, fill=blue] (-2, 0) circle (0.1) node[below] {$c_l$};
        \draw[blue, fill=blue] (2, 0) circle (0.1) node[below] {$c_r$};

        \draw[red, thick] (-3, 0) node[below] {$v_l$};
        \draw[red, thick] (-3, 0) node {$\times$};
        \draw[red, thick] (3, 0) node[below] {$v_r$};
        \draw[red, thick] (3, 0) node {$\times$};
        
        \draw[<->, thick] (-3, 0.5) -- (-2, 0.5) node[midway, above] {$\leq \textit{OPT}$};
        \draw[<->, thick] (2, 0.5) -- (3, 0.5) node[midway, above] {$\leq \textit{OPT}$};
    \end{tikzpicture}
    \caption{Illustration of the candidates $c_l$ and $c_r$ positioned between the voters $v_l$ and $v_r$. In this case, we achieve the optimal answer instead of a two approximation.}
    \label{fig:opt4-swapped}
\end{figure}

To improve the distortion factor, we attempt to include the leftmost and rightmost candidates that lie between the first and last voters by selecting more candidates. 
We first demonstrate that introducing an additional candidate allows us to achieve a distortion of $3/2$ and then show that selecting two additional candidates guarantees a distortion of $1$.

\begin{theorem}
    There exists a voting rule to select three candidates that achieves a distortion of $3/2$ compared to the optimal single candidate.
\end{theorem}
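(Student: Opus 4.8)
The plan is to extend the two-candidate construction from Theorem~\ref{thm:max-up-2}. Using the orderings $S_C$ and $S_V$ from Section~\ref{order}, I would keep $c_l$ and $c_r$, the top choices of the extreme voters $v_l$ and $v_r$ (both lie in $\Det$, since the top choice of any voter is a core candidate), and add as a third candidate $c_l^{+}$, the candidate immediately to the right of $c_l$ in the sorted list $S_C$ of determined candidates; if no such candidate exists then $\{c_l,c_r\}$ already suffices. As before we may assume some candidate lies strictly between $v_l$ and $v_r$, since otherwise an earlier lemma gives distortion $1$; then $\copt$ also lies between $v_l$ and $v_r$, $d(v_l,v_r)\le 2\,\opt$ by Lemma~\ref{lm:distance-cand-vl-vr}, and $\copt\in\Det$, which is a consecutive block of candidates on the line.

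The crux is the following structural claim. Let $a$ be the leftmost candidate that is not to the left of $v_l$; this exists because $\copt$ qualifies, and since $a\le\copt$ we have $d(v_l,a)=a-v_l\le d(v_l,\copt)\le\opt$. I claim $a\in\{c_l,c_l^{+}\}$: if $c_l$ is not to the left of $v_l$, then $c_l$ is itself the leftmost such candidate (a candidate strictly between $v_l$ and $c_l$ would be strictly closer to $v_l$), so $a=c_l$; if $c_l$ is to the left of $v_l$, then $c_l^{+}$ exists and, because $\Det$ is consecutive, is literally the next candidate on the line after $c_l$; no candidate lies strictly between $c_l$ and $v_l$ (same argument), so $c_l^{+}$ is the leftmost candidate not to the left of $v_l$, i.e.\ $a=c_l^{+}$. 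Hence the committee $\{c_l,c_l^{+},c_r\}$ always contains $a$ and $c_r$, and it suffices to bound $\cost_m(\{a,c_r\})$.

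For the final estimate I would fix an arbitrary voter $v$, so $v_l\le v\le v_r$. If $v$ is to the left of $a$, then $d(v,a)=a-v\le a-v_l\le\opt$. Otherwise $d(v,a)=v-a$; using that $c_r$ is the closest candidate to $v_r$ we get $d(v_r,c_r)\le d(v_r,\copt)\le\opt$, hence $c_r\in[v_r-\opt,\,v_r+\opt]$. If moreover $c_r$ is not to the left of $v_r$, then $v\le v_r\le c_r$ and
\[
d(v,a)+d(v,c_r)=c_r-a\le (v_r+\opt)-v_l=d(v_l,v_r)+\opt\le 3\,\opt,
\]
so $\min(d(v,a),d(v,c_r))\le\tfrac{3}{2}\,\opt$. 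If instead $c_r$ is to the left of $v_r$, then either $v\le c_r$ and the same identity gives $d(v,a)+d(v,c_r)=c_r-a\le d(v_l,v_r)\le 2\,\opt$, or $v>c_r$ and $d(v,c_r)=v-c_r\le v_r-c_r\le\opt$. In every case some committee member is within $\tfrac{3}{2}\,\opt$ of $v$, so $\cost_m(\{c_l,c_l^{+},c_r\})\le\tfrac{3}{2}\,\opt$.

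The main obstacle — and the reason two candidates cannot match this bound — is that from the preference profile alone one cannot tell whether $c_l$ lies to the left or to the right of $v_l$; the two configurations are ordinally indistinguishable, as Figures~\ref{fig:opt4} and~\ref{fig:opt4-swapped} illustrate. Adding $c_l^{+}$ is exactly the hedge that forces the inner left candidate $a$ into the committee in either case. What then remains is only routine bookkeeping for degenerate situations — $c_l^{+}$ undefined, ties among the extreme voters in $S_V$ (all voters in one tie class share a top choice, so $c_l,c_r$ are still well defined), or $c_l,c_l^{+},c_r$ not all distinct — where we pad the committee to size three and the estimate above still applies.
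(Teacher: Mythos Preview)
Your proposal is correct and is essentially the mirror image of the paper's proof: the paper selects $\{c_l,c_r,c_r'\}$ with $c_r'$ the candidate immediately to the \emph{left} of $c_r$, whereas you select $\{c_l,c_l^{+},c_r\}$ with $c_l^{+}$ immediately to the \emph{right} of $c_l$, and both arguments then bound the length of the interval between the inner anchor and the far candidate by $3\,\opt$. The paper's write-up is slightly terser (it directly bounds $d(c_l,c_r^*)\le d(c_l,v_l)+d(v_l,\copt)+d(\copt,c_r^*)\le 3\,\opt$ without your case split on the position of $c_r$), but the idea and the committee construction are the same up to left--right symmetry.
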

\begin{proof}
    Let $v_l$ and $v_r$ be the leftmost and rightmost voters, and $c_l$ and $c_r$ their closest candidates. Using the ordering of the candidates, let $c_r'$ be the candidate immediately to the left of $c_r$. We claim that the set $\{c_l,c_r,c_r'\}$ achieves a distortion of $3/2$. First, we note that any voter to the left of $c_l$ or the right of $c_r'$ has their closest candidate included in this set. Now, one of $c_r'$ or $c_r$ must be the rightmost candidate between $v_l$ and $v_r$. Let this be candidate $c_r^*$. We already know that any voter outside the interval between $c_l$ and $c_r^*$ has their closest candidate included. In addition, we have
    \begin{align*}
        d(c_l,c_r^*)&\leq d(c_l,v_l)+d(v_l,c_{\text{opt}})+d(c_{\text{opt}},c_r^*)\leq 3\opt.
    \end{align*}
    Therefore, the minimum distance of any voter in this interval to endpoints $c_l,c_r^*$ is at most $3\opt/2$. So, the maximum distance of any voter to the candidates is at most  $3/2\opt$ and we achieve a distortion of $3/2$.
\end{proof}

Finally, we show that selecting four candidates allows us to achieve a distortion of $1$.

\begin{theorem}
    \label{thm:max-up-4}
    There exists a voting rule to select four candidates that achieves a distortion of $1$ compared to the optimal single candidate.
\end{theorem}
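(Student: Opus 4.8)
The plan is to identify a two-candidate set that is always $1$-distortion optimal but that the algorithm cannot pin down exactly from the ordinal profile, and then to hedge with two additional candidates so that this good pair is always contained in the output. Using the algorithm of Section~\ref{order}, fix the order $S_C$ of the \core{} $\Det$ and the order $S_V$ of the voters, and let $v_l,v_r$ be the leftmost and rightmost voters with nearest candidates $c_l,c_r$; since a voter's most preferred candidate lies in any \core{}, we have $c_l,c_r\in\Det$. We are in the case where some candidate lies strictly between $v_l$ and $v_r$; any candidate outside $(v_l,v_r)$ has cost at least $d(v_l,v_r)$, which strictly exceeds the cost of an interior candidate, and $\Det$ being a \core{} forces $\min_{c\in\Det}\cost_m(c)=\opt$, so we may fix an optimal single candidate $\copt\in\Det\cap(v_l,v_r)$. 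Let $a$ (resp.\ $b$) be the leftmost (resp.\ rightmost) element of $\Det\cap[v_l,v_r]$, so $v_l\le a\le \copt\le b\le v_r$.

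First I would show $\cost_m(\{a,b\})\le\opt$. From $a\le \copt$ we get $d(v_l,a)=a-v_l\le \copt-v_l=d(v_l,\copt)\le\opt$, and symmetrically $d(v_r,b)\le\opt$. For any voter $v\in[v_l,v_r]$ there are three cases: if $v\le a$ then $d(v,a)=a-v\le a-v_l\le\opt$; if $v\ge b$ then $d(v,b)=v-b\le v_r-b\le\opt$; and if $a\le v\le b$, Lemma~\ref{lm:distance-voter-vl-vr} gives $d(v,v_l)\le\opt$ or $d(v,v_r)\le\opt$, whence $d(v,a)=v-a\le v-v_l\le\opt$ or $d(v,b)=b-v\le v_r-v\le\opt$. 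Note it is immaterial whether the overall leftmost candidate of $[v_l,v_r]$ happens to lie outside $\Det$: the \core{} property guarantees every voter's nearest candidate is in $\Det$, so no voter is stranded by restricting attention to $\Det$, and the bounds above used only $a\in\Det\cap[v_l,\copt]$ and $b\in\Det\cap[\copt,v_r]$.

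Next I would argue that the algorithm can output a four-element superset of $\{a,b\}$. The crux — and the main obstacle — is that the ordinal profile does not reveal whether $c_l$ lies to the left or the right of $v_l$ (both placements are consistent with $v_l$'s preference order), and likewise for $c_r$ and $v_r$; this is precisely why two candidates cannot be located algorithmically. However, $a$ is pinned down to at most two candidates: if $c_l\ge v_l$ then $a=c_l$, since no candidate can lie strictly between $v_l$ and its nearest candidate; and if $c_l<v_l$ then the $S_C$-successor of $c_l$ is $\ge v_l$ (again nothing lies strictly between $c_l$ and $v_l$) and $\le \copt\le v_r$ (since $\copt\in\Det$ lies to the right of $c_l$, this successor exists and does not overshoot $\copt$), hence equals $a$. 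Symmetrically, $b\in\{c_r,\ \text{the }S_C\text{-predecessor of }c_r\}$. So the algorithm outputs $c_l$ and its $S_C$-successor together with $c_r$ and its $S_C$-predecessor (omitting a neighbor when it does not exist, which by the above can only happen when the corresponding endpoint already equals $c_l$ or $c_r$; and outputting all candidates if $m\le 4$) — at most four candidates, and a superset of $\{a,b\}$. Combined with the previous step, $\cost_m(\text{output})\le\opt$, i.e.\ the $1$-distortion is $1$. This hedge is tight: the accompanying lower bound rules out distortion $1$ with three candidates, so one cannot avoid carrying both of the two candidates bracketing $c_l$ and both of the two bracketing $c_r$.
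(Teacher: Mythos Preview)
Your proposal is correct and follows essentially the same approach as the paper. The paper also outputs $c_l$, the $S_C$-successor of $c_l$, $c_r$, and the $S_C$-predecessor of $c_r$, and argues that one of $\{c_l,\text{succ}(c_l)\}$ is the leftmost determined candidate in $[v_l,v_r]$ (your $a$) and symmetrically for $b$; the only cosmetic difference is that you first isolate the ideal pair $\{a,b\}$ and prove $\cost_m(\{a,b\})\le\opt$ before showing the output contains it, whereas the paper analyzes the four-candidate output directly.
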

\begin{proof}
    We again consider the leftmost and rightmost voters $v_l$ and $v_r$ and their closest candidates $c_l$ and $c_r$. In addition, we choose candidate $c_l'$ to the right of $c_l$ and $c_r'$ to the left of $c_r$ based on the ordering of candidates. Now, one of $c_l$ and $c_l'$ will be the first candidate to the right of $v_l$: if $c_l$ is to the right of $v_l$, it must be the first such candidate. Otherwise, it is the first candidate to the left of $v_l$, so $c_l'$ is to the right of $v_l$. Let this candidate be $c_l^*$ and define $c_r^*$ similarly. Now, any voter $v$ to the left of $c_l'$ has their closest candidate included in the set, as $v$ is either between $c_l$ and $c_l'$, so one of these two is $v$'s closest candidate, or $v$ is to the left of $c_l$, in which case $c_l$ must be $v$'s closest candidate. Similarly, the closest candidate to each voter to the right of $c_r'$ is selected. 
    
    Next, since both $c_l^*$ and $c_r^*$ are between $v_l$ and $v_r$ and $d(v_l,v_r)\leq2\opt$, we have $d(c_l^*,c_r^*)\leq2\opt$. Therefore, any voter between $c_l^*$ and $c_r^*$ has a distance of at most $\opt$ to the closer candidate in $\{c_l^*,c_r^*\}$. As every voter is either between $c_l^*$ and $c_r^*$ or outside the interval $c_l',c_r'$, the distance of each voter to the closest candidate in our selected set is at most $\opt$ and we get a distortion of $1$.
\end{proof}

\section{Lower Bound on Distortion Factor}
\label{sec:LB}

\subsection{Lower bound for sum-cost}
\label{sec:sum-cost-LB}
In this section, we provide lower bounds for distortion in $k$-committee election when considering the sum-cost objective. 
First, we extend the lower bound of \cite{caragiannis2022metric} to show that achieving bounded distortion is not possible for general $k$ even if we allow to select $\omega(k)$ candidates. Furthermore, the lower bound holds for the line metric. 
Next, we show that for general metrics, it is not possible to achieve a distortion of $1$ when compared to the single optimal candidate if at least one candidate is not selected. We argue about this lower bound using instances in the two-dimensional plane (2-D Euclidean metric), showing that using the line metric space is crucial to our positive results.

\begin{theorem}
For $k\geq3$, there exists an instance for the $k$-committee election (with respect to the sum-cost) where any algorithm
choosing at most $\lceil\frac{k+1}{2}\rceil\lfloor\frac{k+1}{2}\rfloor-1$ candidates has unbounded distortion when compared to an optimal choice of $k$ candidates.
\end{theorem}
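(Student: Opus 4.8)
The plan is to establish this impossibility via the usual \emph{indistinguishability} paradigm sitting on top of a combinatorial covering bound, strengthening the unbounded‑distortion construction of Caragiannis, Shah, and Voudouris~\cite{caragiannis2022metric}. Set $a=\lceil\frac{k+1}{2}\rceil$ and $b=\lfloor\frac{k+1}{2}\rfloor$, so that $a+b=k+1$, $ab=\lceil\frac{k+1}{2}\rceil\lfloor\frac{k+1}{2}\rfloor$, and $a,b\ge 2$ whenever $k\ge 3$. I would build a single preference profile $P$ on $ab$ candidates, identified with the edges of the complete bipartite graph $K_{a,b}$, together with a family of line metrics $\{d_T\}$ -- one for every spanning tree $T$ of $K_{a,b}$ -- all consistent with $P$, engineered to have two properties: (i) in $d_T$ the $k=a+b-1$ candidates corresponding to the edges of $T$ form a committee whose sum‑cost is bounded independently of a free scaling parameter $D$, so the optimal $k$‑committee in $d_T$ has bounded cost; and (ii) in $d_T$ there is a designated leaf $\lambda(T)$ of $T$ such that any committee omitting the pendant edge at $\lambda(T)$ leaves (at least one) voter ``stranded'' at distance $\Omega(D)$. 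Property (ii) comes from a classical gadget: in $d_T$ the voters attached to $\lambda(T)$ sit right next to the unique candidate $c_{e^\ast}$ that is their only nearby option (its pendant edge $e^\ast$), while every other candidate is pushed $\Omega(D)$ away, and the remaining blocks and candidates are packed into a region of diameter $O(1)$ so that the $k-1$ other edges of $T$ serve them all cheaply.

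Granting this, the conclusion is immediate by counting. A deterministic rule sees only $P$, hence outputs a fixed committee $S$ with $|S|\le ab-1<ab$, so $S$ omits some candidate $c^\ast$, corresponding to an edge $e^\ast=(u,w)$ of $K_{a,b}$. Since $a\ge 2$, the graph $K_{a,b}-u=K_{a-1,b}$ is connected and has a spanning tree $T_0$ (on $k$ vertices, $k-1$ edges); attaching the pendant edge $e^\ast$ at $u$ yields a spanning tree $T^\ast$ of $K_{a,b}$ with $k$ edges in which $u$ is a leaf, and we set $\lambda(T^\ast)=u$. In the instance $(\elec,d_{T^\ast})$, the committee $S$ omits the pendant edge at $\lambda(T^\ast)$, so by (ii) its sum‑cost is $\Omega(D)$, while by (i) the optimal $k$‑committee costs $O(1)$ in $D$; letting $D\to\infty$ makes the distortion unbounded. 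The balanced split $a=\lceil\frac{k+1}{2}\rceil,\ b=\lfloor\frac{k+1}{2}\rfloor$ is precisely the one maximizing $ab$ subject to $a+b=k+1$, which is why the threshold is exactly $\lceil\frac{k+1}{2}\rceil\lfloor\frac{k+1}{2}\rfloor-1$.

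The real work -- and the step I expect to be the main obstacle -- is realizing (i) and (ii) \emph{simultaneously on one line} from \emph{one} profile $P$. On the line, a candidate $c_{ij}$ cannot be close to the blocks of both its endpoints $u_i,w_j$ unless those blocks are themselves close, so the multi‑scale placement (one banished block at distance $\Omega(D)$, everything else within $O(1)$, non‑tree candidates at an intermediate scale) must be chosen with care; and the subtle point is that a banished candidate is then the \emph{farthest} candidate from every non‑banished block, which over the whole family $\{d_T\}$ constrains a large part of $P$ and must not create contradictions. I would handle this by giving $P$ a highly symmetric ``core‑like'' structure in the sense of Section~\ref{order}: fix one left‑to‑right order of the $ab$ candidates once and for all, let each block's preference be an appropriate rotation or segment of it, and then for each $T$ choose actual \emph{distances} -- not orders -- realizing $d_T$, checking the triangle inequality and consistency with $P$ block by block. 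The base case $k=3$ (four candidates, a cyclic profile, four metrics) is essentially the instance of~\cite{caragiannis2022metric}; the new ingredient is iterating the gadget along the bipartite structure so that $\Theta(k^2)$ rather than $k$ candidates are forced.
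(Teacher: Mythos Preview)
Your overall architecture matches the paper's: set $a=\lceil\tfrac{k+1}{2}\rceil$, $b=\lfloor\tfrac{k+1}{2}\rfloor$, build $ab$ candidates under one fixed profile, and for each possibly omitted candidate exhibit a consistent line metric in which some $k$-committee is cheap but omitting that candidate is expensive. The spanning-tree packaging (candidates as edges of $K_{a,b}$, cheap $k$-committees as spanning trees) is a different and elegant lens---the paper's cheap committee, all of $C_{i^*}$ together with $c_{i,1}$ for $i\ne i^*$, is exactly a double star in your language---but the paper itself argues directly via a two-level grouping $C_1,\dots,C_a$ of size $b$, never invoking bipartite graphs.

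The genuine gap is the construction you yourself flag as the obstacle, and the sketch you give runs into a concrete problem. You propose to banish one block to distance $\Omega(D)$ while keeping the rest within $O(1)$ and simultaneously to ``fix one left-to-right order of the $ab$ candidates once and for all''; these conflict, because on a line the farthest candidate from any voter is always an extreme one, so with a fixed left-to-right order at most two candidates can ever be ranked last---yet your scheme needs each of the $ab$ candidates to be the far one in some metric. Relatedly, if voters are attached to \emph{vertices}, the block at $u$ would need a different first choice in each metric where a different edge at $u$ is pendant, contradicting a fixed profile. The paper sidesteps all of this by never banishing anything. Voters and candidates are co-located at $\ell\,3^{i}+\sigma_i\,3^{j-b}$ with the macro coordinates $\ell\,3^i$ held fixed across all metrics; the only thing that varies is which one group gets micro-scale $\sigma_{i^*}=1$ while the others get $\sigma_i=\epsilon$. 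Since macro gaps dominate micro gaps, this never changes any ordinal ranking, so one explicit two-level profile (groups in the order $i,i{-}1,\dots,1,i{+}1,\dots,a$ for voter $v_{i,\cdot}$, with the analogous pattern inside each group) is consistent with every metric. In the metric with $\sigma_{i^*}=1$, omitting $c_{i^*,j^*}$ leaves $v_{i^*,j^*}$ at distance at least $3^{-b}$ from the committee, while the double-star committee costs at most $(a-1)b\epsilon$; sending $\epsilon\to 0$ yields unbounded distortion. Replacing ``banish a block'' by this fixed-macro/variable-micro device turns your outline into a proof.
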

\begin{proof}
Let $a=\lceil\frac{k+1}{2}\rceil$ and $b=\lfloor\frac{k+1}{2}\rfloor$. We consider an instance of the problem with $ab$ candidates located on a line. We partition these candidates into $a$ groups $C_1,C_2,\ldots,C_a$ of size $b$ each. Let the $j$-th candidate in set $C_i$ be $c_{i,j}$. We consider a voter $v_{i,j}$ for each such candidate and set the preferences of voter $v_{i,j}$ as follows:
\begin{itemize}
    \item For any two candidates $c_{i',j'}$ and $c_{i'',j''}$ with $i'< i''$, $c_{i',j'}\succ c_{i'',j''}$ if $i''> i$ and $c_{i'',j''}\succ c_{i',j'}$ otherwise. This means that in voter $v_{i,j}$'s preference order, each set $C_i$ occupies a consecutive segment, with $C_i \succ C_{i-1} \succ \ldots \succ C_1 \succ C_{i+1}\succ\ldots\succ C_{a}$. (By the notation, $C_r \succ C_s$, we mean every candidate in the group $C_s$ is preferred over any candidate in the group $C_r$.)
    \item With $C_i$, the voter $v_{i,j}$'s preference order is 
    \[
    c_{i,j}\succ {c_{i,j-1}}\succ\ldots\succ c_{i,1}\succ c_{i,j+1} \succ \ldots\succ c_{i,b}.
    \]
    \item The preference of voter $v_{i,j}$ for candidates in $C_{i'}$ is in increasing order of $j$ if $i'>i$ and decreasing order of $j$ if $'<i$. 
\end{itemize}
Now, we show that any deterministic algorithm $\alg$ choosing at most $ab-1=\lceil\frac{k+1}{2}\rceil\lfloor\frac{k+1}{2}\rfloor-1$ candidates will have unbounded distortion when compared to an optimal choice of $a+b-1=k$ candidates. As $ab-1<ab$, at least one candidate $c_{i^*,j^*}$ is not chosen by $\alg$ 
in this instance.
Then, consider arbitrary constants $\ell > 2$ and $\epsilon>0$. Now, for any $i\neq i^*\in[a]$ and $j\in[b]$, we place candidate $c_{i,j}$ and voter $v_{i,j}$ in point $\ell 3^{i}+\epsilon 3^{j-b}$ if $i\neq i^*$ and point $\ell 3^{i}+ 3^{j-b}$ if $i= i^*$.

Next, we first show that this placing respects the voters' preference orders. We can see that for each $i \in [a],j\in[b] $, the distance of voter $v_{i,j}$ to any candidate in 
$C_{i+1}$ is at least 
\[
\ell3^{i+1}-\ell3^{i}-1\geq2\ell3^{i}  -1 \geq \ell3^{i} + \ell-1 > \ell3^{i}
\]
while the voter's distance to any candidate in $C_1$ is at most 
\[
\ell3^{i}+1-\ell3^{1} \leq l3^{i}.
\]
Therefore, any candidate in $C_1$ is closer to $v_{i,j}$ than any candidate in $C_{i+1}$. In addition, the distance of this voter to any candidate in $C_i$ is at most $1$, while its distance to the closest candidate in $C_{i-1}$ is at least 
\[
\ell3^i-\ell3^{i-1}-1 \geq 2\ell3^{i-1}-1\geq\ell>1.
\]
Now, since candidate groups $C_i'$ are ordered left to right on the line based on increasing $i'$, and voter $v_{i,j}$ is on the same point as $c_{i,j}$, we can say 
\[
C_i \succ C_{i-1} \succ \ldots \succ C_1 \succ C_{i+1}\succ\ldots\succ C_{a}
\]
holds for voter $v_{i,j}$. The preference for candidates within each $C_{i'}$ where $i'\neq i$ will also be consistent with the ordering, as they are all to the left or right of $v_{i,j}$. Finally, within $C_i$, the candidates $c_{i,j'}$ are ordered left to right in increasing order of $j'$, with voter $v_{i,j}$ coinciding with candidate $c_{i,j}$. Since the distance of $v_{i,j}$ to $v_{i,1}$ is less than its distance to $v_{i,j+1}$, the preference order for $v_{i,j}$ is as desired.

Now, we compare the cost of our algorithm and an optimal solution, choosing at most $a+b-1=k$ candidates. Since the algorithm does not choose $c_{i^*,j^*}$,
$v_{i^*,j^*}$ has a cost of at least $3^{j^*-b-1}\geq3^{-b}$ from its closest candidate.
So the cost of $\alg$ is at least 
$3^{-b}$. 
On the other hand, we can consider choosing every candidate in $C_{i^*}$, along with $c_{i,1}$ for every $i\neq i^*\in[a]$. For each voter $v_{i,j}$, if $i=i^*$, this choice has a cost of $0$ as $c_{i,j}$ is chosen, and if $i\neq i^*$, its distance to $c_{i,1}$ is at most $\epsilon3^{j-b}\leq\epsilon$. So, the total cost is at most $(a-1)b\epsilon$, and only $a+b-1=k$ candidates are selected. Now, the distortion will be at least 
\[
\frac{3^{-b}}{(a-1)b\epsilon}
\]
which can be arbitrarily increased to take any value by proper choice of $\epsilon$.  

\end{proof}

\begin{remark}
    We note that this instance has a recursive structure with two levels that can be generalized to more levels. In a generalization with $\ell$ levels, we can define each voter and candidate using a sequence of $\ell$ indices with values in $[a]$. Then, a voter $v_{i_1,\ldots,i_\ell}$'s preference order for candidates $c_{j_1,\ldots,j_\ell}$ is determined by the first index where sequences $i_1,\ldots,i_\ell$ and $j_1,\ldots,j_\ell$ defer. If this index is different for two candidates, the one with more matching indices is preferred, and if both differ with the voter at the same index $i$, the same pattern of preference $i\succ i-1\succ\ldots\succ1\succ i+1\succ \ldots\succ a$ is used. Then, suppose $c_{i_1^*,\ldots,i_\ell^*}$ is a candidate not chosen 
    by an algorithm selecting at most $a^\ell-1$ candidates. In that case, we can place the candidates and voters with a similar recursive structure such that the set of candidates $\{c_{i_1^*,\ldots,i_{r-1}^*,i_{r},1,\ldots,1}\mid r \in [\ell],i_{r} \in [a] \setminus \{i_{r}^*\}\}\cup \{c_{i_1^*,\ldots,i_\ell^*}\}$ has as a multiple of $\epsilon$ as its total cost. In contrast, the closest candidate except $c_{i_1^*,\ldots,i_\ell^*}$ to $v_{i_1^*,\ldots,i_\ell^*}$ has a constant distance to it not dependent on $\epsilon$. This leads to an unbounded distortion when comparing any algorithm selecting at most $a^\ell-1$ candidates to the optimal selection of $\ell(a-1)+1$ candidates. In particular, choosing $a=2$ and $\ell=\lceil \log_2{k+1}\rceil$, we can show that any algorithm for selecting $k$ candidates cannot achieve an unbounded distortion when compared to the optimal selection of $\lceil \log_2(k+1)\rceil +1$ candidates.
\end{remark}

Next, We construct an example in a two-dimensional plane, where picking even $m-1$ candidates out of $m$ candidates won't guarantee that an optimal candidate is chosen. Furthermore, we provide construction of such an instance so that if we consider the distance to the closest candidate among the $m-1$ selected candidates for each voter, we still get a distortion of at least $1 + \frac{2}{m-1} - \epsilon$, for any given $\epsilon > 0$, compared to the optimal single candidate.

\begin{theorem}
    \label{thm:LB-sum-all}
    Any deterministic algorithm for the $(m-1)$-committee election problem (with respect to the sum-cost) that, given voters' preference orderings on $m$ candidates, selects at most $m-1$ candidates must have a 1-distortion factor of at least $ 1 + \frac{2}{m-1} - \epsilon$, for any $\epsilon > 0$.
\end{theorem}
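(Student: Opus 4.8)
The plan is to exhibit, for every $\epsilon>0$, a single preference profile on $m$ candidates that is ``caught'' no matter how a deterministic rule picks $m-1$ of them: whichever candidate $c^\star$ the rule omits, there is a consistent $2$-D Euclidean metric under which $c^\star$ is the optimal single winner while the $m-1$ selected candidates have sum-cost at least $(1+\tfrac{2}{m-1}-\epsilon)\cdot\opt$. Since a rule outputting at most $m-1$ candidates must omit some $c^\star$, this yields the bound.

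The heart of the construction is the extremal configuration that makes the matching upper bound (Theorem~\ref{thm:up-m-1}) tight. Fix a scale $u>0$ and the omitted candidate $c^\star$: place $c^\star$ at the origin and the other $m-1$ candidates on the circle of radius $2u$ at pairwise distinct angles $\theta_j$; put $\tfrac nm$ voters at (a tiny perturbation of) the origin, and for each $j\neq\star$ put $\tfrac nm$ voters at (a tiny perturbation, nudged toward $c_j$, of) the midpoint $u\hat\theta_j$ of $\overline{c^\star c_j}$. Then each midpoint voter ranks $c_j$ first, $c^\star$ second, and the remaining candidates in increasing order of angular distance, while each origin voter ranks $c^\star$ first; moreover $d(v,c^\star)=u$ for every midpoint voter and the nearest selected candidate to an origin voter is at distance $2u$. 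Hence $\opt=\cost_s(\{c^\star\})\approx (m-1)\tfrac nm\,u$ (the origin voters contribute $0$), whereas for the selected set each midpoint block still contributes $\approx\tfrac nm\,u$ (its top candidate $c_j$ is the nearest selected one) and the origin block contributes $\approx\tfrac nm\cdot 2u$, so $\cost_s(\text{selected})\approx \tfrac nm\big((m-1)+2\big)u$, giving the ratio $\tfrac{m+1}{m-1}=1+\tfrac2{m-1}$; making every preference and every ``closer-than'' strict via the perturbations costs only a factor $1-O(\epsilon)$, and one checks $\cost_s(\{c_j\})>\opt$ for $j\neq\star$ so $c^\star$ is indeed optimal.

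The difficulty is that this configuration singles out $c^\star$ (it is first or second for every voter), so reading it off as ``the profile'' would let a clever rule simply keep $c^\star$. To get a lower bound against \emph{all} deterministic rules I would instead make the profile symmetric under a transitive group of candidate permutations, so that it suffices to handle one omitted candidate and transport the metric by symmetry. Concretely, I would take a rotationally symmetric, $2$-D realizable profile in which each candidate is first for exactly $\tfrac nm$ voters (read off from $m$ reference points in convex position) and, given the omitted $c^\star$, build a consistent metric that reproduces the midpoint/origin geometry above for the voters that ``see'' $c^\star$ high, while fitting the remaining voters into the bounded (diameter $O(u)$) region around the configuration; if their number is a controlled fraction of $n$, they add only $O(\epsilon)\cdot\opt$ and the ratio degrades by $O(\epsilon)$.

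The main obstacle is exactly this reconciliation: producing \emph{one} symmetric, $2$-D realizable profile that simultaneously admits, for \emph{every} omitted candidate, a consistent metric close enough to the extremal midpoint/origin configuration. The tension is structural -- the extremal configuration wants almost all voters within distance $O(u)$ of the omitted candidate so that $\opt$ is small, which forces that candidate into everyone's top two and destroys symmetry -- so the construction has to spread each candidate's ``role'' evenly over the voters while retaining enough of the extremal structure to keep the ratio near $1+\tfrac2{m-1}$. Handling this, together with verifying $2$-D realizability of all the orderings involved and tracking the $\epsilon$ lost to the perturbations, is where I expect the real work to lie.
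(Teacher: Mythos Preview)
Your proposal is a plan, not a proof, and you yourself name the unresolved gap: you need a single profile, symmetric enough that no candidate is distinguished, yet for each omitted $c^\star$ admitting a consistent $2$-D metric in which $c^\star$ is essentially the center of the voters. You correctly observe the structural tension---the extremal ratio requires $c^\star$ to sit in nearly everyone's top two, which is incompatible with rotational symmetry among candidates---and you leave it standing. As written, the argument does not go through; the ``real work'' you flag is the entire proof.

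The paper's construction sidesteps this tension by dropping the symmetry requirement altogether. It uses only $m$ voters, one per candidate, with the fixed (asymmetric) profile in which voter $v_i$ ranks $c_i \succ c_{i-1} \succ \cdots \succ c_1 \succ c_{i+1} \succ \cdots \succ c_m$. In the base instance $I_0$, all candidates sit on the vertical line $x=-\ell$ and all voters on $x=0$, with $y$-coordinates $2^{i-1}/2^m$ so that these orderings are realized. The key trick is this: for each $j$, the metric $I_j$ is obtained from $I_0$ by moving \emph{both} $c_j$ and $v_j$ together to the line $x=+\ell$, keeping the $y$-coordinate. Reflection across $x=0$ leaves the distance from $c_j$ to every other voter unchanged, and $v_j$'s own ordering is preserved (its favorite $c_j$ is now at distance $0$, the others are unchanged), so the preference profile is identical across all $I_0,\dots,I_m$. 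In $I_j$, the single candidate $c_j$ costs at most $(m-1)(\ell+1)$, while any committee omitting $c_j$ pays at least $\ell$ for each of the $m-1$ voters on $x=0$ and at least $2\ell$ for $v_j$, totalling $(m+1)\ell$; taking $\ell=3/\epsilon-1$ gives the ratio $\tfrac{(m+1)\ell}{(m-1)(\ell+1)} \ge 1+\tfrac{2}{m-1}-\epsilon$. No rotational symmetry, no voter blocks, no perturbation bookkeeping---two parallel lines and a reflection suffice. The insight you are missing is that the profile need not be invariant under candidate permutations; it only needs to admit, for each $j$, \emph{some} consistent metric in which $c_j$ is the good single winner, and the move-the-pair trick manufactures these metrics uniformly.
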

\begin{proof}
    We consider a family of $m+1$ instances $I_0,I_1,\cdots,I_m$ with $m$ voters $\left\{v_1,v_2,\cdots,v_m\right\}$ and $m$ candidates $\left\{c_1,c_2,\cdots,c_m\right\}$ on a two-dimensional plane. In terms of voters' and candidates' locations on the plane, each of these instances is distinct. However, in terms of voters' preference orders on candidates (i.e., ordinal information), $I_0,I_1,\cdots,I_m$ are the same. To be more specific, let us describe these instances.

    Let $\ell=3/\epsilon - 1$. We first describe the instance $I_0$. For each $i \in [m]$, the candidate $c_i$ is located at the point $\left(-\ell,2^{i-1}/2^m\right)$. For each $i \in [m]$, the voter $v_i$ is located at the point $\left(0,2^{i-1}/2^m\right)$. (See Figure~\ref{fig:lower}.) It is straightforward to observe that, for each $i \in [m]$, the preference order of the voter $v_i$ is 
    \[
    c_i \succ c_{i-1}\succ \cdots\succ c_1 \succ c_{i+1} \succ c_{i+2}\succ \cdots \succ c_m. 
    \]

    Next, for each $j \in [m]$, we define the instance $I_j$ as follows: For each $i \ne j \in [m]$, the locations of the candidate $c_i$ and the voter $v_i$ are the same as that in the instance $I_0$. Both the candidate $c_j$ and the voter $v_j$ are located at the point $\left(\ell,2^{i-1}/2^m \right)$. (See Figure~\ref{fig:lower}.)  It is easy to observe that for any voter $v_i$ (where $i \ne j$), the distance to any candidate $c_j$ remains the same (as in $I_0$). Since the locations of other candidates remain the same, the preference order for any voter $v_i$ (where $i \ne j$) remains the same as in $I_0$. Let us now consider the voter $v_j$. Note that its closest candidate is $c_j$, and the distance to any other candidate $c_i$ remains the same as in $I_0$. Thus, the preference order of the voter $v_j$ is 
    \[
    c_j \succ c_{j-1}\succ \cdots\succ c_1 \succ c_{j+1} \succ c_{j+2}\succ \cdots \succ c_m 
    \]
    which is the same as that in $I_0$.

    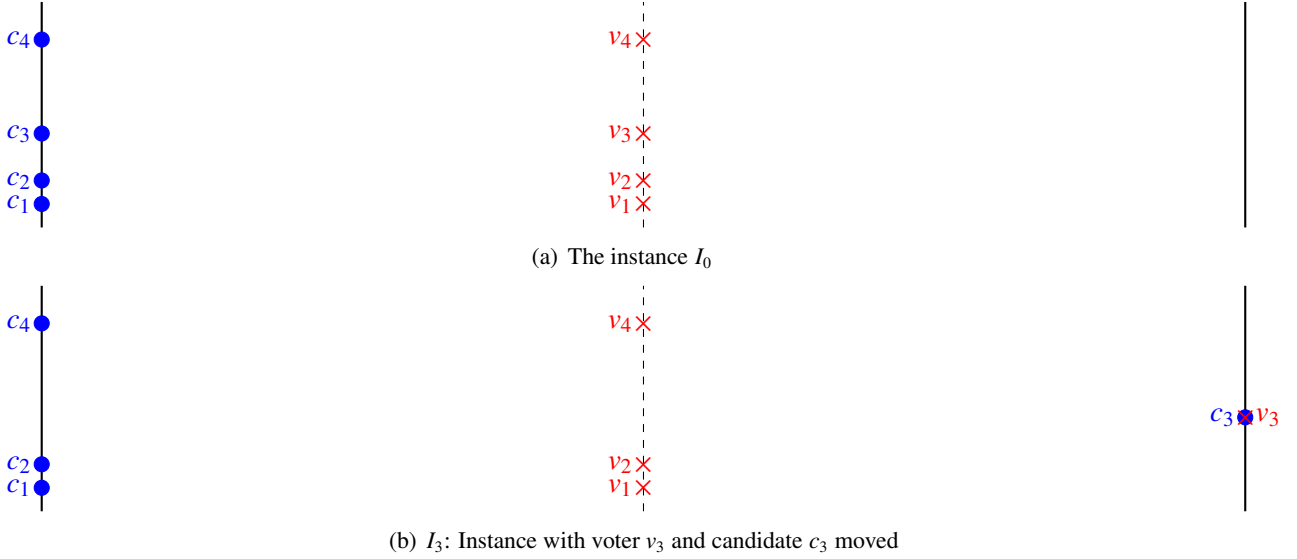
\begin{figure}[t]
        \centering
\subfigure[The instance $I_0$]{
\begin{tikzpicture}
    \def\m{5} 
    \def\d{8} 

    \draw[thick] (-\d, 0) -- (-\d, 3);
    \draw[thick] (\d, 0) -- (\d, 3);
    
    \draw[dashed] (0, 0) -- (0, 3);
    
    \foreach \i in {1,2,3,4} {
        \draw[blue, fill=blue] (-\d, {5 * 2^\i/2^\m}) circle (0.1) node[left] {$c_\i$};
    }

    \foreach \i in {1,2,3,4} {
        \draw[red, thick] (0, {5 * 2^\i/2^\m}) node[left] {$v_\i$};
        \draw[red, thick] (0, {5 * 2^\i/2^\m}) node {$\times$};
    }
\end{tikzpicture}
}
\subfigure[$I_3$: Instance with voter $v_3$ and candidate $c_3$ moved]{
\begin{tikzpicture}
    \def\m{5} 
    \def\d{8} 

    \draw[thick] (-\d, 0) -- (-\d, 3);
    \draw[thick] (\d, 0) -- (\d, 3);
    
    \draw[dashed] (0, 0) -- (0, 3);
    
    \foreach \i in {1,2,4} {
        \draw[blue, fill=blue] (-\d, {5 * 2^\i/2^\m}) circle (0.1) node[left] {$c_\i$};
    }
    \draw[blue, fill=blue] (\d, {5 * 2^3/2^\m}) circle (0.1) node[left] {$c_3$};
    \foreach \i in {1,2,4} {
        \draw[red, thick] (0, {5 * 2^\i/2^\m}) node[left] {$v_\i$};
        \draw[red, thick] (0, {5 * 2^\i/2^\m}) node {$\times$};
    }
    \draw[red, thick] (\d, {5 * 2^3/2^\m}) node[right] {$v_3$};
    \draw[red, thick] (\d, {5 * 2^3/2^\m}) node {$\times$};
\end{tikzpicture}
}
        \caption{Figures of the lower bound instances. In (a), all candidates are located on the line $x=-\ell$, with voters with matching $y$-coordinates on the line $x=0$. In (b), voter $v_3$ and candidate $c_3$ are moved to the line $x=\ell$, while keeping the same $y$-coordinate.}
        \label{fig:lower}
    \end{figure}

    Let us now consider any arbitrary deterministic algorithm $\alg$ that selects at most $m-1$ candidates. Now, suppose given the preference orders of voters as in $I_0$ (the same as in $I_1,\cdots,I_m$), $\alg$ selects a set $C$ of candidates where $|C| \le m-1$. Suppose $c_k \not \in C$, for some $k \in [m]$. Now, consider the instance $I_k$. Note, since voters' preference orders, i.e., ordinal information, are the same as in $I_0$, $\alg$ also selects the set $C$ for the instance $I_k$. Then 
    \begin{align*}
        \cost_s(C,I_k) & = \sum_{i \in [m]} d(v_i,C)\\
        & \ge \sum_{i \ne k} \ell + 2\ell = (m+1) \ell. 
    \end{align*}

    On the other hand, selecting only the candidate $c_k$ for the instance $I_k$ would lead to the cost of
    \[
    \cost_s(c_k,I_k) = \sum_{i \in [m]} \left|\left| v_i - c_k  \right|\right|_2 \le \sum_{i \ne k} (\ell+1) = (m-1)\left(\ell + 1\right)
    \]
    and thus the optimal cost $\opt(I_k) \le (m-1)\left(\ell + 1\right)$.

    Hence, the distortion factor of $\alg$ on the instance $I_k$ is 
    \begin{align*}
    \dist(\alg)&= \frac{\cost_s(C,I_k)}{\opt(I_k)} \\
    &\geq \frac{(m+1)\ell}{(m-1)(\ell+1)}\\ 
    &=\left(1 + \frac{2}{m-1}\right)\left(1 - \frac{1}{\ell+1}\right)\\
    &=1 + \frac{2}{m-1} - \frac{1}{\ell+1} \left(1 + \frac{2}{m-1}\right)\\
    & \ge 1 + \frac{2}{m-1} - \epsilon &&\text{(since }\ell=3/\epsilon - 1\text{)}.
\end{align*}
\end{proof}

Next, we generalize our lower bound to any algorithm that picks at most $r$ candidates.

\subsection{Lower bounds for the max-cost}
\label{sec:max-cost-LB}
In this section, we provide lower bounds on the distortion for the committee election problem when considering the max-objective. In the 2-D Euclidean space, we show that even when choosing $m-1$ candidates out of $m$, we cannot guarantee a distortion of less than $3$, matching the distortion achieved by simple algorithms when selecting only one candidate. In addition, we provide lower bounds when considering the line metric, showing that our algorithms that select at most $k=2$ and $k=3$ candidates, respectively, achieve tight bounds in terms of distortion. 

We first show that any deterministic algorithm choosing at most $k < m$ candidates (out of $m$ candidates) cannot achieve a distortion strictly lower than $3$ when compared to a single optimal candidate, even when the candidates and voters are located in the two-dimensional plane. 
\begin{theorem}
    \label{thm:max-cost-lb}
    Any deterministic algorithm for the $k$-committee election (with respect to the max-cost) that, selects at most $k < m$ candidates out of $m$ candidates, must have a 1-distortion of at least $3-\epsilon$ for any $\epsilon>0$. 
\end{theorem}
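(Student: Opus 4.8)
The plan is to mimic the structure of the proof of Theorem \ref{thm:LB-sum-all}: exhibit a single preference profile $\Pi$ on $m$ candidates together with a family of $2$-D Euclidean realizations $I_1,\dots,I_m$, all inducing exactly $\Pi$, such that in $I_k$ omitting candidate $c_k$ is catastrophic for the max-cost. Since any committee of size at most $k<m$ omits at least one candidate, and a deterministic algorithm that only sees $\Pi$ outputs the same committee $C$ on every member of the family, the adversary is free to evaluate $C$ on the realization $I_{k^\star}$ for an omitted $c_{k^\star}$.

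I would take $\Pi$ to have voter $v_i$'s top choice equal to $c_i$ (as in Theorem \ref{thm:LB-sum-all}), so that $v_i$ can play the role of a ``witness voter'' in $I_i$. The target geometry of $I_k$, normalized so that $\opt(I_k)\approx 1$, is: place $c_k$ at the origin; place the remaining candidates $c_j$ ($j\neq k$) in a narrow angular cluster at distance $\approx 2$ from the origin on one side; place voter $v_j$ ($j\neq k$) just past the midpoint of the segment $\overline{c_k c_j}$ toward $c_j$, so that $d(v_j,c_k)\le 1+\epsilon$ while $d(v_j,c_j)<d(v_j,c_k)$; and place the witness voter $v_k$ on the opposite side, at distance $1$ from $c_k$, so that (by the collinear bound, up to the angular width) $d(v_k,c_j)\approx 1+2 = 3$ for every $j\neq k$ while $d(v_k,c_k)=1$. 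With this placement $\cost_m(c_k,I_k)\le 1+\epsilon$, hence $\opt(I_k)\le 1+\epsilon$; and for any committee $C$ with $c_k\notin C$ we have $C\subseteq\{c_j:j\neq k\}$, so $\cost_m(C,I_k)\ge d(v_k,C)\ge 3-\epsilon$. Combining these bounds yields a $1$-distortion of at least $(3-\epsilon)/(1+\epsilon)$ on $I_k$, which exceeds $3-\epsilon'$ for any prescribed $\epsilon'>0$ once the cluster's angular width and the midpoint ``nudge'' are chosen small enough.

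The main obstacle, and the bulk of the technical work, is the consistency check: each $I_k$ must induce the \emph{same} profile $\Pi$, even though the candidate and voter coordinates vary with $k$. The degrees of freedom are the per-instance angular positions of the clustered candidates and the per-instance sizes of the midpoint nudges. One must verify these can be set so that (i) in $I_k$ the witness voter $v_k$ ranks the other candidates in the order prescribed by $\Pi$ — which pins down the angular order of the cluster — and simultaneously (ii) every other voter $v_j$ ranks $\{c_{j'}:j'\neq j\}$ together with $c_k$ in the order prescribed by $\Pi$; here one uses that, from $v_j$'s vantage point, $c_k$ (at distance $1\pm\nu_j$) behaves like a candidate at ``angular distance $\sqrt{2\nu_j}$'' within the cluster, which $v_j$ otherwise orders by angular proximity. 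This is a system of strict inequalities in the angles and the $\nu_j$'s; one argues feasibility for a suitable $\Pi$ (the nested profile of Theorem \ref{thm:LB-sum-all} or a symmetric variant) and then applies an arbitrarily small generic perturbation so that all pairwise voter–candidate distances are distinct and all induced preferences are strict. A cleaner variant is to take $\Pi$ fully symmetric (e.g.\ cyclically symmetric), construct a single base instance, and obtain every $I_k$ by relabeling candidates and voters; then only one consistency computation is required.

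Finally I would assemble the pieces: fix $\epsilon'>0$, tune the parameters so the family achieves ratio at least $3-\epsilon'$; run the algorithm on the common profile to obtain a committee $C$ with $|C|\le k<m$; since $m-|C|\ge 1$, pick $k^\star$ with $c_{k^\star}\notin C$ and evaluate on $I_{k^\star}$ to get $\cost_m(C,I_{k^\star})/\opt(I_{k^\star})\ge 3-\epsilon'$. As $\epsilon'>0$ was arbitrary, no deterministic algorithm selecting at most $k<m$ candidates can achieve $1$-distortion better than $3$ on $2$-D Euclidean instances. (Note the construction genuinely needs the second dimension: the angular spread is what makes the clustered candidates ordinally distinguishable, which is consistent with the fact that the matching $1$-distortion upper bounds of Section \ref{sc:upper-bounds} are line-specific.)
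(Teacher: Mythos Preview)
Your high-level strategy is exactly the paper's: build a single ordinal profile $\Pi$ and a family $\{I_k\}_{k\in[m]}$ of $2$-D realizations of $\Pi$ so that any committee omitting $c_k$ pays max-cost $\approx 3$ on $I_k$ while $\opt(I_k)\approx 1$; then feed $\Pi$ to the deterministic algorithm, find an omitted $c_{k^\star}$, and evaluate on $I_{k^\star}$.

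Where you diverge is in the geometry, and the paper's choice is considerably simpler than your angular-cluster plan. The paper literally reuses the configuration of Theorem~\ref{thm:LB-sum-all}: all candidates on the vertical line $x=-\ell$ and all voters on $x=0$, with $y$-coordinates $2^{i-1}/2^m$. To form $I_j$, it reflects $c_j$ to $x=\ell$ and pushes $v_j$ out to $x=2\ell$, keeping the $y$-coordinate. The consistency check you flag as ``the bulk of the technical work'' is then essentially free: reflecting $c_j$ across the $y$-axis leaves its distance to every voter on $x=0$ unchanged, so those voters' rankings are untouched; and $v_j$ still ranks $c_j$ first and the remaining candidates by $y$-distance, exactly as in the base instance. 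The payoff is immediate: $d(v_j,C)\ge 3\ell$ whenever $c_j\notin C$, while $\cost_m(c_j,I_j)\le \ell+1$, giving ratio $3\ell/(\ell+1)=3-\epsilon$ for $\ell=3/\epsilon-1$.

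Your construction could be made to work, but as you yourself note, the feasibility of the angular/nudge system is asserted rather than verified, and a cyclically symmetric $\Pi$ would still require one nontrivial computation. The paper's reflection trick sidesteps all of this: it turns the consistency check into a one-line observation and yields exact bounds rather than ``$\approx 3$'' estimates. If you want to tighten your write-up, the cleanest fix is to abandon the cluster geometry and adopt the paper's two-vertical-lines layout with a horizontal reflection for the designated candidate and a further horizontal translation for its witness voter.
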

\begin{proof}
    We proceed similarly to Theorem \ref{thm:LB-sum-all}, using a family of instances $I_0,I_1,\ldots,I_m$ on the two-dimensional plane, such that the voters' ordinal preferences are the same in all of these instances, but the locations and distances of candidates and voters vary. In each instance, we have $m$ voters $\{v_1,v_2,\ldots,v_m\}$ and $m$ candidates $\{c_1,c_2,\ldots,c_m\}$.

    Let $\ell=3/\epsilon-1$. We use the same instance $I_0$ as in Theorem \ref{thm:LB-sum-all}, where for each $i\in[m]$, candidate $c_i$ is located at $(-\ell,2^{i-1}/2^m)$ and voter $v_i$ is located at $(0,2^{i-1}/2^m)$. Then, for each $i \in [m]$, voter $v_i$ will have the ordinal preference
    \[
    c_i \succ c_{i-1}\succ \cdots\succ c_1 \succ c_{i+1} \succ c_{i+2}\succ \cdots \succ c_m. 
    \]
    in instance $I_0$.

    Next, we define instance $I_j$ for each $j\in[m]$. For each $i\in [m] \setminus \{j\}$, $v_i$ and $c_i$ will remain in the same location as in $I_0$. Meanwhile, candidate $c_j$ is moved to $(\ell,2^{j-1}/2^m)$ and voter $v_j$ to $(2\ell,2^{j-1}/2^m)$. It can be seen that for any $i\neq j$, the distance of voter $v_i$ to candidate $c_j$ will be unchanged compared to $I_0$, and therefore the voter's ordinal preference will remain the same. For voter $j$, its closest candidate will remain $c_j$. In addition, its preference for the other candidates will remain unchanged. Therefore, voter $v_j$'s ordinal preference will remain the same as in $I_0$ too and all voters' ordinal preferences are identical in $I_0$ and $I_j$. An example of these instances is illustrated in Figure \ref{fig:lower-max}.

    Now, consider an arbitrary deterministic algorithm $\alg$ that selects $k<m$ candidates. Since $k<m$, when running $\alg$ on $I_0$, there exists a candidate $c_j$ that is not in the set of selected candidates $C$. Now, consider the algorithm's performance on $I_j$. Since the voter's ordinal preferences are the same in $I_0$ and $I_j$, and $\alg$ operates using only this information, its output on $I_j$ must be the same as in $I_0$ and therefore it does not select $c_j$. Now, we can lower bound the cost of $\alg$ with respect to the max objective as
    \begin{align*}
        \cost_m(C,I_0)
        &\geq d(v_j, C)\\
        &\geq 3\ell. \tag{Since $c_j\not\in C$}
    \end{align*}
    On the other hand, if we only select $c_j$, we get
    \begin{align*}
        \cost_m(c_j,I_0)&=\max_{i\in[m]}\left|\left| v_i - c_j  \right|\right|_2 \\
        &\leq \ell + 1.
    \end{align*}
    This shows that the optimal cost $\opt_m(I_j) \leq \ell + 1$. Finally, combining these two, we get that the distortion of $\alg$ on instance $I_j$ is at least
    \begin{align*}
        \dist(\alg)&\geq \frac{\cost_m(C,I_k)}{\opt_m(I_k)}\\
        &\geq \frac{3\ell}{\ell+1}\\
        &=3-\frac{3}{\ell+1}\\
        &=3-\epsilon. \tag{$\ell=3/\epsilon-1$}
    \end{align*}
\end{proof}

\begin{figure}[t]
        \centering
\subfigure[The instance $I_0$]{
\begin{tikzpicture}
    \def\m{5} 
    \def\d{5} 

    \draw[thick] (-\d, 0) -- (-\d, 3);
    \draw[thick] (\d, 0) -- (\d, 3);
    
    \draw[dashed] (0, 0) -- (0, 3);
    \draw[dashed] (2 * \d, 0) -- (2 * \d, 3);

    \foreach \i in {1,2,3,4} {
        \draw[blue, fill=blue] (-\d, {5 * 2^\i/2^\m}) circle (0.1) node[left] {$c_\i$};
    }

    \foreach \i in {1,2,3,4} {
        \draw[red, thick] (0, {5 * 2^\i/2^\m}) node[left] {$v_\i$};
        \draw[red, thick] (0, {5 * 2^\i/2^\m}) node {$\times$};
    }
\end{tikzpicture}
}
\subfigure[$I_3$: Instance with voter $v_3$ and candidate $c_3$ moved]{
\begin{tikzpicture}
    \def\m{5} 
    \def\d{5} 

    \draw[thick] (-\d, 0) -- (-\d, 3);
    \draw[thick] (\d, 0) -- (\d, 3);
    
    \draw[dashed] (0, 0) -- (0, 3);
    \draw[dashed] (2 * \d, 0) -- (2 * \d, 3);
    \foreach \i in {1,2,4} {
        \draw[blue, fill=blue] (-\d, {5 * 2^\i/2^\m}) circle (0.1) node[left] {$c_\i$};
    }
    \draw[blue, fill=blue] (\d, {5 * 2^3/2^\m}) circle (0.1) node[left] {$c_3$};
    \foreach \i in {1,2,4} {
        \draw[red, thick] (0, {5 * 2^\i/2^\m}) node[left] {$v_\i$};
        \draw[red, thick] (0, {5 * 2^\i/2^\m}) node {$\times$};
    }
    \draw[red, thick] (2 * \d, {5 * 2^3/2^\m}) node[right] {$v_3$};
    \draw[red, thick] (2 * \d, {5 * 2^3/2^\m}) node {$\times$};
\end{tikzpicture}
}
        \caption{Figures of the lower bound instances for the max objective. In (a), all candidates are located on the line $x=-\ell$, with voters with matching $y$-coordinates on the line $x=0$. In (b), candidate $c_3$ is moved to the line $x=\ell$, and voter $v_3$ is moved to the line $x=2\ell$ while keeping the same $y$-coordinates.}
        \label{fig:lower-max}
    \end{figure}
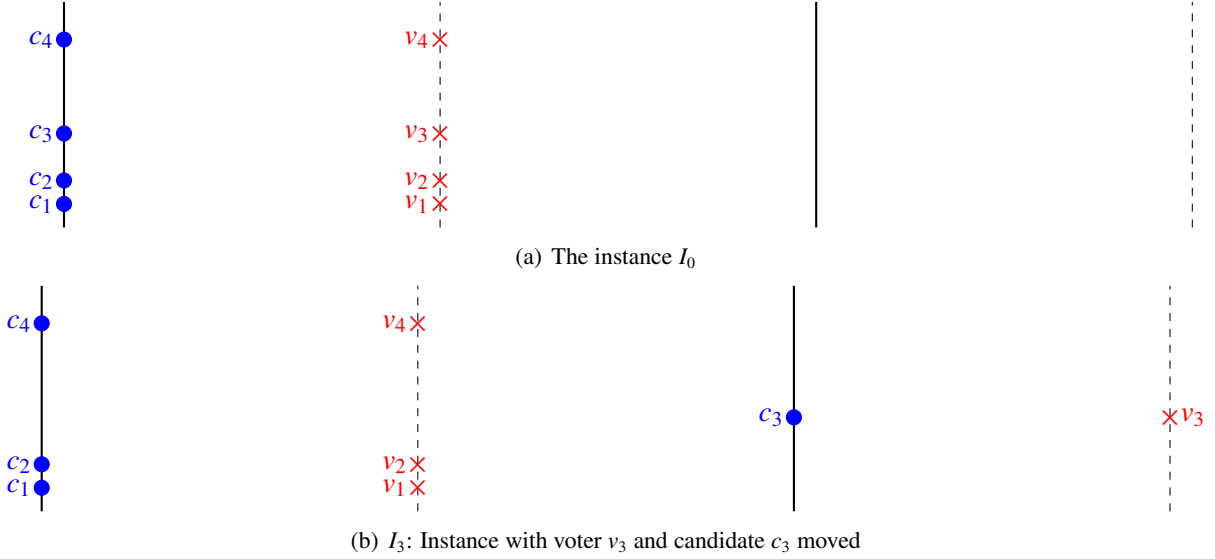

Next, we show that our upper bounds for the line metric are tight by proving matching lower bounds. We begin with the case $k=2$.

\begin{theorem}
    \label{thm:lb-max-line2}
    Any deterministic algorithm for the 2-committee election (with respect to the max-cost) when voters and candidates are located on a line must have a 1-distortion of at least $2-\epsilon$ for any $\epsilon>0$. 
\end{theorem}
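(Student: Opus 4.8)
The plan is to build, for each $\epsilon>0$, one election instance on three candidates whose \emph{ordinal} data is fixed, but which admits three different line metrics — one designed to punish each candidate the algorithm might leave out — so that on whatever committee a deterministic rule returns, a consistent metric witnesses a ratio of at least $2-\epsilon$.

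Take $C=\{c_1,c_2,c_3\}$, $V=\{v_1,v_2,v_3\}$, and the profile $v_1:\ c_1\GE{1}c_2\GE{1}c_3$, $v_2:\ c_2\GE{2}c_3\GE{2}c_1$, $v_3:\ c_3\GE{3}c_2\GE{3}c_1$, which is realizable on a line with $c_1<c_2<c_3$. For each candidate $c^\ast$ I would exhibit a line metric $\mu^{(c^\ast)}\vartriangleright\elec$ in which, after scaling so that $\opt=1$: $c^\ast$ sits at the origin and is the \emph{unique} optimal single winner with $\cost_m(c^\ast)=1$; the voter whose top choice is $c^\ast$ is placed within distance $1$ of the origin but at distance at least $2-\epsilon$ from each of the other two candidates; and the remaining two voters also lie within distance $1$ of the origin. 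Then $\cost_m(C\setminus\{c^\ast\},\mu^{(c^\ast)})\ge 2-\epsilon=(2-\epsilon)\,\opt$ via that ``witness'' voter, while every single candidate other than $c^\ast$ — being far from the witness — has cost exceeding $1$. Concretely one places the two members of $C\setminus\{c^\ast\}$ just inside distance $1$ of $c^\ast$ (on opposite sides of $c^\ast$ when $c^\ast=c_2$, on the side away from the witness when $c^\ast\in\{c_1,c_3\}$). Given a deterministic mechanism $\alg$, it outputs on $\elec$ a committee $S$ with $|S|\le2<3$; choosing $c^\ast\in C\setminus S$ and invoking $\mu^{(c^\ast)}$ gives $\cost_m(S,\mu^{(c^\ast)})\ge\cost_m(C\setminus\{c^\ast\},\mu^{(c^\ast)})\ge(2-\epsilon)\,\opt$, so $\dist(\alg)\ge2-\epsilon$; letting $\epsilon\to0$ gives the theorem.

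The work is in checking the three metrics in detail. For each $c^\ast$ I would fix explicit coordinates and verify (i) that the induced ordinal profile is \emph{exactly} $\succ$ — the subtle part, since the two ``interior'' voters sit very near the two candidates of $C\setminus\{c^\ast\}$, so their strict second- and third-place preferences must be arranged with care; (ii) that the three voters all lie in $[-1,1]$ around $c^\ast$ (hence $\cost_m(c^\ast)=1$) while the other two candidates have cost $>1$, so $c^\ast$ is the unique optimum; and (iii) the witness bound $\cost_m(C\setminus\{c^\ast\})\ge2-\epsilon$. It is worth noting that the $\epsilon$-slack is genuinely forced: a voter that \emph{strictly} prefers one of the two members of $C\setminus\{c^\ast\}$ as its favorite cannot be placed at the endpoint $\pm1$, which pushes that candidate strictly inside distance $1$ of $c^\ast$ and hence strictly within distance $2$ of the witness — this is exactly why the construction gives $2-\epsilon$ and not $2$.

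The main obstacle is precisely this joint consistency. In the planar lower bounds of Theorems~\ref{thm:LB-sum-all} and~\ref{thm:max-cost-lb} one ``reflects'' a single candidate across an axis, preserving all of its distances to the other voters; on a line no such move exists, so relocating a candidate perturbs every pairwise distance and the three metrics $\mu^{(c_1)},\mu^{(c_2)},\mu^{(c_3)}$ must be engineered simultaneously rather than obtained from one another. Arranging that the single-winner optimum migrates among $c_1,c_2,c_3$ across the three metrics while each voter's entire ranking stays frozen is the heart of the proof, and it is this rigidity of the line — which the two-dimensional constructions escape — that caps the attainable factor at $2$ instead of $3$.
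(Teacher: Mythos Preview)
Your overall strategy---fix a three-candidate, three-voter profile and build three consistent line metrics, one designed to punish each candidate the committee might omit---is exactly the paper's approach, and with the right coordinates it works. But the concrete placement you sketch for $c^\ast=c_2$ is impossible as written. You propose putting $c_1$ and $c_3$ ``just inside distance~$1$ of $c^\ast$'' on opposite sides, and then placing the witness $v_2$ within distance~$1$ of the origin but at distance at least $2-\epsilon$ from each of $c_1,c_3$. If $c_2=0$, $c_1\in(-1,0)$ and $c_3\in(0,1)$, then every point of $[-1,1]$ is within distance strictly less than~$1$ of at least one of $c_1,c_3$; no witness with the required property can exist. So your recipe ``other two candidates just inside distance~$1$'' is correct for the boundary cases $c^\ast\in\{c_1,c_3\}$ but breaks for the interior candidate.

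The fix---and this is what the paper actually does in its second instance---is to push $c_1$ and $c_3$ out to distance close to~$2$ from $c_2$ (e.g.\ $c_1$ near $-2$, $c_3$ near $2$), so that a witness $v_2$ sitting near the origin is at distance close to~$2$ from both; the remaining voters $v_1,v_3$ are then squeezed just inside $[-1,1]$ near the midpoints with their favorites, which keeps $\cost_m(c_2)=1$. Your placements for $c^\ast\in\{c_1,c_3\}$ are fine. As a minor aside, your profile has $v_2:c_2\succ c_3\succ c_1$ while the paper uses $v_2:c_2\succ c_1\succ c_3$; either choice is realizable on a line and supports all three metrics, it only affects on which side of the origin $v_2$ must sit in the $c^\ast=c_2$ construction.
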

\begin{proof}
    
We consider the following instance with three voters $\{v_1,v_2,v_3\}$ and three candidates $\{c_1,c_2,c_3\}$. 
Let the preferences of voters be as follows:
\begin{align*}
    v_1: c_1\succ c_2\succ c_3\\
    v_2: c_2\succ c_1\succ c_3\\
    v_3: c_3\succ c_2\succ c_1.
\end{align*}
Next, we consider three possible placements for the voters and candidates that respect the above (desired) preference order of voters, as shown in Figure \ref{fig:lb-max-line2}. In all instances, $v_1$ is located at point $-1$, and $v_3$ at point $1$. In the first instance, $v_2$ is located at $0.5$. Candidates $c_1,c_2$, $c_3$ are located at points $0,1-\epsilon$ and $1+\epsilon/2$ respectively. It is easy to see that the voters' preferences in this instance match our desired orders.
Now, we can see that in this instance, $c_1$ has a distance of at most $1$ to every voter, while both $c_2$ and $c_3$ have a distance of at least $2-\epsilon$ from $v_1$. Therefore, not choosing $c_1$ will lead to a distortion of at least $2-\epsilon$.
\begin{figure}[t]
        \centering
\subfigure[First instance: Choosing $c_1$ achieves a cost of $1$, while the other two candidates have a distances of at least $2-\epsilon$ to $v_1$.]{
\begin{tikzpicture}
    \def\d{3} 
    \def\l{8} 

    \draw[thick] (-\l, 0) -- (\l, 0);
    
    \draw[blue, fill=blue] (0, 0) circle (0.1) node[below] {$c_1$};
    \draw[blue, fill=blue] (\d-0.4, 0) circle (0.1) node[below] {$c_2$};
    \draw[blue, fill=blue] (\d + 0.3, 0) circle (0.1) node[below] {$c_3$};
    \foreach \i in {1,3} {
        \draw[red, thick] (-2 * \d +\i * \d, 0) node[above] {$v_\i$};
        \draw[red, thick] (-2 * \d +\i * \d, 0) node {$\times$};
    }
    \draw[red, thick] (\d *0.5, 0) node[above] {$v_2$};
        \draw[red, thick] (\d * 0.5, 0) node {$\times$};
\end{tikzpicture}
}
\subfigure[Second instance: Choosing $c_2$ achieves a cost of $1$, while the other two candidates have a distance of at least $2-\epsilon$ to $v_2$.]{
\begin{tikzpicture}
    \def\d{3} 
    \def\l{8} 

    \draw[thick] (-\l, 0) -- (\l, 0);
    
    \draw[blue, fill=blue] (-2*\d + 0.1, 0) circle (0.1) node[below] {$c_1$};
    \draw[blue, fill=blue] (0, 0) circle (0.1) node[below] {$c_2$};
    \draw[blue, fill=blue] (2*\d - 0.05, 0) circle (0.1) node[below] {$c_3$};
    \foreach \i in {1,2,3} {
        \draw[red, thick] (-2 * \d +\i * \d, 0) node[above] {$v_\i$};
        \draw[red, thick] (-2 * \d +\i * \d, 0) node {$\times$};
    }
\end{tikzpicture}
}
\subfigure[Third instance: Choosing $c_3$ achieves a cost of $1$, while the other two candidates have a distances of at least $2-\epsilon$ to $v_3$.]{
\begin{tikzpicture}
    \def\d{3} 
    \def\l{8} 

    \draw[thick] (-\l, 0) -- (\l, 0);
    
    \draw[blue, fill=blue] (-\d, 0) circle (0.1) node[below] {$c_1$};
    \draw[blue, fill=blue] (-\d + 0.3, 0) circle (0.1) node[below] {$c_2$};
    \draw[blue, fill=blue] (0, 0) circle (0.1) node[below] {$c_3$};
    \foreach \i in {1,3} {
        \draw[red, thick] (-2 * \d +\i * \d, 0) node[above] {$v_\i$};
        \draw[red, thick] (-2 * \d +\i * \d, 0) node {$\times$};
    }
    \draw[red, thick] (-\d + 0.3, 0) node[above] {$v_2$};
        \draw[red, thick] (-\d + 0.3, 0) node {$\times$};
\end{tikzpicture}
}
        \caption{Possible locations of the voters and candidates in the lower bound instance for $k=2$. }
        \label{fig:lb-max-line2}
    \end{figure}
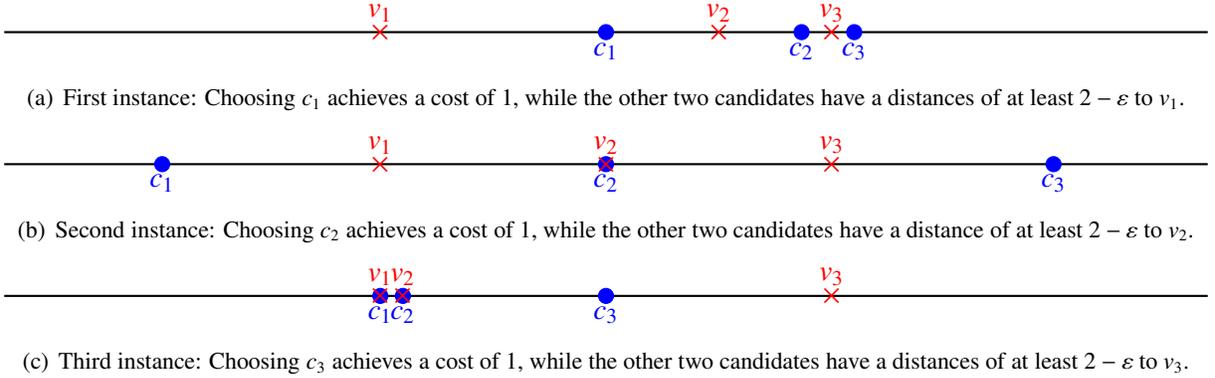

In the second instance, we place $v_2$ and $c_2$ at point $0$, $c_1$ at point $-2+\epsilon$, and $c_3$ at point $2-\epsilon/2$. Again, we can see that the voters' preferences will follow the desired orders. In this instance, $c_2$ has a distance of at most $1$ from all voters, while $v_2$ is at a distance of at least $2-\epsilon$ from the other candidates. Therefore, not choosing $c_2$ will lead to a distortion of at least $2-\epsilon$.

Finally, in the third instance, we place $c_1$ at point $-1$, $c_2$ and $v_2$ at point $-1+\epsilon$ and $c_3$ at point $0$. This placement will also respect the desired preference orders for each voter. Additionally, since $c_3$ has a distance of at most $1$ from all voters and $v_3$ has a distance of at least $2-\epsilon$ from the other candidates, not choosing $c_3$ leads to a distortion of at least $2-\epsilon$.

Now, since these instances cannot be distinguished based on the voters' ordinal preferences, and not choosing any of the candidates leads to a distortion of at least $2-\epsilon$, any deterministic algorithm selecting two candidates cannot achieve a distortion better than $2-\epsilon$.

\end{proof}


Lastly, we provide a tight lower bound for the case $k=3$.

\begin{theorem}
    \label{thm:lb-max-line3}
    Any deterministic algorithm for the 3-committee election (with respect to the max-cost) when voters and candidates are located on a line must have a 1-distortion of at least $3/2-\epsilon$ for any $\epsilon>0$. 
\end{theorem}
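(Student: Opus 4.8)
The plan is to adapt the adversarial family-of-instances technique already used for the $k=2$ line bound (Theorem~\ref{thm:lb-max-line2}) and for the $2$-D bounds (Theorems~\ref{thm:LB-sum-all} and~\ref{thm:max-cost-lb}). Concretely, I would exhibit a small constant number of candidates (I expect four, one more than the committee size) together with a family of line-metric instances $I_1,\dots,I_4$ that all induce \emph{exactly the same} ordinal preference profile, yet such that in instance $I_j$ the candidate $c_j$ is the unique optimal single winner (normalize $\opt(I_j)=1$) while every committee of at most three candidates that excludes $c_j$ has max-cost at least $3/2-\epsilon$. Granting this, the theorem is immediate: a deterministic rule run on the common profile outputs one committee $C$ with $|C|\le 3<4$, so $c_j\notin C$ for some $j$; since the profile is identical in $I_j$, the rule outputs the same $C$ there, and its distortion on $I_j$ is at least $(3/2-\epsilon)/1$.

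The geometric heart of each $I_j$ is a tight instantiation of the configuration underlying the matching upper bound (the rule selecting $\{c_l,c_r,c_r'\}$). There, the loss of $3/2$ arises because the two covering candidates available to a three-committee can be up to $3\opt$ apart on the line — via $d(c_l,v_l)+d(v_l,\copt)+d(\copt,c_r^*)\le 3\opt$ — leaving a voter at the midpoint at distance $3\opt/2$. So in $I_j$ I would place $c_j$ in the role of $\copt$, one decoy candidate at distance $\opt$ to one side and a second decoy a further $\opt$ beyond it, a voter at the midpoint of the two extreme decoys (at distance exactly $3\opt/2$ from each), and the third decoy pushed just past one of the extreme decoys so it cannot help. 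Routine checks: $c_j$ really is the optimal single candidate (each decoy, being $\ge 3\opt/2$ from the midpoint voter, has single-winner cost $>\opt$), and any committee avoiding $c_j$ of size $1$ or $2$ is no better than the full size-$3$ decoy set, so it suffices to bound the latter.

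The step I expect to be the real obstacle is forcing all four instances to share one ordinal profile. On a line a voter's preference order is precisely the order of the candidates by distance, so if a candidate moves from ``central'' in $I_i$ to ``extremal'' in $I_j$, a voter's relative ranking of it against another candidate can flip, which is forbidden. The fix, as in the $k=2$ construction, is to keep the candidates in a fixed left-to-right order $c_1,c_2,c_3,c_4$ in every instance and each voter inside a fixed zone relative to those candidates, varying only the inter-point spacings across instances, together with $\Theta(1)$ independent $\epsilon$-scale perturbations to make every preference order strict and to break the ties that occur when a voter is equidistant from two candidates on opposite sides. The delicate bookkeeping is verifying that a single profile survives all four placements — in particular that a decoy candidate, which sits near an extreme position in some instance, is still the first choice of its ``home'' voter there without that voter being pulled outside the span on which $c_j$ achieves cost $1$. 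This consistency constraint is exactly what pins the bound to $3/2$ rather than $2$ (it prevents pushing the decoys any further out), which is consistent with the $3/2$ upper bound of Section~\ref{sc:upper-bounds}; the pigeonhole argument above then yields the stated lower bound for every deterministic rule.
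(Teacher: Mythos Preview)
Your proposal is correct and follows essentially the same approach as the paper: a family of four line-metric instances on four candidates sharing a single ordinal profile, with $c_j$ optimal in $I_j$ and every committee omitting $c_j$ incurring cost at least $(3/2-\epsilon)\cdot\opt$, so pigeonhole forces any deterministic $3$-committee rule to fail on some $I_j$. The paper's concrete instances differ slightly in geometry from your sketch (the ``bad'' voter $v_j$ sits at an extreme of the span rather than at a midpoint between decoys, and the paper exploits a left--right symmetry to get $I_3,I_4$ as mirrors of $I_2,I_1$), but the framework, the parameter count, and the consistency bookkeeping you flag as the real obstacle are exactly what the paper carries out.
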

\begin{proof}
    
We consider instances $\{I_1,I_2,I_3,I_4\}$ each with four voters $\{v_1,v_2,v_3,v_4\}$ and four candidates $\{c_1,c_2,c_3,c_4\}$ such that the preferences of voters are as follows in every instance:
\begin{align*}
    v_1: c_1\succ c_2\succ c_3\succ c_4\\
    v_2: c_2\succ c_1\succ c_3 \succ c_4\\
    v_3: c_3\succ c_4\succ c_2\succ c_1\\
    v_4: c_4\succ c_3\succ c_2\succ c_1.
\end{align*}
For instance $I_i$, we choose the location of voters and candidates so that candidate $c_i$ has a distance of at most $2$ to each voter, while voter $v_i$ has a distance of at least $3-2\epsilon$ to every candidate except $c_i$. This leads to a distortion of at least $3/2 - \epsilon$ for any deterministic algorithm $\alg$, as these instances cannot be distinguished based on ordinal preferences, and at least one candidate $c_i$ is not chosen in instance $I_i$ by $\alg$. These instances are shown in Figure \ref{fig:lb-max-line3}.

In instance $I_1$, we have voters $v_1,v_2,v_3,v_4$ located at points $-2,1-\epsilon,2-\epsilon,2$ and candidates $c_1,c_2,c_3,c_4$ located at points $0,1,2-\epsilon,2$ respectively. It can be seen that the ordinal preference of each voter in this instance matches the desired ordering. Now, $c_1$ has a distance of at most $2$ to every voter in this instance, while $v_1$ is at a distance of at least $3$ from every candidate except $c_1$. So, if $c_1$ is not chosen, we get a distortion of at least $3/2$.

In instance $I_2$, we have voters $v_1,v_2,v_3,v_4$ located at points $-2,-1-\epsilon,2-\epsilon,2$ and candidates $c_1,c_2,c_3,c_4$ located at points $-4+\epsilon,0,2-\epsilon,2$ respectively. Once again, we can see that the ordering for voters' preferences is respected:
\begin{align*}
    &d(v_1,c_1)=2-\epsilon < d(v_1,c_2)=2 < d(v_1,c_3)=4-\epsilon < d(v_1,c_4)=4\\
    &d(v_2,c_2)=1+\epsilon < d(v_2,c_1)=3-2\epsilon < d(v_2,c_3)=3 < d(v_1,c_4)=3+\epsilon\\
    &d(v_3,c_3)=0 < d(v_3,c_4)=\epsilon < d(v_3,c_2)=2-\epsilon < d(v_3,c_1)=6-2\epsilon\\
    &d(v_4,c_4)=0 < d(v_4,c_3)=\epsilon < d(v_4,c_2)=2 < d(v_4,c_1)=6-\epsilon.
\end{align*}
In addition, since $c_2$ is at a distance of at most $2$ from every voter and every candidate except $c_2$ has a distance of at least $3-2\epsilon$ from $v_2$, not choosing $c_2$ leads to a distortion of at least $3/2-\epsilon$.

Based on the symmetry in voters' preferences between $c_1,c_2$ and $c_4,c_3$, we create mirror versions of $I_1$ and $I_2$ as $I_4$ and $I_3$ respectively, so that not choosing $c_4$ or $c_3$ would lead to a distortion of at least $3/2 - \epsilon$. Therefore, since any deterministic algorithm must omit one of the candidates, we cannot achieve a distortion of better than $3/2-\epsilon$ in this case.
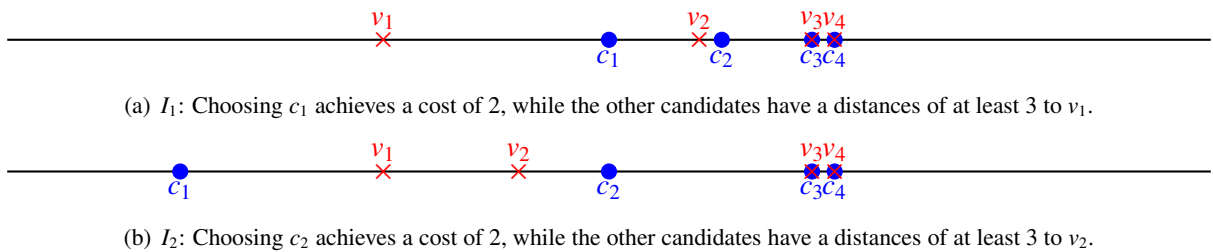
\begin{figure}[t]
        \centering
\subfigure[$I_1$: Choosing $c_1$ achieves a cost of $2$, while the other candidates have a distances of at least $3$ to $v_1$.]{
\begin{tikzpicture}
    \def\d{1.5} 
    \def\l{8} 

    \draw[thick] (-\l, 0) -- (\l, 0);
    
    \draw[blue, fill=blue] (0, 0) circle (0.1) node[below] {$c_1$};
    \draw[blue, fill=blue] (\d, 0) circle (0.1) node[below] {$c_2$};
    \draw[blue, fill=blue] (2 * \d - 0.3, 0) circle (0.1) node[below] {$c_3$};
    \draw[blue, fill=blue] (2 * \d, 0) circle (0.1) node[below] {$c_4$};
    \draw[red, thick] (-2 * \d , 0) node[above] {$v_1$};
    \draw[red, thick] (-2 * \d, 0) node {$\times$};
    \draw[red, thick] (\d-0.3, 0) node[above] {$v_2$};
    \draw[red, thick] (\d-0.3, 0) node {$\times$};
    \draw[red, thick] (2 * \d -0.3, 0) node[above] {$v_3$};
    \draw[red, thick] (2 * \d -0.3, 0) node {$\times$};
    \draw[red, thick] (\d *2, 0) node[above] {$v_4$};
    \draw[red, thick] (\d * 2, 0) node {$\times$};
\end{tikzpicture}
}
\subfigure[$I_2$: Choosing $c_2$ achieves a cost of $2$, while the other candidates have a distances of at least $3$ to $v_2$.]{
\begin{tikzpicture}
    \def\d{1.5} 
    \def\l{8} 

    \draw[thick] (-\l, 0) -- (\l, 0);
    
    \draw[blue, fill=blue] (-4 * \d + 0.3, 0) circle (0.1) node[below] {$c_1$};
    \draw[blue, fill=blue] (0, 0) circle (0.1) node[below] {$c_2$};
    \draw[blue, fill=blue] (2 * \d - 0.3, 0) circle (0.1) node[below] {$c_3$};
    \draw[blue, fill=blue] (2 * \d, 0) circle (0.1) node[below] {$c_4$};
    \draw[red, thick] (-2 * \d , 0) node[above] {$v_1$};
    \draw[red, thick] (-2 * \d, 0) node {$\times$};
    \draw[red, thick] (-\d+0.3, 0) node[above] {$v_2$};
    \draw[red, thick] (-\d+0.3, 0) node {$\times$};
    \draw[red, thick] (2 * \d -0.3, 0) node[above] {$v_3$};
    \draw[red, thick] (2 * \d -0.3, 0) node {$\times$};
    \draw[red, thick] (\d *2, 0) node[above] {$v_4$};
    \draw[red, thick] (\d * 2, 0) node {$\times$};
\end{tikzpicture}
}        \caption{Possible locations of the voters and candidates in the lower bound instance for $k=3$. }
        \label{fig:lb-max-line3}
    \end{figure}
\end{proof}

\bibliographystyle{abbrv}
\bibliography{resources}
\end{document}